\documentclass{article}

\usepackage{booktabs} 
\usepackage{soul}
\usepackage{multirow}
\usepackage{graphicx}
\usepackage{float}        
\usepackage{subcaption}   
\usepackage{amsmath,amssymb,amsfonts,amsthm}
\usepackage{mathtools}
\usepackage{bm}
\usepackage{microtype}
\usepackage{hyperref}
\usepackage{verbatim}
\usepackage{enumitem}
\usepackage{physics}
\usepackage{natbib}
\usepackage{booktabs}
\usepackage{xcolor}   
\usepackage{caption}
\usepackage{cuted}   
\usepackage{graphicx}
\usepackage{float}      
\usepackage{subcaption} 

\usepackage{dblfloatfix} 

\usepackage{microtype}
\usepackage{graphicx}
\usepackage{subcaption}
\usepackage{booktabs} 

\usepackage{hyperref}

\usepackage[accepted]{icml2026}

\usepackage{amsmath}
\usepackage{amssymb}
\usepackage{mathtools}
\usepackage{amsthm}

\usepackage[capitalize,noabbrev]{cleveref}

\theoremstyle{plain}
\newtheorem{theorem}{Theorem}[section]

\theoremstyle{definition}

\theoremstyle{remark}

\begin{document}

\twocolumn[
  \icmltitle{LieStoNet: Learning Lie Symmetries from Spatiotemporal Data for Stochastic Dynamical Systems}



  \icmlsetsymbol{equal}{*}

  \begin{icmlauthorlist}
    \icmlauthor{Shida Liu}{equal,harvard}
    \icmlauthor{Abhishek Gupta}{equal,comp}
    \icmlauthor{Sumit Sinha}{equal,harvard}
    \icmlauthor{L. Mahadevan}{Maha}
  \end{icmlauthorlist}

  \icmlaffiliation{harvard}{School of Engineering and Applied Sciences (SEAS), Harvard University, Cambridge, MA, USA.}
  \icmlaffiliation{comp}{Tracelink, Wilmington, MA, USA.}
  \icmlaffiliation{Maha}{SEAS, Physics and OEB, Harvard University, Cambridge, MA, USA}

  \icmlcorrespondingauthor{Sumit Sinha}{ssinha@g.harvard.edu}
  \icmlcorrespondingauthor{L. Mahadevan}{lmahadev@g.harvard.edu}

  \icmlkeywords{Machine Learning, ICML}

  \vskip 0.3in
]



\printAffiliationsAndNotice{\icmlEqualContribution}

\begin{abstract}
Symmetry is central to modern machine learning and physics: invariances and equivariances improve sample efficiency, robustness, and out-of-distribution generalization, while symmetry principles guide scientific modeling. Yet for stochastic dynamical systems the relevant continuous symmetries are rarely known, and symmetry discovery for SDEs has remained essentially unexplored. We introduce \textit{LieStoNet}, an end-to-end, \emph{template-free} framework for discovering Lie-point symmetries of SDEs directly from spatiotemporal trajectories, without prespecifying symmetry groups, templates, or canonical coordinates. Building on the seminal SDE Lie-symmetry theory of Gaeta and Quintero (1999), which formalizes Lie-point SDE symmetries and their relation to Fokker-Planck symmetries, LieStoNet learns neural surrogates for drift and diffusion from increments, then learns projectable generators by enforcing the SDE determining equations, separately regularizing for closure under Lie brackets, adherence to the Lie algebra axioms (bilinearity, antisymmetry, Jacobi), and a non-redundant independent basis. The surrogate also defines an associated Fokker-Planck equation, enabling optional discovery of its Lie-point symmetries in parallel. Across multiple canonical SDEs with known analytic symmetries, LieStoNet recovers generators consistent with the ground-truth symmetry algebra, providing interpretable symmetry discovery for noisy dynamics. Code is available at \href{https://github.com/sumit-sinha-seas/LieStoNet_Final.git}{this link}.
\end{abstract}

\section{Introduction}
Symmetry is a foundational idea in science: it captures the notion that certain transformations of a system leave its behavior unchanged. In physics, symmetry principles have repeatedly guided the construction of successful theories and helped explain why diverse phenomena can be described by the same underlying laws \citep{gross1995symmetry,gross1973asymptotically}. In machine learning, symmetry plays a similarly structural role. When a task is known to be insensitive to specific transformations (e.g., rotations, translations, permutations), incorporating the corresponding invariance or equivariance into models can substantially improve sample efficiency, robustness, and out-of-distribution generalization. This has motivated a broad line of work on symmetry-respecting architectures and learning pipelines, including group-equivariant convolutional networks \citep{cohen2016group,cohen2019general}, rotation/translation-equivariant attention and geometric deep learning methods \citep{fuchs2020se,thomas2018tensor,satorras2021n,kondor2018clebsch}, and equivariant generative models for scientific simulation \citep{kanwar2020equivariant,boyda2021sampling}. More broadly, symmetry is a key ingredient in physics-guided learning for dynamical systems \citep{wang2021physics}, and can complement approaches that learn conserved quantities or structured dynamics to enhance interpretability and generalization \citep{greydanus2019hamiltonian,alet2021noether}.

Despite these advances, the most impactful symmetries are often \emph{unknown} in the settings where we would most like to exploit them. Real-world data may come from partially observed systems, complex coordinate representations, or measurements that mix latent variables, so the relevant transformations are not obvious. This motivates \emph{symmetry discovery}: learning the symmetry structure directly from data, rather than prescribing it by hand. Recent work has advanced symmetry discovery for deterministic dynamical systems, ranging from methods that infer invariances from learned representations/predictors to approaches that explicitly learn continuous transformations or generators, with downstream use in scientific modeling such as governing-equation and variable discovery \citep{benton2020learning,liu2022machine,yang2024latent,forestano2023deep,ko2024learning,shaw2024symmetry,yang2024symmetry,mohapatra2025symmetry}. Collectively, these results position symmetry discovery as a practical route to interpretable, data-efficient learning in deterministic settings.

In this paper we move to the stochastic regime. Many systems of interest are inherently stochastic - due to unresolved degrees of freedom, environmental variability, or deliberate modeling of uncertainty - and stochastic differential equations (SDEs) provide a standard, flexible language for such dynamics. SDEs arise broadly across machine learning and the sciences, from diffusion-based generative modeling and Langevin-type sampling to canonical stochastic models in quantitative finance, molecular/biophysical dynamics, neuroscience, and climate/geophysical systems. Yet, while symmetry discovery for deterministic equations has rapidly evolved, \emph{symmetry discovery for SDEs remains essentially unexplored: to the best of our knowledge, there is no existing symmetry-discovery pipeline for SDEs at all}. We close this gap with \textbf{LieStoNet}, an end-to-end framework that learns continuous SDE symmetries directly from trajectory data without prespecifying the symmetry group.  In particular, ``symmetry'' may mean preserving sample-path behavior, preserving the evolution of probability distributions, or preserving derived deterministic descriptions of the system. A central derived description is the Fokker--Planck equation, which governs how the state’s probability density evolves over time. Because it is deterministic, the Fokker--Planck viewpoint is attractive for analysis and validation, but it does not automatically settle what it means to be a symmetry of the underlying stochastic dynamics. A principled approach must therefore ground the learning objective in the correct stochastic definition while still leveraging the Fokker--Planck perspective when useful.

\textbf{Lie-point symmetry theory for SDEs.} Our approach is grounded in the Lie-point (continuous) symmetry theory for SDEs of Gaeta and Rodr{\'\i}guez Quintero \citep{gaeta1999lie}. This theory formalizes when a smooth, continuously-parameterized transformation maps an SDE to an equivalent SDE while respecting its stochastic structure, and it clarifies how SDE symmetries relate to (yet need not coincide with) symmetries of the associated Fokker--Planck equation. While it provides the correct conceptual target, it is not itself a data-driven discovery method. Our goal is to translate these principles into {LieStoNet}, a modern ML pipeline that discovers such symmetries directly from trajectory data.

\paragraph{Contributions.}
We summarize our main contributions as follows:
\begin{itemize}
    \item \textbf{SDE symmetry discovery.}
    We introduce LieStoNet, to the best of our knowledge the first end-to-end ML framework for discovering continuous (Lie-point) symmetries of stochastic dynamical systems, addressing a gap in the symmetry-discovery literature that has largely focused on deterministic equations.

    \item \textbf{Template-free discovery pipeline.}
    LieStoNet does not prespecify a symmetry group, canonical coordinates, symbolic library, or generator template. Instead, it learns symmetry structure directly from spatiotemporal data while enforcing the stochastic symmetry conditions of \citet{gaeta1999lie}. The method is not assumption-free: it operates within the class of projectable Lie-point SDE symmetries, while broader stochastic symmetry classes, including random Lie-point symmetries, have also been studied \citep{gaeta2017random}. The neural parameterization and optimization procedure also introduce inductive biases. We therefore use ``template-free'' to mean free of prespecified symmetry templates within this projectable SDE symmetry class.

    \item \textbf{Connecting SDE and Fokker--Planck viewpoints.}
    By learning a neural surrogate of the underlying SDE, our framework also induces a surrogate Fokker--Planck description for probability density evolution. We show symmetry discovery is possible in this associated deterministic view as well; nevertheless, since the Fokker--Planck equation does not capture all subtleties of stochastic symmetry, we emphasize discovery and validation at the SDE level.
\end{itemize}
\paragraph{Related works.}
Our work is most closely connected to recent machine-learning approaches that aim to discover symmetries from observations, particularly in deterministic settings. This includes methods that infer symmetries through invariance patterns in learned representations or predictors \citep{benton2020learning,liu2022machine,yang2024latent}, adversarial and generative formulations \citep{yang2023generative}, explicit generator-style learning of continuous symmetries \citep{forestano2023deep,ko2024learning,shaw2024symmetry}, and data-driven Lie-point symmetry detection for deterministic continuous dynamical systems \citep{gabel2024datadriven}. Related deterministic work has also used known Lie-point symmetries for data augmentation and equivariance in neural PDE solvers \citep{brandstetter2022lie}. Other recent approaches go beyond restricted transformation classes \citep{shaw2024symmetry,hu2025explicit,shaw2025lie}. There is also growing interest in discovering discrete symmetry groups \citep{calvo2024finding,calvo2025learning} and in identifying local or approximate symmetries in high-dimensional systems \citep{bhat2025atlasd}. Finally, symmetry has been used as a structural prior for downstream scientific modeling tasks, including governing-equation and variable discovery as well as scientific representation learning \citep{yang2024symmetry,mohapatra2025symmetry,wang2020incorporating}.

\begin{table*}[t]
\centering
\small
\setlength{\tabcolsep}{4.2pt}
\renewcommand{\arraystretch}{1.05}
\captionsetup{skip=6pt}
\begin{tabular}{lcccc}
\hline
\textbf{Paper} &
\textbf{Generator type} &
\textbf{Lie algebra?} &
\textbf{IIC?} &
\textbf{Dynamics type} \\
\hline
Ko et al. \citep{ko2024learning} &
VF $\blacksquare$ \; M $\square$ &
$\square$ &
$\square$ &
STO $\square$ \; DET $\blacksquare$ \\
Forestano et al. \citep{forestano2023deep} &
VF $\square$ \; M $\blacksquare$ &
$\blacksquare$ &
$\square$ &
STO $\square$ \; DET $\blacksquare$ \\
LieGAN \citep{yang2023generative} &
VF $\square$ \; M $\blacksquare$ &
$\blacksquare$ &
$\square$ &
STO $\square$ \; DET $\blacksquare$ \\
Augerino \citep{benton2020learning} &
VF $\square$ \; M $\blacksquare$ &
$\square$ &
$\square$ &
STO $\square$ \; DET $\blacksquare$ \\
LaLiGAN \citep{yang2024latent} &
VF $\square$ \; M $\blacksquare$ &
$\blacksquare$ &
$\square$ &
STO $\square$ \; DET $\blacksquare$ \\
Shaw et al. \citep{shaw2024symmetry} &
VF $\blacksquare$ \; M $\square$ &
$\square$ &
$\square$ &
STO $\square$ \; DET $\blacksquare$ \\
LieNLSD \citep{hu2025explicit} &
VF $\blacksquare$ \; M $\square$ &
$\square$ &
$\blacksquare$ &
STO $\square$ \; DET $\blacksquare$ \\
\textbf{LieStoNet (Ours)} &
VF $\blacksquare$ \; M $\square$ &
$\blacksquare$ &
$\blacksquare$ &
STO $\blacksquare$ \; DET $\square$ \\
\hline
\end{tabular}
\vspace{6pt}
\caption{\textbf{Comparison with related symmetry discovery works.}
\(\blacksquare\) denotes yes and \(\square\) denotes no.
“VF” = vector-field generator; “M” = matrix/linear parameterization.
“STO” = stochastic; “DET” = deterministic.
Columns indicate whether methods explicitly use Lie-algebra structure and an infinitesimal invariance condition (IIC). VF parameterizations can represent nonlinear, state-dependent symmetry transformations, whereas M parameterizations are typically limited to linear/affine actions.
Lie-algebra regularization enforces coherent composition/closure among generators, stabilizes training, and promotes recovery of a complete basis spanning the symmetry space rather than a bag of unrelated invariances.
IIC constraints provide a principled local criterion that can be enforced densely and differentiably, yielding more reliable and data-efficient symmetry learning than finite, sample-based checks alone.
}
\end{table*}

\section{Mathematical Preliminaries}
\label{sec:prelims}
In this section, we briefly review the mathematical preliminaries; see Appendix \ref{app:math_prelims} for details.

\subsection{Stochastic Differential Equations and Lie Algebraic Symmetries}
\label{sec:prelims:sde_lie}

We use the It\^o convention to model stochastic dynamics. An It\^o stochastic differential equation describes the evolution of a state $\mathbf{x}_t$ through two components: a drift term $f$, which captures the deterministic tendency of the dynamics, and a diffusion term $\sigma$, which captures random fluctuations driven by a Wiener process. In this work, symmetries of such systems are understood as transformations of time and state variables that preserve the stochastic model. Because SDE sample paths are random, this preservation is interpreted distributionally: a symmetry maps the process to another process with equivalent induced stochastic dynamics in the transformed coordinates. To describe continuous families of such transformations, we work infinitesimally. A smooth one-parameter transformation has an infinitesimal generator, written schematically as a vector field $X=(\tau,\xi)$, whose components describe the first-order changes in time and state. The full finite transformation can be recovered by integrating this vector field, and It\^o's formula determines how the drift and diffusion coefficients transform under the resulting change of variables.

The infinitesimal symmetry generators naturally form a Lie algebra. A Lie algebra is a vector space equipped with a bilinear bracket operation that records how infinitesimal transformations fail to commute; the bracket is antisymmetric and satisfies the Jacobi identity. For vector-field generators, this bracket is the usual commutator $[X,Y]=XY-YX$, which measures the difference between applying two infinitesimal transformations in opposite orders. The set of SDE symmetry generators is closed under linear combinations and under this bracket, so learning several generators amounts to learning a symmetry subspace together with its composition structure. For example, the span of the translation generator $\partial_x$ and the scaling generator $x\partial_x$ is closed because their bracket is again a generator in the same span, namely $[x\partial_x,\partial_x]=-\partial_x$. In what follows, our symmetry generators are represented as vector fields, and the Lie bracket provides the natural notion of closure under composition; the resulting Lie algebra is the main object we aim to discover and evaluate.

\subsection{Determining Equations for Projectable Generators}
For clarity we present the one-dimensional setting (higher dimensions similar), however the theory and our implementations extend to high-dimensional examples as well. A \emph{projectable generator} is a Lie-point symmetry generator whose time component depends only on time, not on the state. We therefore restrict symmetry generators to the form
\begin{equation}
  X \;=\; \tau(t)\,\partial_t \;+\; \xi(t,x)\,\partial_x.
  \label{eq:generator}
\end{equation}
This excludes state-dependent time reparameterizations while retaining the continuous symmetries considered in this work.

In deterministic ODE/PDE symmetry theory, a standard route to validating $X$ is the
\emph{infinitesimal invariance condition} (IIC): invariance under the generator's flow implies local differential constraints on $(\tau,\xi)$ \citep{olver1993applications}. In the SDE
setting, the analogous IIC must account for both drift and diffusion and the stochastic
structure, yielding the \emph{SDE determining equations} below.
\begin{equation}
    \mathrm dX_t = f(t,X_t)\,\mathrm dt + \sigma(t,X_t)\,\mathrm dW_t .
    \label{eq:sde}
\end{equation}
\begin{theorem}[SDE determining equations for projectable Lie-point symmetries \citep{gaeta1999lie}]
\label{thm:sde-determining}
Consider the It\^o SDE \eqref{eq:sde} and a projectable generator $X$ as in \eqref{eq:generator}.
Then $X$ generates a (Lie-point) symmetry of \eqref{eq:sde} if and only if $(\tau,\xi)$ satisfy,
for all $(t,x)$,
\[
  \xi_t
    + f\xi_x
    - \xi f_x
    - \partial_t(f\tau)
    + \tfrac12 \sigma^2\xi_{xx} = 0,
\]
\[\sigma\xi_x
   - \xi\sigma_x
   - \tau\sigma_t
   - \tfrac12 \sigma\tau_t = 0.
\]
Here subscripts denote partial derivatives (e.g., $\xi_t=\partial_t\xi$, $\xi_{xx}=\partial_{xx}\xi$),
and $\partial_t(f\tau)=f_t\tau+f\tau_t$.
\end{theorem}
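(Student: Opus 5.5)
We would prove the theorem by computing, to first order in the group parameter $\varepsilon$, how the It\^o SDE \eqref{eq:sde} transforms under the flow of $X$ in \eqref{eq:generator}, and then requiring that the transformed drift and diffusion coincide with the original ones. Because $\tau$ depends only on $t$, the flow is a time reparameterization $\tilde t = t + \varepsilon\tau(t) + O(\varepsilon^2)$ combined with a state map $\tilde x = x + \varepsilon\xi(t,x) + O(\varepsilon^2)$. Projectability is what makes the transformed process again an It\^o SDE driven by a (rescaled) Wiener process. We interpret symmetry in the weak sense, as in \citet{gaeta1999lie}: the transformed process satisfies the same SDE with a new Brownian motion $\tilde W$.

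\textbf{Step 1 (state change).} Apply It\^o's formula to $\tilde X_t = X_t + \varepsilon\xi(t,X_t)$:
\[
d\tilde X = \bigl[f + \varepsilon(\xi_t + f\xi_x + \tfrac12\sigma^2\xi_{xx})\bigr]dt + \sigma(1+\varepsilon\xi_x)\,dW + O(\varepsilon^2).
\]

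\textbf{Step 2 (time change).} Since $d\tilde t = (1+\varepsilon\tau_t)\,dt$, the drift per unit new time acquires a factor $(1-\varepsilon\tau_t)$. For the noise, $\tilde W_{\tilde t} := \int (1+\varepsilon\tau_t)^{1/2}\,dW$ is again a Brownian motion in $\tilde t$ by L\'evy's characterization, so $dW = (1-\tfrac12\varepsilon\tau_t)\,d\tilde W$. This square-root scaling is the source of the factor $\tfrac12$ in the second determining equation.

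\textbf{Step 3 (compare coefficients).} Require the new drift to equal $f(\tilde t,\tilde x)$ and the new diffusion to equal $\sigma(\tilde t,\tilde x)$, and expand the right-hand sides as $f + \varepsilon(\tau f_t + \xi f_x)$ and $\sigma + \varepsilon(\tau\sigma_t + \xi\sigma_x)$. At $O(\varepsilon)$ the drift condition reads
\[
\xi_t + f\xi_x + \tfrac12\sigma^2\xi_{xx} - f\tau_t - \tau f_t - \xi f_x = 0,
\]
which is the first equation, since $\partial_t(f\tau) = f_t\tau + f\tau_t$. The diffusion condition reads
\[
\sigma\xi_x - \tfrac12\sigma\tau_t - \tau\sigma_t - \xi\sigma_x = 0,
\]
which is the second.

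\textbf{Step 4 (converse).} If the determining equations hold for all $(t,x)$, then $X$ is an infinitesimal symmetry at every point. Integrating the flow and using the group property then gives the finite one-parameter symmetry group, which proves the "if" direction.

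\textbf{Main obstacle.} The hardest step is the rigorous treatment of the random time change in Step 2: justifying the $\sqrt{1+\varepsilon\tau_t}$ rescaling of the Wiener process, and making precise that "equivalent SDE" means equality in law up to this change of $W$. Once that interpretation is fixed, the rest is the routine It\^o bookkeeping above.
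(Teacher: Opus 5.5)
\paragraph{Assessment.} The paper does not prove this theorem; it cites it from Gaeta and Rodr\'iguez Quintero. Your Steps 1--3 are essentially the standard derivation behind their equations: It\^o's formula for $x\mapsto x+\varepsilon\xi$, the $\sqrt{1+\varepsilon\tau_t}$ rescaling of $W$ under the change of time, and first-order matching against $f$ and $\sigma$ at the shifted point. The algebra is correct and yields exactly the two stated equations. Three refinements would make the argument airtight.

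\paragraph{Refinements.} First, the time change is not random. Because $\tau=\tau(t)$, the map $t\mapsto\tilde t$ is deterministic, so $\tilde W$ is a Wiener integral of a deterministic integrand (Gaussian, independent increments, variance $\tilde t-\tilde t_0$). Step 2 is therefore elementary, not the main obstacle; this is exactly what projectability buys.

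Second, the step that actually needs an argument is Step 4, whose one-line justification omits the mechanism. The key point is that your Step 1--2 formulas involve only the \emph{values} of the coefficients at a point. By the group property, the transformed coefficients at the pushed point, $(\mu(\varepsilon),\varsigma(\varepsilon))$, then satisfy along each flow line the closed ODE system
\[
\varsigma'=\varsigma\bigl(\xi_x-\tfrac12\tau_t\bigr),\qquad
\mu'=\xi_t+\mu\,\xi_x+\tfrac12\varsigma^2\xi_{xx}-\mu\,\tau_t .
\]
This is precisely the prolonged system integrated in $L_7$. The determining equations say that $\varepsilon\mapsto(f,\sigma)(\Phi_\varepsilon(t,x))$ solves the same system, since its $\varepsilon$-derivative is $\tau f_t+\xi f_x$, respectively $\tau\sigma_t+\xi\sigma_x$. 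The initial data also agree, so ODE uniqueness gives invariance for every $\varepsilon$. If a symmetry is instead defined infinitesimally, as is customary, Step 3 is already an equivalence and Step 4 is unnecessary.

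Third, under your ``weak sense'' reading, where $\tilde W$ may be any Brownian motion, the law pins down only $\sigma^2$. The ``only if'' direction then really yields $\sigma R_2=0$, with $R_2$ the second residual. In one dimension this is harmless:
\begin{itemize}
\item on $\{\sigma\neq0\}$ it gives $R_2=0$ directly;
\item on the interior of $\{\sigma=0\}$, $\sigma,\sigma_t,\sigma_x$ all vanish, so $R_2=0$ trivially;
\item on the boundary of $\{\sigma=0\}$, $R_2=0$ by continuity.
\end{itemize}
In $n$ dimensions, however, matching only $\sigma\sigma^{\mathsf T}$ would also admit rotations of the noise, e.g.\ $x_i\partial_j-x_j\partial_i$ for $n$-dimensional Brownian motion. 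The Gaeta--Rodr\'iguez Quintero equations exclude these, consistent with the paper's count of $12$ generators for the ten-dimensional Brownian example. So the reading that matches the theorem is that $\tilde W$ is exactly your time-rescaled $W$, not an arbitrary new Brownian motion.
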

We compute using these determining equations, the full symmetry Lie-algebras for four different SDEs in appendices \ref{app:sde_example_1}, \ref{app:sde_example_2}, \ref{app:sde_example_3}, \ref{app:sde_example_4}.
\subsection{Fokker--Planck Symmetries}
\label{sec:prelims:fp}

\paragraph{Associated FP equation.}
The SDE \eqref{eq:sde} induces a deterministic evolution equation for the one-point density $u(t,x)$ of $x_t$, namely the (forward) Fokker--Planck equation
\begin{equation}
  u_t = -\partial_x\!\big(f(t,x)\,u\big) + \tfrac12\partial_{xx}\big(\sigma^2(t,x)u\big).
  \label{eq:fp}
\end{equation}
Because \eqref{eq:fp} is deterministic and linear in $u$, it admits a well-developed Lie-point symmetry theory.

\paragraph{FP symmetry generators.}
A general Lie-point generator acting on $(t,x,u)$ has the form
\begin{equation}
  Y = \tau(t,x,u)\partial_t + \xi(t,x,u)\partial_x + \phi(t,x,u)\partial_u.
  \label{eq:fp-gen-general}
\end{equation}
One can prove that the linearity of FP implies that, the \(u\)-component of the generator must be affine in \(u\): $\phi(t,x,u)=\alpha(t,x)+\beta(t,x)\,u$ (the $\alpha$-part corresponds to linear superposition) (see \citep{gaeta1999lie}). The FP analogue of the IIC yields a system of determining equations for $(\tau,\xi,\alpha,\beta)$.

\begin{theorem}[FP determining equations for projectable Lie-point symmetries \citep{gaeta1999lie}]
\label{thm:fp-determining}
Consider the (one-dimensional) Fokker--Planck equation associated with the It\^o SDE \eqref{eq:sde},
written in the linear form
\begin{equation}
  u_t + A(t,x)u_{xx} + B(t,x)u_x + C(t,x)u = 0,
  \label{eq:fp-ABC}
\end{equation}
where, for \eqref{eq:fp},
\[
  A(t,x) = -\tfrac12 \sigma^2(t,x),\ 
  B(t,x) = f(t,x) - \partial_x\big(\sigma^2(t,x)\big),  
\]
\[
    C(t,x) = \partial_x f(t,x) - \tfrac12 \partial_{xx}\big(\sigma^2(t,x)\big).
\]
Let $Y$ be a \emph{projectable} Lie-point generator acting on $(t,x,u)$ with $u$-component affine in $u$:
\begin{equation}
  Y =\tau(t)\partial_t + \xi(t,x)\partial_x +\big(\alpha(t,x) + \beta(t,x)u\big)\partial_u .
  \label{eq:fp-generator-affine}
\end{equation}
Then $Y$ is a Lie-point symmetry of \eqref{eq:fp-ABC} if and only if:
\begin{enumerate}[leftmargin=*]
\item $\alpha(t,x)$ is any (fixed) solution of \eqref{eq:fp-ABC} (the linear superposition symmetry); and
\item $(\tau,\xi,\beta)$ satisfy the determining equations
\[
    \partial_t\big(\tau A\big) + \xi\partial_x A - 2A\partial_x \xi = 0,
\]
\[
\partial_t\big(\tau B\big) - \partial_t\xi + B\,\partial_x\xi - \xi\partial_x B
    + 2A\partial_x\beta - A\partial_{xx}\xi = 0,
\]
\[
 \partial_t\big(\tau C\big) + \partial_t\beta + A\partial_{xx}\beta + B\partial_x\beta + \xi\partial_x C = 0.  
\]
\end{enumerate}
\end{theorem}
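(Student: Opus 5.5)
The plan is to apply the classical Lie algorithm for scalar evolution equations: prolong $Y$ to second order and impose the infinitesimal invariance condition on the solution manifold of \eqref{eq:fp-ABC}. We then split the resulting identity into coefficients of the independent jet variables. I take as given the reduction quoted before the statement, namely that linearity forces $\phi=\alpha(t,x)+\beta(t,x)u$, together with the projectability assumption $\tau=\tau(t)$, $\xi=\xi(t,x)$. Write $\Delta := u_t + A u_{xx} + B u_x + C u$. The symmetry condition is $\mathrm{pr}^{(2)}Y(\Delta)=0$ whenever $\Delta=0$, and in expanded form it reads
\[
\phi^{t} + A\,\phi^{xx} + B\,\phi^{x} + C\,\phi + (\tau A_t+\xi A_x)u_{xx} + (\tau B_t+\xi B_x)u_x + (\tau C_t + \xi C_x)u = 0 .
\]

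\emph{Step 1: prolongation coefficients.} Using the total-derivative formula $\phi^{J}=D_J(\phi-\tau u_t-\xi u_x)+\tau u_{Jt}+\xi u_{Jx}$ with $\tau_x=0$, $\xi_u=0$ and $\phi_{uu}=0$, we get
\[
\phi^t=\alpha_t+\beta_t u+(\beta-\tau_t)u_t-\xi_t u_x,\qquad \phi^x=\alpha_x+\beta_x u+(\beta-\xi_x)u_x,
\]
\[
\phi^{xx}=\alpha_{xx}+\beta_{xx}u+(2\beta_x-\xi_{xx})u_x+(\beta-2\xi_x)u_{xx}.
\]
Projectability is what keeps these short. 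No products such as $u_xu_t$ or $u_xu_{xt}$ appear, and no terms quadratic in $u_x$ arise, so none of the usual preliminary constraints are needed.

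\emph{Step 2: restrict to solutions and split.} Substitute $u_t=-(Au_{xx}+Bu_x+Cu)$ into $\phi^t$. The $\beta$-multiples of $\Delta$ cancel identically, which is the reflection of linearity. What remains is a polynomial in the free jet variables $u_{xx}$, $u_x$, $u$ and $1$, and each coefficient must vanish.
\begin{itemize}[leftmargin=*]
\item The coefficient of $u_{xx}$ is $\tau_tA-2A\xi_x+\tau A_t+\xi A_x$. This is exactly the first equation, $\partial_t(\tau A)+\xi A_x-2A\xi_x=0$.
\item The coefficient of $u_x$ collects $\tau_tB$, $-\xi_t$, $A(2\beta_x-\xi_{xx})$, the $B\xi_x$ term from $\phi^x$, and $\tau B_t+\xi B_x$. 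This gives the second equation.
\item The coefficient of $u$ gives $\partial_t(\tau C)+\beta_t+A\beta_{xx}+B\beta_x+\xi C_x=0$.
\item The $u$-independent part is $\alpha_t+A\alpha_{xx}+B\alpha_x+C\alpha$. Its vanishing says that $\alpha$ solves \eqref{eq:fp-ABC}, which is item 1.
\end{itemize}
Every step is an equivalence, so the converse direction is automatic. The "if and only if" then follows, since a polynomial in independent jet coordinates vanishes if and only if all of its coefficients do.

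\emph{Main obstacle.} The conceptual content is light, so the real risk lies in bookkeeping and sign conventions. The $u_x$ equation is the most delicate. My direct computation produces the terms $-\xi_x B$ and $+\xi B_x$, whereas the stated equation has $+B\xi_x-\xi B_x$. I would reconcile this first by fixing the convention for the characteristic, i.e.\ $Q=\phi-\tau u_t-\xi u_x$ versus its negative, which corresponds to the sign of $\xi$ in \eqref{eq:fp-generator-affine}. If the discrepancy persists, I would check whether the $u_{xx}$ equation can be used to rewrite the combination. That equation gives $2A\xi_x=\partial_t(\tau A)+\xi A_x$, which lets $\xi_x$-terms be traded against derivatives of $A$. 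If neither resolves it, I would record the corrected sign.

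A secondary check is the passage from \eqref{eq:fp} to \eqref{eq:fp-ABC}. Expanding $-\partial_x(fu)+\tfrac12\partial_{xx}(\sigma^2u)$ and moving everything to one side should give exactly the stated $A$, $B$, $C$. I would verify this once by hand, because the sign of $u_t$ relative to the other terms also feeds into the signs in Step 2.
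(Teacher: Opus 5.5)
The paper does not prove this statement itself; it quotes it from Gaeta and Rodr\'{\i}guez Quintero. Your direct computation (second prolongation, elimination of $u_t$, splitting in $u_{xx},u_x,u,1$) is therefore the natural argument. Your prolongation coefficients are correct, and so are the $u_{xx}$-, $u$- and $u$-free coefficients. For the ``if and only if'' you should add one sentence. The infinitesimal criterion is necessary and sufficient because $\Delta$ has maximal rank ($\partial\Delta/\partial u_t=1$). The same fact makes $u,u_x,u_{xx}$ free coordinates on $\{\Delta=0\}$, which is what justifies the splitting.

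Your ``main obstacle'' should be settled rather than reconciled. Your $u_x$-coefficient $\partial_t(\tau B)-\xi_t-B\xi_x+\xi B_x+2A\beta_x-A\xi_{xx}$ is right, and the printed second equation is wrong. No convention choice produces the printed form:
\begin{itemize}
\item Replacing $Q$ by $-Q$ just replaces $Y$ by $-Y$, which flips every term of every equation.
\item Flipping only $\xi$ would also flip $\xi_t$, $A\xi_{xx}$, and the $\xi$-terms of the first equation, which is printed exactly as you derived it.
\item The first equation cannot absorb the difference $2(B\xi_x-\xi B_x)$.
\end{itemize}

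A concrete check: take $f=x$, $\sigma=1$, so $A=-\tfrac12$, $B=x$, $C=1$. Then $Y=e^t\partial_x$ is an exact FP symmetry, since $u(t,x-\epsilon e^t)$ solves $u_t=-(xu)_x+\tfrac12u_{xx}$ whenever $u$ does. It satisfies your system with $\beta=0=-\operatorname{div}\xi$. This is exactly what the embedding of Appendix~\ref{app:subalgebra} predicts for the SDE symmetry $e^t\partial_x$ of Appendix~\ref{app:sde_example_2}. The printed second equation instead forces $\beta=-2e^tx+m(t)$, and then the third equation reads $-4e^tx+m'(t)=0$, which is impossible.

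The printed line is a dropped-parenthesis typo for $-(\xi_t+B\xi_x-\xi B_x)$. Indeed, the residual $R_{i,2}$ that the paper actually implements in $L_8$ (Appendix~\ref{app:losses}) has exactly your sign. So your argument proves the theorem with that sign corrected, and you should state the correction outright rather than hedge.
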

This theorem supplies an infinite family of symmetries: $\alpha(t,x)\partial_u$ for any solution $\alpha$ of the FP-equation. These are considered as trivial symmetries and are often quotiented-out when defining the Lie algebra $\mathfrak{g}_{FP}$ of Fokker-Planck symmetries. We present more details in the Appendix \ref{app:fp}. The analytic symmetry algebra generators for the FP equation associated with the Brownian motion $dx_t=\sigma_0dW_t$ are presented in appendix \ref{app:fp_example_1}.
\paragraph{SDE symmetries as a nested subalgebra of FP symmetries.}
Additionally, SDE symmetries can be extended to FP symmetries which induces an injection of Lie algebras $\mathfrak{g}_{Ito}\hookrightarrow \mathfrak{g}_{FP}$ thus allowing us to think of SDE symmetries as a Lie-subalgebra of FP-symmetries. We provide more details along with a proof in Appendix \ref{app:subalgebra}.

\section{Symmetry Discovery Pipeline}
\label{sec:pipeline}

\begin{figure*}[htp]
    \centering
    \includegraphics[width=14cm]{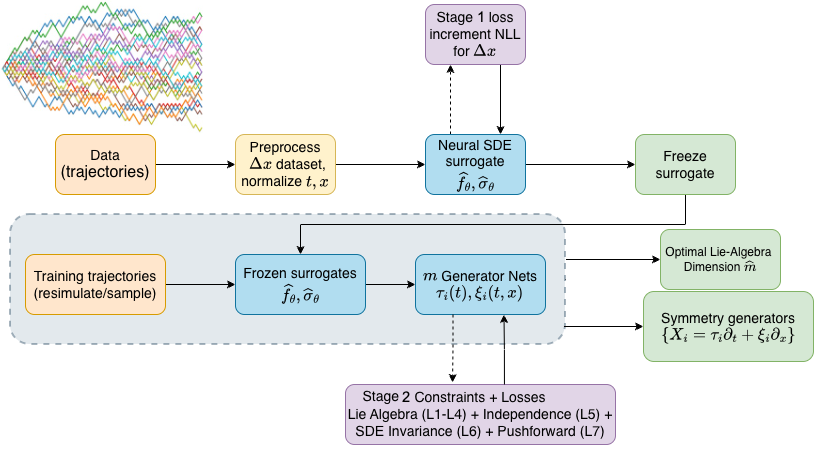}
    \caption{Pipeline for LieStoNet}
    \label{fig:pipeline}
\end{figure*}

On top of the above mathematical preliminaries, we present \textbf{LieStoNet}, an end-to-end pipeline for discovering a basis of infinitesimal
generators that spans the Lie algebra of Lie-point symmetries of an \emph{unknown} It\^o SDE.
We assume access only to discrete samples of trajectories (spatiotemporal data). In our
experiments, these samples are produced by simulating analytic SDEs, after which we discard
the governing equations and treat the data as observational data. \emph{Fokker--Planck (FP) symmetries are deterministic}, so we only showcase FP-route symmetry discovery in Example~1 as a brief companion result, and otherwise keep the focus on \emph{stochastic} (SDE-level) symmetry discovery to avoid expanding the scope.

\begin{table}[h]
\centering
\begin{tabular}{c p{0.78\linewidth}}
\toprule
Ex. & Analytic Equations \\
\midrule
1 &  \(dx_t=1\ dW_t\) \\
2 &  \(dx_t = x_t\ dt+1\ dW_t\)\\
3 & \(dx_t =y_t\ dt, \ dy_t=-1\cdot y_t\ dt+\sqrt{2}dW\) \\
4 &  \(dx_t = (1/x) dt + dW_1, dy_t =  dt + dW_2\)\\
\bottomrule
\end{tabular}
\vspace{5pt}
\caption{\textbf{Analytic equations for each example. The analytic ground-truth generators are derived in the appendices.}}
\label{tab:analytic_equations}
\end{table}

\subsection{Stage 1: Neural SDE Surrogate from Increments}
\label{sec:stage1}

\paragraph{Data and surrogate.} We observe trajectories $\{x^{(k)}(t_n)\}_{k=1}^K$ on a grid $t_0<t_1<\cdots<t_N$ with
$\Delta t=t_{n+1}-t_n$, and define increments
\[
  \Delta x_n^{(k)} := x^{(k)}(t_{n+1}) - x^{(k)}(t_n).
\]
We fit a differentiable surrogate SDE
\begin{equation}
  \dd x_t \;=\; f_\theta(t,x_t)\,\dd t \;+\; \sigma_\theta(t,x_t)\,\dd W_t,
  \label{eq:surr-sde}
\end{equation}
where $f_\theta$ and $\sigma_\theta>0$ are parameterized by neural networks (details in the
appendix \ref{sec:impl_stage1}). The role of this surrogate is to provide smooth coefficient functions so that all
derivatives required by the symmetry constraints in Theorem~\ref{thm:sde-determining} and Appendix~\ref{app:losses} can be computed via autodiff.

\paragraph{Increment likelihood (local MLE).}
Euler--Maruyama motivates the conditional approximation
\[
  \Delta x \mid (t_n,x_n) \approx
  \mathcal N\!\bigl(f_\theta(t_n,x_n)\Delta t,\;\sigma_\theta^2(t_n,x_n)\Delta t\bigr),
\]
leading to the per-sample negative log-likelihood (up to an additive constant)
\begin{equation}
  \ell_\theta
  = \frac{\bigl(\Delta x - f_\theta\,\Delta t\bigr)^2}
         {2\,\sigma_\theta^2\,\Delta t}
    + \frac12 \log \bigl(\sigma_\theta^2\,\Delta t\bigr).
  \label{eq:sde-nll}
\end{equation}
We minimize $L_{\mathrm{SDE}}(\theta)=\mathbb{E}[\ell_\theta]+\lambda_{\mathrm{reg}}\|\theta\|^2$
and then freeze $\theta$. In the remainder we write $f:=f_\theta$ and $\sigma:=\sigma_\theta$.

\subsection{Stage 2: Generator Parameterization}
\label{sec:stage2}

\paragraph{Projectable generators.}
We learn $m$ projectable Lie-point generators of the form in Eq.~\eqref{eq:gen-param}
\begin{equation}
  X_i \;=\; \tau_i(t)\,\partial_t \;+\; \xi_i(t,x)\,\partial_x,
  \qquad i=1,\dots,m,
  \label{eq:gen-param}
\end{equation}
where $\tau_i$ and $\xi_i$ are represented by multi-head MLPs producing
$(\tau_1,\ldots,\tau_m)$ and $(\xi_1,\ldots,\xi_m)$. This enforces $\partial_x\tau_i\equiv 0$ by
construction while allowing nonlinear, state-dependent symmetries through $\xi_i(t,x)$.

\paragraph{Training points.}
After fitting and freezing the Stage-1 surrogate $(f_\theta,\sigma_\theta)$, we generate a point cloud $\Omega$ for Stage-2 by simulating trajectories from the learned surrogate SDE and collecting the resulting space-time states $(t_n,x_n)$. All losses below are then computed as Monte Carlo averages over samples from $\Omega$ (and, when needed, over adjacent trajectory-time pairs).

\subsection{Learning Objectives}
\label{sec:objectives}

LieStoNet trains $\{X_i\}_{i=1}^m$ using two families of losses. We first define
\emph{SDE-validity losses} that enforce Lie-point symmetry constraints for the surrogate SDE,
then define \emph{Lie-algebra structure losses} that regularize the learned generators to form
a finite-dimensional Lie algebra under the Lie bracket. See Appendix~\ref{app:losses} for detailed mathematical formulations.

\begin{table*}[t]
\centering
\small
\setlength{\tabcolsep}{6pt}
\renewcommand{\arraystretch}{1.15}
\captionsetup{skip=6pt}

\begin{tabular}{p{0.26\textwidth} p{0.70\textwidth}}
\hline
\multicolumn{2}{l}{\textbf{Algorithm of LieStoNet}} \\
\hline
\textbf{Input:} &
Trajectories / spatiotemporal samples $D$; generator-count schedule $m = 1, 2, . . .$ \\
\textbf{Output:} &
Symmetry Lie-algebra dimension $m$, generators $\{X_k\}_{k=1}^K$, diagnostics \\[2pt]
\hline

\textbf{Surrogate Learning} &
Train a differentiable surrogates $f_\theta$, $\sigma_\theta$ on $D$. \\

\textbf{Generator Learning (fixed $m$)} &
Parameterize $X_i$ for $i = 1, . . . , m$. Optimize the composite objective:
$\sum_{i=1}^7w_iL_i$. \\

\textbf{Dimension Selection} &
Increase $m$ and repeat generator learning; choose $\hat{m}$ at the first minimum of the closure loss $L_{1}(m)$ within the dimension bound. \\

\textbf{Report} &
For the discovered dimension $m=\hat{m}$, compute span alignment with principal angles against analytic ground-truth generators (when available), evaluate/plot after-push residuals, and release $\{X_i\}$. \\
\hline
\end{tabular}
\caption{
\textbf{Overview of LieStoNet}. The method learns vector-field generators by enforcing infinitesimal invariance
and Lie-algebra closure constraints using learned drift and diffusion neural surrogates.
}

\end{table*}

\begin{figure*}[t]
  \centering

  \begin{subfigure}[t]{0.24\textwidth}
    \centering
    \includegraphics[width=\linewidth]{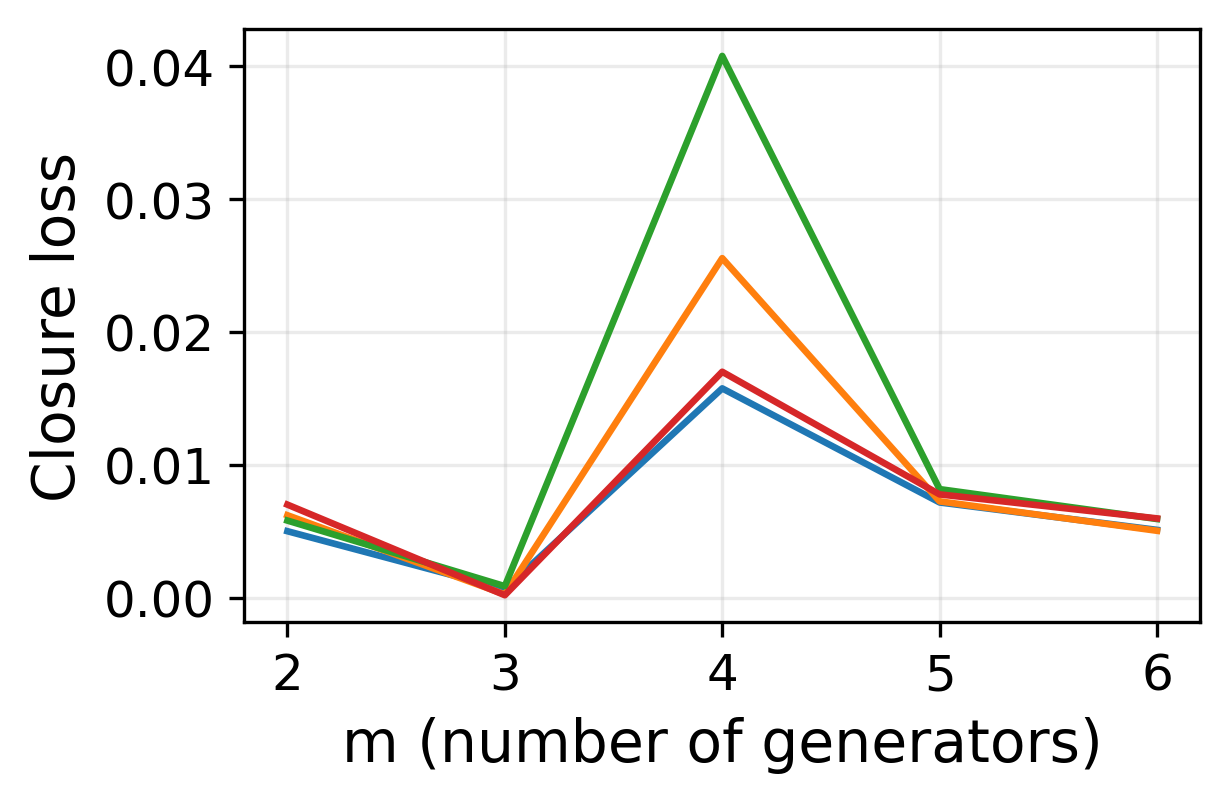}
    \caption{Ex1}
  \end{subfigure}\hfill
  \begin{subfigure}[t]{0.24\textwidth}
    \centering
    \includegraphics[width=\linewidth]{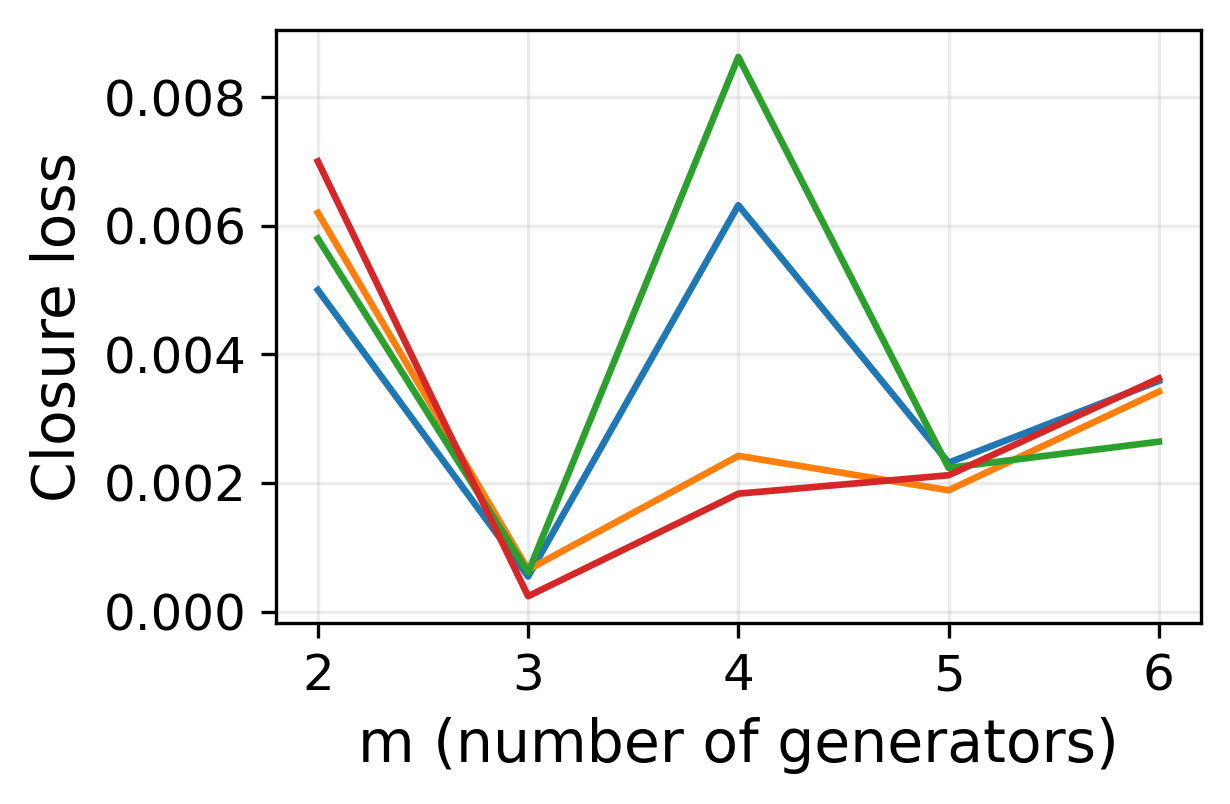}
    \caption{Ex2}
  \end{subfigure}\hfill
  \begin{subfigure}[t]{0.24\textwidth}
    \centering
    \includegraphics[width=\linewidth]{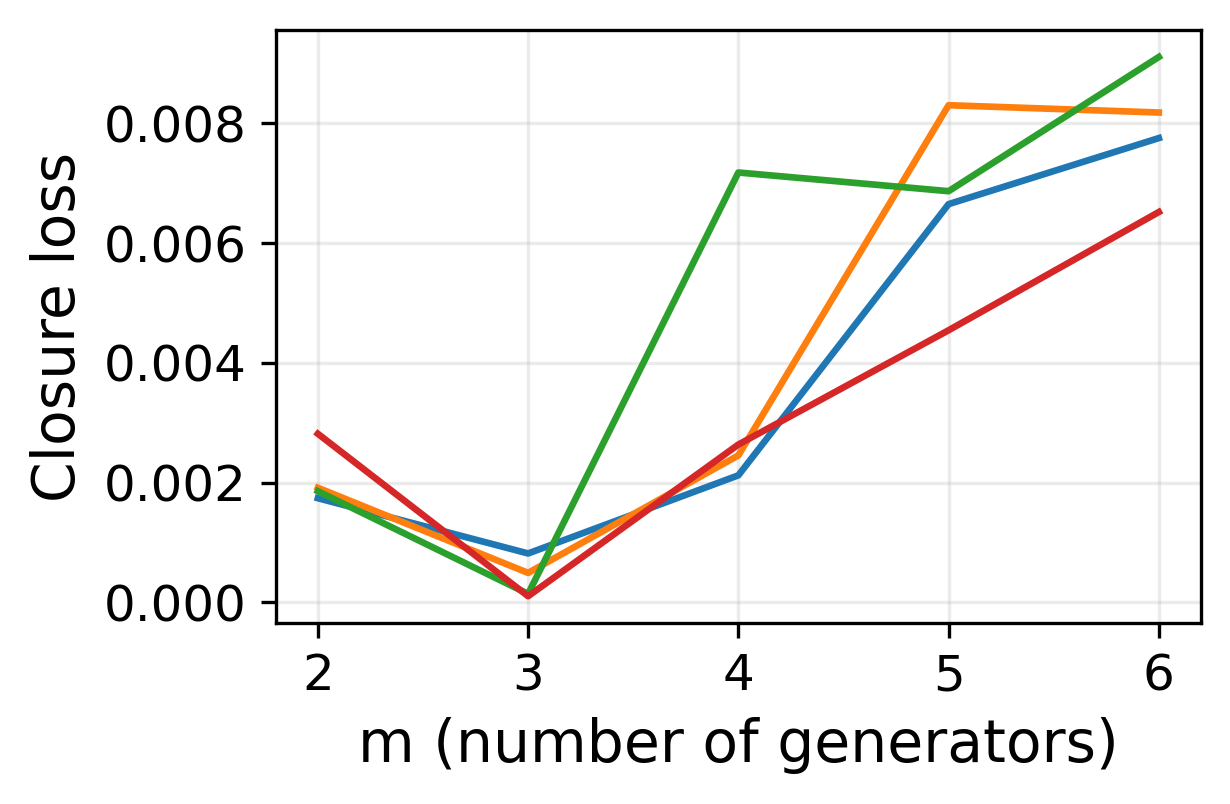}
    \caption{Ex3}
  \end{subfigure}\hfill
  \begin{subfigure}[t]{0.24\textwidth}
    \centering
    \includegraphics[width=\linewidth]{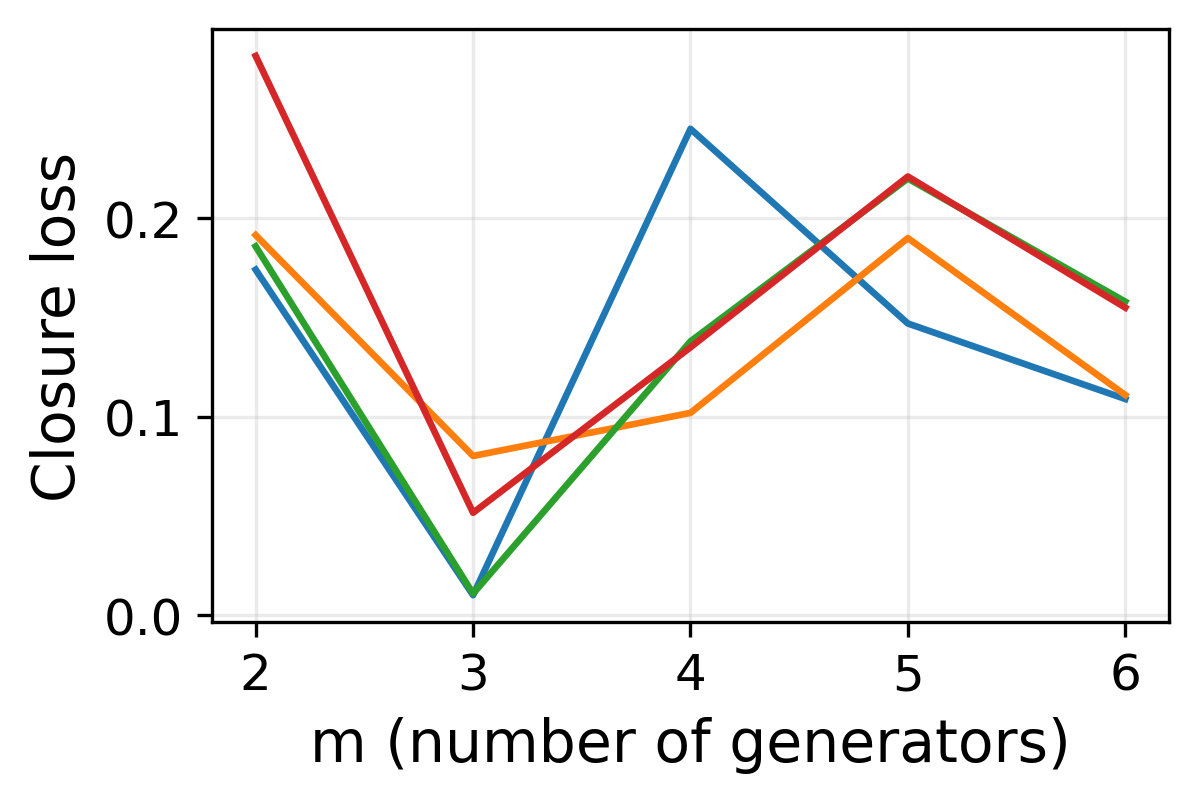}
    \caption{Ex4}
  \end{subfigure}

  \caption{\textbf{Post-training Lie bracket closure loss $L_1$ vs. number of generators $m$ across four experiments.} Colors denote independent runs with identical settings (different random seeds), showing the minimum is consistently attained at the ground-truth m rather than a one-off.}
  \label{fig:closure_loss_four_experiments}
\end{figure*}

\paragraph{Lie-algebra structural losses}
\begin{itemize}
  \item \textbf{$L_1$ (Lie-bracket closure + constancy, fixed).}
  Encourages $[X_i,X_j]$ to lie in $\mathrm{span}\{X_k\}$ (via projection) and the resulting structure coefficients $c_{ij}^k$ to be approximately constant over $(t,x)$.

  \item \textbf{$L_2$ (Jacobi identity).}
  Penalizes Jacobi violations so nested commutators compose consistently, i.e., the learned bracket behaves as a Lie bracket.

  \item \textbf{$L_3$ (Skew-symmetry).}
  Enforces antisymmetry of the bracket: swapping generator order flips the commutator sign.

  \item \textbf{$L_4$ (Bilinearity).}
  Enforces linearity of the bracket in each argument so it distributes correctly over linear combinations.

  \item \textbf{$L_5$ (Functional independence).}
  Penalizes near-linear dependence of the learned fields over the domain to avoid redundant generators and ensure a truly $m$-dimensional span.
\end{itemize}

\paragraph{SDE-validity losses}
\begin{itemize}
  \item \textbf{$L_6$ (SDE determining-equation loss).}
  Enforces the It\^o symmetry determining conditions (Theorem~\ref{thm:sde-determining}), making each learned generator compatible with the SDE drift and diffusion and thus a valid infinitesimal SDE symmetry.

  \item \textbf{$L_7$ (Prolonged pushforward residual; SDE-only, $\tau/\xi$).}
  Adds a finite-$\varepsilon$ check by flowing $(t,x)$ along the generator and comparing the pushed-forward surrogate coefficients to the surrogate coefficients evaluated at the pushed point, providing nonlocal validation beyond the infinitesimal conditions.
\end{itemize}

\paragraph{FP losses}
(When learning FP symmetries, we replace the SDE-validity losses with FP-specific losses while keeping the same algebraic losses.)
\begin{itemize}
  \item \textbf{$L_8$ (Fokker--Planck determining-equation loss).}
  Enforces Lie-point determining conditions for the induced Fokker--Planck PDE using generators
  $X_i=\tau_i(t)\partial_t+\xi_i(t,x)\partial_x+\beta_i(t,x)\,u\,\partial_u$,
  by minimizing the batch-averaged magnitude of the three FP determining-equation (\ref{thm:fp-determining}) residuals so each generator is a valid infinitesimal FP symmetry.

  \item \textbf{$L_9$ (Fokker--Planck after-flow / pushforward-on-$u$ loss).}
  Adds a finite-$\varepsilon$ check by flowing $(t,x,u)$ along the prolonged generator (including $\dot u=\beta_i(t,x)u$) and penalizing disagreement between the flowed value $u_\varepsilon$ and the FP surrogate evaluated at the flowed point $\hat u(t_\varepsilon,x_\varepsilon)$, i.e., approximately mapping FP solutions to FP solutions.
\end{itemize}
\paragraph{Total objective.}
\label{sec:total-objective}
Let $\bm{w}=(w_1,\dots,w_7)$ denote nonnegative weights. We train generator
parameters $\psi$ by minimizing
\begin{equation}
\label{eq:total}
  L_{\mathrm{gen}}(\psi)
  \;=\;
  \sum_{j=1}^7 w_j\,L_j
  \;+\;
  w_{\mathrm{wd}}\|\psi\|^2
\end{equation}
where \(w_{\mathrm{wd}}\) represents the weight on weight decay of the generator parameters. Note that for the FP symmetry case, $L_6, L_7$ are replaced by $L_8, L_9$. See Appendix \ref{sec:impl_sde_sym} for training details. 

\subsection{Determining the Lie algebra Dimension $m$}
In practice, we follow the above symmetry generator training while sweeping through different candidates for the number of learned generators (\(m=1,2,\dots\)) and record the post-training Lie bracket closure loss (\(L_1\)). The value of \(m\) which minimizes \(L_1\) is selected as the recovered dimension of the symmetry Lie algebra.

This sweep is finite in the SDE settings considered here because the admissible symmetry search space is a priori bounded: scalar It\^o SDEs admit Lie algebras of dimension at most \(3\), while \(n\)-dimensional systems with full-rank diffusion have maximal dimension \(n+2\) \citep{kozlov2010classification,kozlov2011maximal,kozlov2010fullrank}. In contrast, deterministic differential equations may have arbitrarily large or infinite-dimensional symmetry algebras, with much larger finite bounds when they exist, e.g., \(n^2+4n+3\) for certain second-order ODE systems \citep{lopez1988symmetries}. Thus the It\^o structure, diffusion determining equations, and noise irregularity make LieStoNet's dimension sweep a finite search over a theoretically justified range: one sweeps admissible \(m\) and selects the minimizer of the post-training closure loss \(L_1\). For genuinely high-dimensional systems, LieStoNet can also be applied after deriving a reduced effective SDE, e.g., via projection or Mori--Zwanzig-type coarse-graining.

\section{Experiments: Evaluation Protocol}
\label{sec:experiments}

\begin{figure}[hpt]
  \centering

  \begin{subfigure}[t]{0.48\textwidth}
    \centering
    \includegraphics[width=\linewidth]{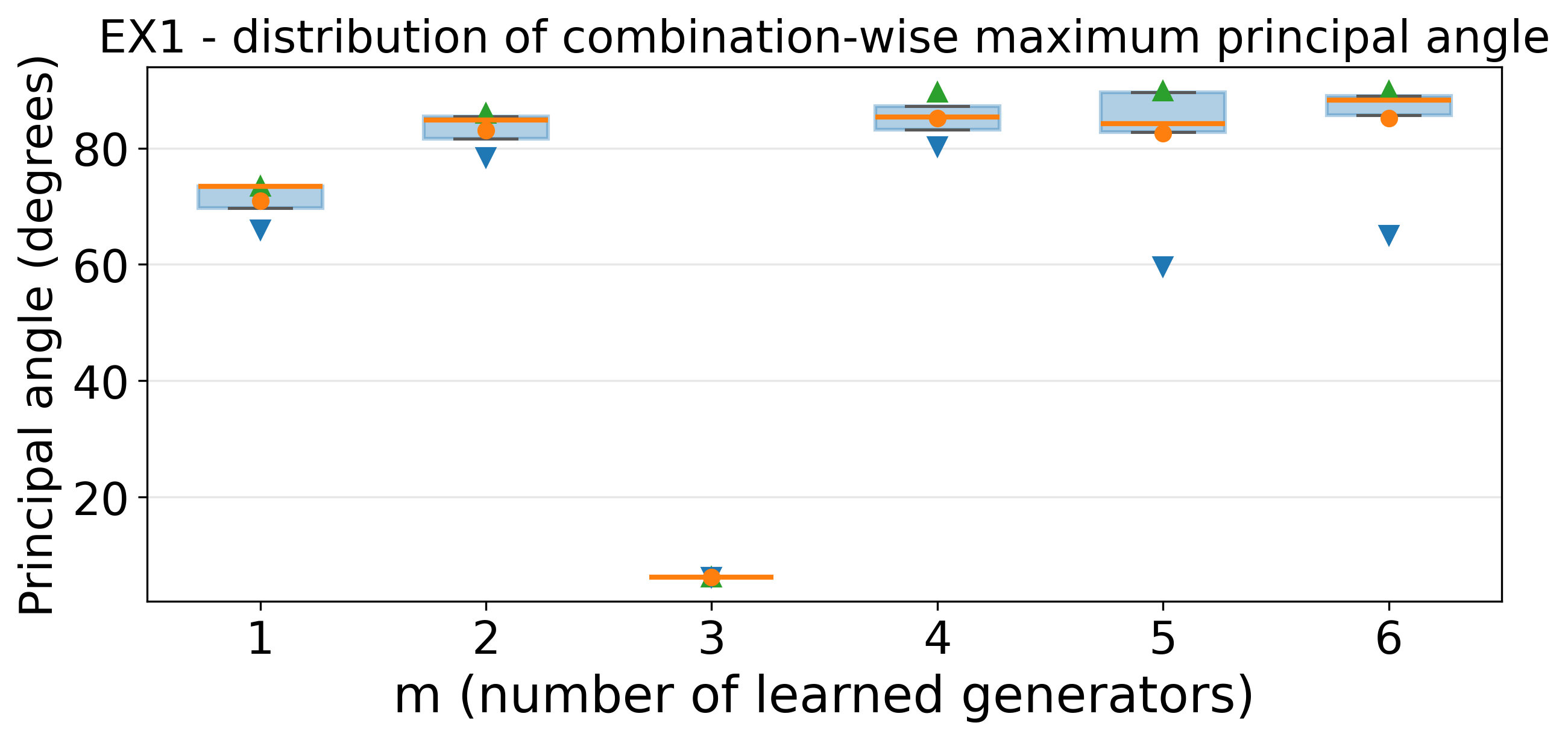}
    \caption{Ex1}
  \end{subfigure}\hfill
  \begin{subfigure}[t]{0.48\textwidth}
    \centering
    \includegraphics[width=\linewidth]{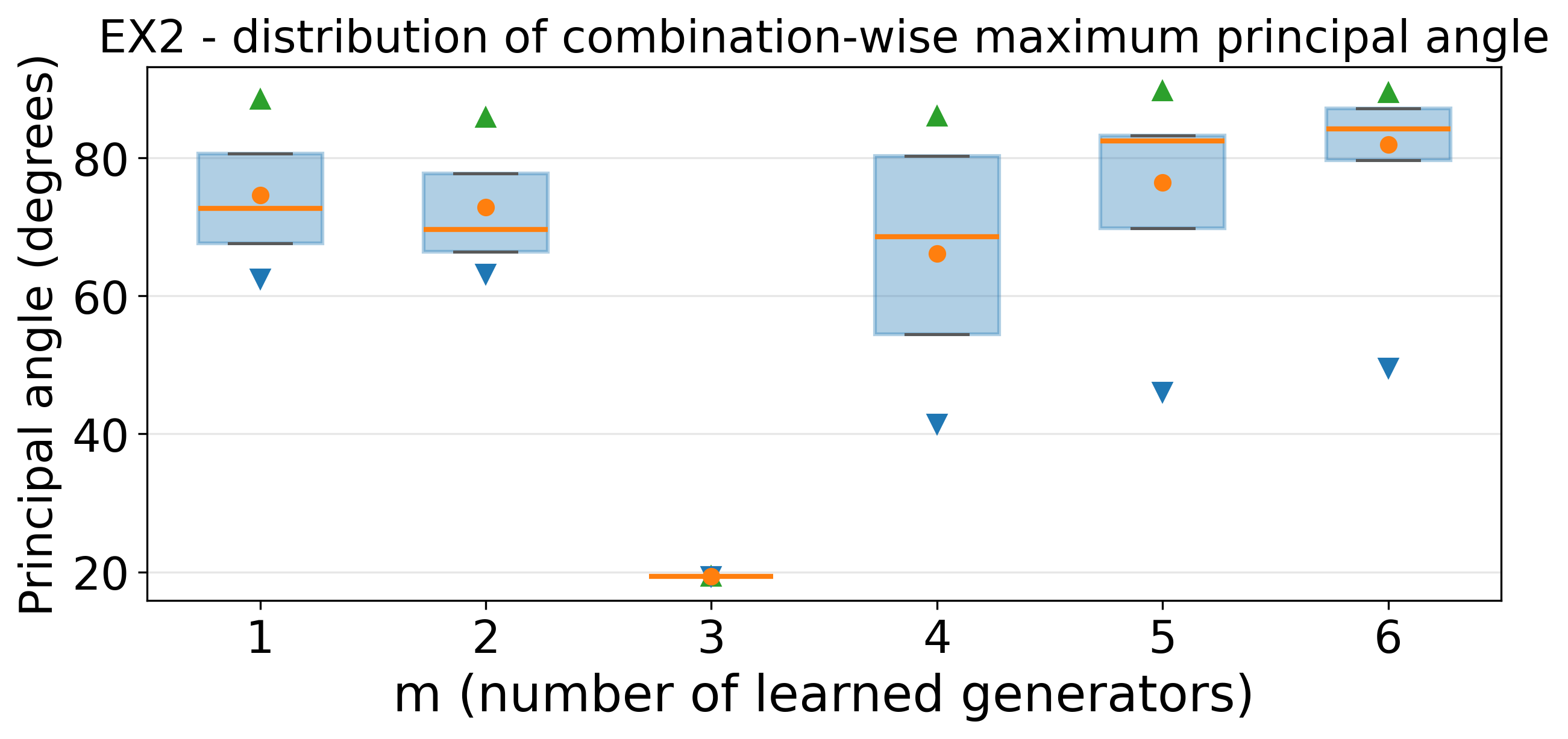}
    \caption{Ex2}
  \end{subfigure}

  \vspace{0.5em}

  \begin{subfigure}[t]{0.48\textwidth}
    \centering
    \includegraphics[width=\linewidth]{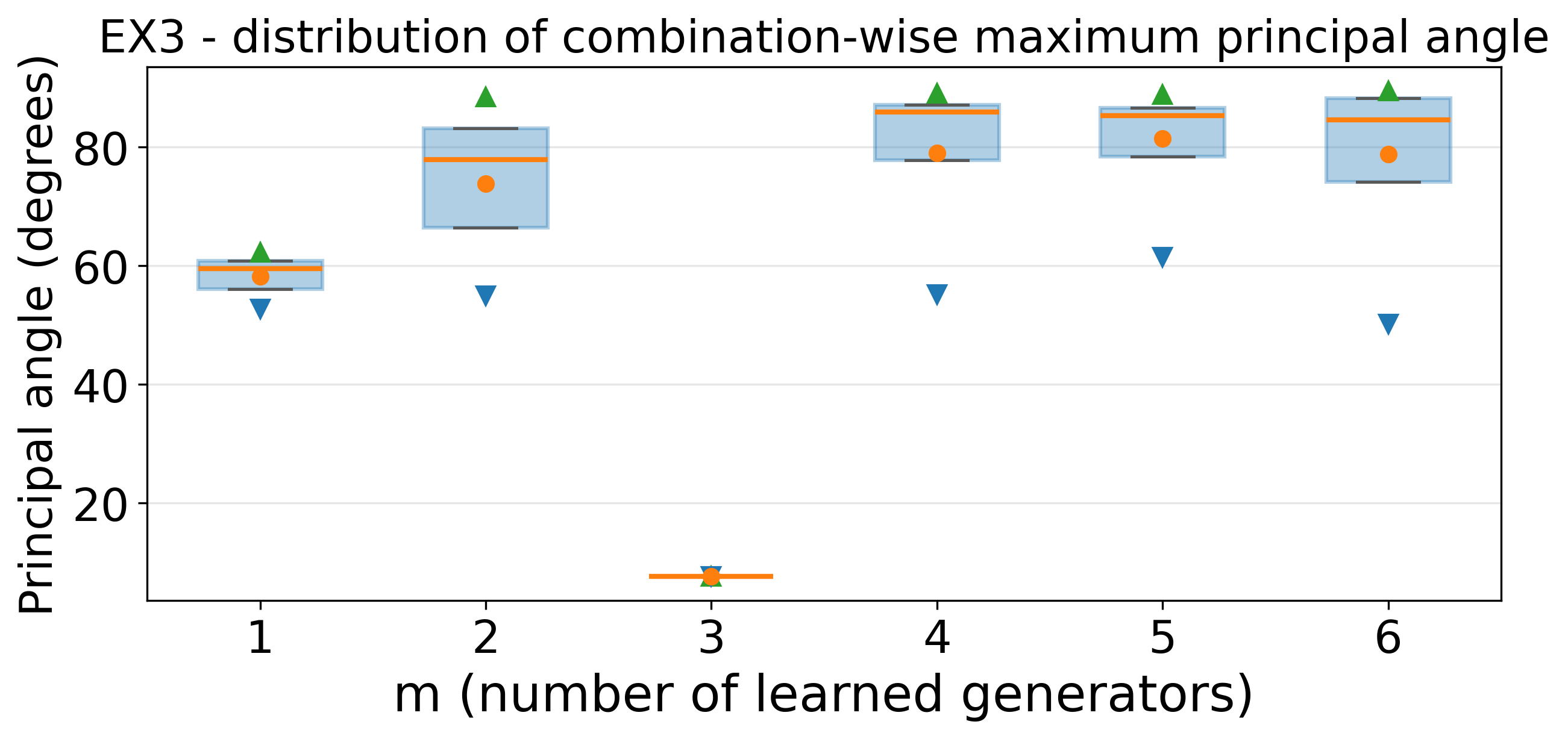}
    \caption{Ex3}
  \end{subfigure}\hfill
  \begin{subfigure}[t]{0.48\textwidth}
    \centering
    \includegraphics[width=\linewidth]{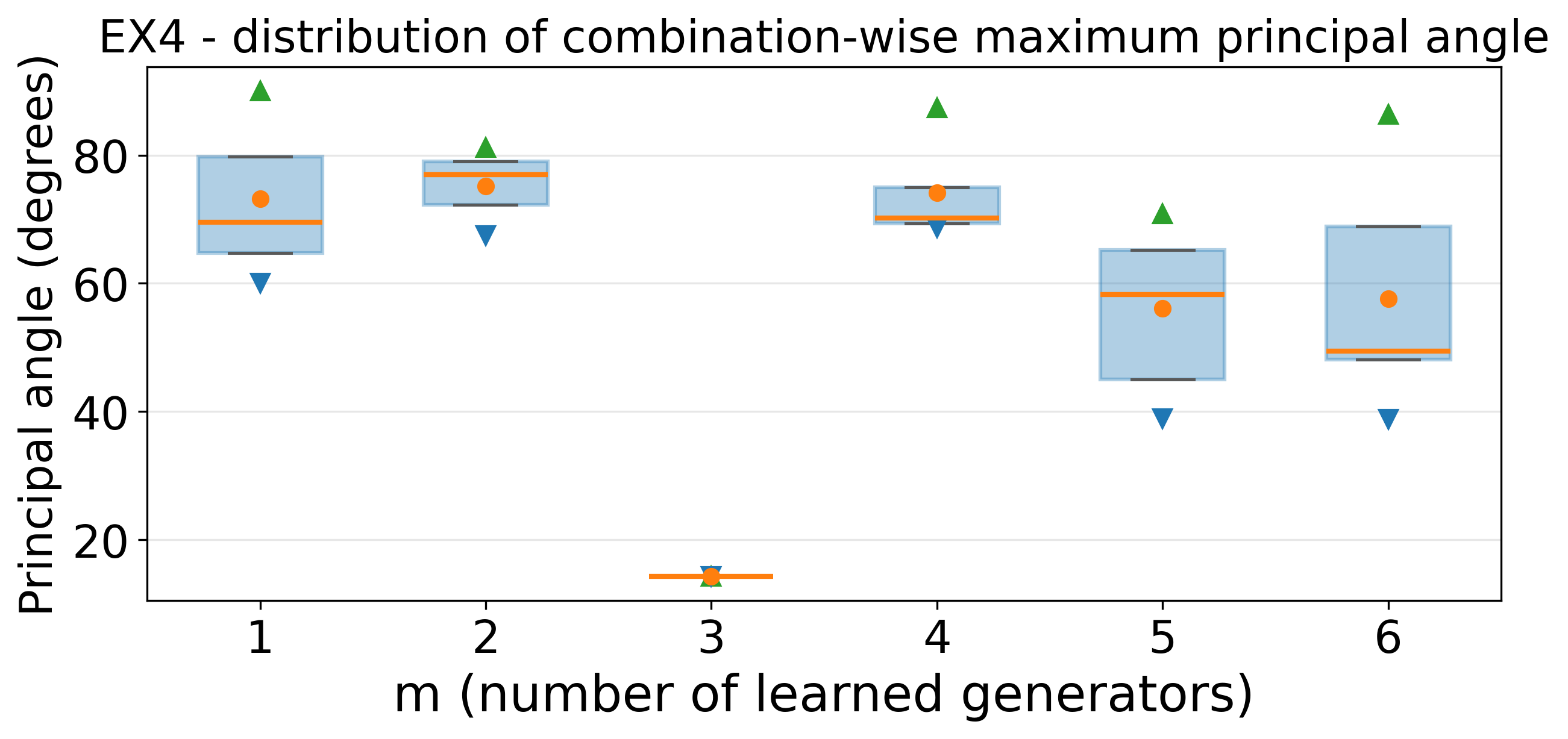}
    \caption{Ex4}
  \end{subfigure}

  \caption{\textbf{Distribution of combination-wise maximum principal angles.} Number of learned generators (\(m\)) vs. principal angles. The blue boxes denote the interquartile range; the orange horizontal lines inside the box denote the median; the green up triangles and the blue down triangles denote the maximum and minimum of maximum principal angles, respectively; the orange dots denote the mean.}
  \label{fig:PA_four_experiments}
\end{figure}

Element-wise comparison of learned symmetries is ill-posed because infinitely many bases can span the same Lie algebra; the basis-invariant object is therefore the \emph{span} of the generators.

\subsection{Span Alignment via Principal Angles (basis-invariant)}
Concretely, we evaluate each set of generators over a shared $(t,x)$ region and view them as vectors in a common feature space; this yields a learned subspace and a ground-truth subspace. We then quantify how well these subspaces align using \emph{principal angles}, which measure the smallest angular discrepancies between two subspaces: angles near $0^\circ$ indicate that the learned generators span (up to basis change) the same symmetry directions as the analytic ones, while larger angles indicate missing or spurious directions. This metric is (i) basis-invariant, (ii) global over the evaluation region, and (iii) useful for diagnosing dimension mismatch (extra generators typically introduce directions that increase the worst-case misalignment) (see Appendix~\ref{app:span_alignment}).

\subsection{Results}
As shown in Figure~\ref{fig:closure_loss_four_experiments}, the post-training closure objective \(L_1\) consistently attains its minimum at the analytic Lie-algebra dimension, so LieStoNet recovers the symmetry dimension without receiving it as input. We then compare learned and analytic generator spans using principal angles. For \(m\leq3\), we compare the learned \(m\)-dimensional span with each analytic \(m\)-generator subspace; for \(m>3\), we compare each learned 3-generator subspace with the full analytic 3-dimensional span. Figure~\ref{fig:PA_four_experiments} reports the resulting maximum-angle distributions. Since \(L_1\) selects the correct dimension, the principal angles at that dimension mainly reflect surrogate and optimization accuracy. The smallest maximum angles occur at the selected dimension, indicating recovery of the symmetry-algebra span rather than spurious low-residual generators. Tables~\ref{tab:principal_angles_dim_sym_gt} and~\ref{tab:principal_angles_dim_FP_gt} report the selected-dimension SDE and Fokker--Planck angles.

\begin{table}[h]
\centering
\begin{tabular}{cccc}
\toprule
Ex. & Ang 1 & Ang 2 & Ang 3 \\
\midrule
1 & $6.261^\circ$ & $2.487^\circ$  & $2.313^\circ$ \\
2 & $19.370^\circ$ & $11.051^\circ$  & $0.571^\circ$ \\
3 & $7.643^\circ$ & $1.329^\circ$ & $0.276^\circ$ \\
4 & $14.292^\circ$ & $6.241^\circ$ & $2.505^\circ$ \\
\bottomrule
\end{tabular}
\vspace{5pt}
\caption{\textbf{Principal angles of SDE symmetry for all examples at the ground-truth dimension.}}
\label{tab:principal_angles_dim_sym_gt}
\end{table}
\vspace{-18pt}
\begin{table}[h]
\centering
\begin{tabular}{cccccc}
\toprule
 Ang 1 & Ang 2 & Ang 3 & Ang 4 & Ang 5 & Ang 6 \\
\midrule
 $8.417^\circ$ & $7.424^\circ$ & $1.829^\circ$ & $1.091^\circ$ & $0.752^\circ$ & $0.303^\circ$ \\
\bottomrule
\end{tabular}
\vspace{5pt}
\caption{\textbf{Principal angles of Fokker-Planck equation symmetry for example 1 at the ground-truth dimension.}}
\label{tab:principal_angles_dim_FP_gt}
\end{table}

In Appendix~\ref{app:sde_fp_subalgebra}, we additionally verify that the learned SDE generators embed into the learned Fokker--Planck generator span when both are projected to the common \((\tau,\xi)\)-space.

\paragraph{Higher-dimensional symmetry algebra.}
We test recovery of larger algebras on the ten-dimensional Brownian SDE
\[
    \mathrm dX_t^i = \mathrm dW_t^i,\qquad i=1,\ldots,10,
\]
a high-dimensional analogue of Example~1 with a 12-dimensional projectable symmetry algebra. Its generators mirror Appendix~\ref{app:sde_example_1}, with translations \(\partial_i\) and scaling \(2t\partial_t+\sum_i x_i\partial_i\). LieStoNet recovers a 12-dimensional algebra with principal angles
$0.73^\circ,0.79^\circ,0.82^\circ,0.93^\circ,0.98^\circ,1.14^\circ,1.17^\circ,1.38^\circ$, $1.63^\circ$, $2.08^\circ,10.21^\circ,16.55^\circ$, supporting extension beyond the low-dimensional 3-generator benchmarks.
\vspace{-0.5em}
\paragraph{Surrogate sensitivity and computational overhead.}
Because LieStoNet computes symmetry losses through a learned SDE surrogate, we also evaluate sensitivity to surrogate perturbations; recovery degrades smoothly under controlled structured and random perturbations. We additionally report per-loss runtime measurements and polynomial scaling estimates for the algebraic and SDE-validity losses. These analyses are provided in Appendices~\ref{sec:surrogate_sensitivity} and~\ref{sec:runtime_overhead}.

\paragraph{Ablation of generator losses.}
We further test whether the algebraic and SDE-validity losses are all practically useful by ablating \(L_1,\ldots,L_7\) one at a time. Removing any single loss worsens the maximum principal angle between the learned and analytic symmetry-algebra spans; the largest degradations occur when removing the closure loss \(L_1\) or the SDE determining-equation loss \(L_6\). Removing the finite-step after-push loss \(L_7\) also substantially worsens recovery, suggesting that finite-flow consistency provides useful stabilization beyond local infinitesimal invariance. The full ablation results, weight-sensitivity sweep, and qualitative generator plots are reported in Appendices~\ref{app:ablations}, \ref{app:weight_sensitivity}, and~\ref{app:generator_plots}.

\section{Additional Experiment Results}
\label{sec:additional_results}
\subsection{After-Push Residual (finite-\(\epsilon\) symmetry check).}
\label{sec:after_push}
The defining property of an SDE symmetry is that the induced transformation maps the SDE
\(\mathrm dX_t=f(t,X_t)\mathrm dt+\sigma(t,X_t)\mathrm dW_t\) to the \emph{same} SDE form (equivalently, it preserves the drift--diffusion coefficients under the change of variables). To test this definition beyond the infinitesimal regime, we perform a finite-\(\epsilon\) ``after-push'' check for each learned generator \(X_i=\tau_i\partial_t+\xi_i\partial_x\) (or its 2D analogue). We first sample 500 trajectories from the analytic SDE, then push every space--time point \((t,\mathbf x)\) along the generator flow \(\Phi_\epsilon=\exp(\epsilon X_i)\) to obtain \((\tilde t,\tilde{\mathbf x})\). If \(X_i\) is a true symmetry direction, then the pushed trajectory should still be governed by the same coefficients, so the drift and diffusion inferred from the pushed data should match the analytic drift and diffusion evaluated at \((\tilde t,\tilde{\mathbf x})\). Operationally, we therefore fit a neural SDE surrogate on the pushed trajectories and measure the residual between its learned \((\hat f,\hat\sigma)\) and the analytic \((f,\sigma)\) on a fixed evaluation box, reported separately for drift (Dft.) and diffusion (Dfu.) in Table~\ref{tab:push_residuals}. The residuals remain small across all examples and generators for \(\epsilon\in\{1,3,5\}\), providing empirical evidence that the learned generators induce finite transformations that approximately map the SDE to itself, consistent with the definition of an SDE symmetry.

\begin{table}[h]
\centering

\begin{tabular}{c cc cc cc}
\toprule
\multirow{2}{*}{Ex.} & \multicolumn{2}{c}{Gen 1} & \multicolumn{2}{c}{Gen 2} & \multicolumn{2}{c}{Gen 3} \\
\cmidrule(lr){2-3}\cmidrule(lr){4-5}\cmidrule(lr){6-7}
 & Dft. & Dfu. & Dft. & Dfu. & Dft. & Dfu. \\
\midrule
1 & $9.6$ & $4.6$ & $8.0$ &  $5.6$ & $2.3$ & $5.1$ \\
2 & $1.9$ & $2.7$ & $1.1$ &  $6.0$ & $5.7$ & $4.4$ \\
3 & $7.6$ & $1.4$ & $1.3$ & $9.7$ & $2.7$ & $0.2$ \\
4 & $1.9$ & $6.2$ & $6.41$ & $1.0$ & $2.5$ & $9.2$  \\
\bottomrule
\end{tabular}

\vspace{8pt}

\begin{tabular}{c cc cc cc}
\toprule
\multirow{2}{*}{Ex.} & \multicolumn{2}{c}{Gen 1} & \multicolumn{2}{c}{Gen 2} & \multicolumn{2}{c}{Gen 3} \\
\cmidrule(lr){2-3}\cmidrule(lr){4-5}\cmidrule(lr){6-7}
 & Dft. & Dfu. & Dft. & Dfu. & Dft. & Dfu. \\
\midrule
1 & $8.2$ & $8.9$ & $9.7$ & $6.4$ & $8.8$ & $8.5$ \\
2 & $9.1$ & $0.6$ & $7.3$ & $8.8$ & $1.0$ & $8.4$ \\
3 & $7.5$ & $8.1$ & $9.6$ & $0.9$ & $6.3$ & $6.2$ \\
4 & $0.8$ & $8.0$ & $7.9$ & $8.7$ & $9.5$ & $8.4$ \\
\bottomrule
\end{tabular}

\vspace{8pt}

\begin{tabular}{c cc cc cc}
\toprule
\multirow{2}{*}{Ex.} & \multicolumn{2}{c}{Gen 1} & \multicolumn{2}{c}{Gen 2} & \multicolumn{2}{c}{Gen 3} \\
\cmidrule(lr){2-3}\cmidrule(lr){4-5}\cmidrule(lr){6-7}
 & Dft. & Dfu. & Dft. & Dfu. & Dft. & Dfu. \\
\midrule
1 & $0.2$ & $2.9$ & $1.4$ & $0.7$ & $2.1$ & $0.4$ \\
2 & $3.0$ & $0.1$ & $0.6$ & $2.5$ & $1.8$ & $2.8$ \\
3 & $1.1$ & $2.3$ & $2.7$ & $0.3$ & $0.9$ & $1.6$ \\
4 & $2.4$ & $0.8$ & $1.9$ & $3.0$ & $0.0$ & $2.6$ \\
\bottomrule
\end{tabular}

\vspace{5pt}
\caption{After-push residuals for \(\epsilon=1,3,5\) from top to bottom; units are \(10^{-4}\) for \(\epsilon=1\) and \(\epsilon=3\), and \(10^{-3}\) for \(\epsilon=5\).}

\label{tab:push_residuals}
\end{table}

\subsection{Real-World Stochastic Data: High-Frequency BTC/USDT}
\label{sec:btc_real_data}

To test whether LieStoNet transfers beyond synthetic SDEs with known analytic symmetries, we apply the same pipeline without architectural or algorithmic modifications to high-frequency BTC/USDT trade data from Binance. The dataset contains approximately \(1.8\) million ticks over 24 hours, aggregated to \(500\)ms resolution, filtered, and split into 575 overlapping sub-trajectories. Since no ground-truth symmetry algebra is available for this empirical system, we use the same closure-based dimension-selection criterion and finite-flow validation diagnostics used in the synthetic experiments.

Sweeping \(m=1,2,3\), the first nontrivial minimum of the post-training closure loss occurs at \(m^\star=2\), with a pronounced increase from \(m=2\) to \(m=3\). This suggests a stable two-dimensional learned symmetry algebra. We then validate the two learned generators using two complementary tests. First, the after-push residual measures whether trajectories pushed along the learned generator flow remain consistent with the learned SDE drift and diffusion. Second, the distributional invariance test evaluates whether the standardized Euler--Maruyama residual distribution is preserved after pushing trajectories. Both tests indicate that the learned transformations preserve the empirical stochastic dynamics substantially better than random-push controls, as shown in Tables~\ref{tab:btc_after_push} and~\ref{tab:btc_nll}.

\begin{table}[h]
\centering
\small
\begin{tabular}{c cc cc}
\toprule
\multirow{2}{*}{\(\epsilon\)} 
& \multicolumn{2}{c}{Generator 1} 
& \multicolumn{2}{c}{Generator 2} \\
\cmidrule(lr){2-3}\cmidrule(lr){4-5}
& Dft. & Dfu. & Dft. & Dfu. \\
\midrule
1 & \(1.4\times 10^{-3}\) & \(2.3\times 10^{-7}\) & \(1.0\times 10^{-2}\) & \(2.6\times 10^{-4}\) \\
3 & \(7.9\times 10^{-2}\) & \(9.9\times 10^{-2}\) & \(5.0\times 10^{-2}\) & \(1.7\times 10^{-1}\) \\
5 & \(1.3\times 10^{-1}\) & \(2.7\times 10^{-1}\) & \(5.0\times 10^{-2}\) & \(2.8\times 10^{-1}\) \\
\bottomrule
\end{tabular}
\vspace{3pt}
\caption{\textbf{After-push residuals on BTC/USDT data.} Drift and diffusion residuals are reported after pushing trajectories along each learned generator for step length \(\epsilon\). Smaller values indicate better finite-flow preservation of the learned stochastic dynamics.}
\label{tab:btc_after_push}
\end{table}
\vspace{-0.3em}
\begin{table}[h]
\centering
\small
\begin{tabular}{c ccc}
\toprule
\(\epsilon\) & Gen. 1 & Gen. 2  & Random control  \\
\midrule
0.05 & 0.008 & 0.027 & 9.7 \\
0.10 & 0.027 & 0.111 & 191.6 \\
0.20 & 0.091 & 0.827 & 232.0 \\
0.30 & 0.324 & 4.65 & 166.0 \\
0.50 & 6.41 & 17.6 & 112.0 \\
\bottomrule
\end{tabular}
\vspace{3pt}
\caption{\textbf{Distributional invariance on BTC/USDT data.} We report \(\Delta\mathrm{NLL}=\mathrm{NLL}_{\mathrm{pushed}}-\mathrm{NLL}_{\mathrm{orig}}\). The learned-generator pushes preserve the residual distribution far better than random-push controls over the tested step sizes.}
\label{tab:btc_nll}
\end{table}

\section{Conclusion}
We introduced \textbf{LieStoNet}, an end-to-end pipeline for discovering projectable Lie-point symmetries of SDEs from trajectory data. LieStoNet learns a neural SDE surrogate and then learns symmetry generators by enforcing SDE determining equations, Lie-algebraic structure, independence, and finite-flow validation losses. Across analytic SDE benchmarks, it recovers the correct symmetry dimension and strong span-level alignment with the ground-truth algebra. Ablations, after-push checks, BTC/USDT experiments, surrogate-sensitivity tests, runtime measurements, and a higher-dimensional SDE further support the proposed objectives.

\paragraph{Limitations and outlook.}
LieStoNet depends on the fidelity and coverage of the learned drift/diffusion surrogate, though perturbation experiments suggest smooth degradation under controlled surrogate errors. The method also restricts attention to projectable Lie-point generators. Scaling to complex high-dimensional empirical systems will require more efficient parameterizations, better surrogate calibration, and validation under partial observation and measurement noise. Future work includes extending beyond projectable symmetries and using discovered symmetries to build symmetry-constrained stochastic surrogates.

\section*{Impact Statement} This paper presents work whose goal is to advance the field of machine learning. There are many potential societal consequences of our work, none of which we feel must be specifically highlighted here.

\bibliographystyle{plainnat}  
\bibliography{reference}      

\appendix
\section{Mathematical Preliminaries}
\label{app:math_prelims}
\subsection{Stochastic Differential Equations}
\label{sec:prelims:sde}


We use the It\^o convention for SDEs. An $n$-dimensional It\^o SDE is written as
\begin{equation}
\mathrm{d}\mathbf{x}_t
=
\mathbf{f}(t,\mathbf{x}_t)\,\mathrm{d}t
+
\boldsymbol{\sigma}(t,\mathbf{x}_t)\,\mathrm{d}\mathbf{W}_t,
\label{eq:sde_def}
\end{equation}
where $\mathbf{x}_t\in\mathbb{R}^n$, $\mathbf{f}:\mathbb{R}\times\mathbb{R}^n\to\mathbb{R}^n$ is the
\emph{drift}, $\boldsymbol{\sigma}:\mathbb{R}\times\mathbb{R}^n\to\mathbb{R}^{n\times n}$ is the
\emph{diffusion}, and $\mathbf{W}_t$ is an $n$-dimensional standard Wiener process capturing the random
fluctuations. 

\subsection{Lie algebras: Definitions and Motivation}
\label{sec:prelims:lie}

A \emph{Lie algebra} is a vector space equipped with a product that captures
\emph{composition of infinitesimal transformations}. Formally, it is a real vector space
$\mathfrak{g}$ whose elements $X,Y,Z\in\mathfrak{g}$ admit a bilinear operation
$[\cdot,\cdot]:\mathfrak{g}\times\mathfrak{g}\to\mathfrak{g}$ (the \emph{Lie bracket}) satisfying
\emph{antisymmetry} $[X,Y] = -[Y,X]$ and the \emph{Jacobi identity} \(
[X,[Y,Z]] + [Y,[Z,X]] + [Z,[X,Y]] = 0\) along with bilinearity in each argument. 


\subsection{Infinitesimal SDE Symmetry Generators as Lie Algebras}

A \emph{symmetry} of a dynamical system - deterministic or stochastic - is a transformation of variables that preserves the model’s behavior. Symmetries compose: applying one symmetry after another yields another symmetry, and for smooth, continuous families this structure forms (at least locally) a \emph{Lie group}. Many common symmetries are parameterized continuously (e.g., time shifts or state scalings), and for such families it is most convenient to work infinitesimally: what an arbitrarily small transformation does. The set of all infinitesimal symmetry directions forms a \emph{Lie algebra}. 

Concretely, consider a smooth one-parameter change of variables $(t,x)\mapsto (t',x')=\Phi_\varepsilon(t,x)$ with $\Phi_0$ equal to the identity and \(\varepsilon\) being the pseudo-time component (how far is the trajectory being pushed along the generator).  Its infinitesimal generator is the vector field
\begin{equation}
X \;=\; \tau(t,x)\,\partial_t \;+\; \xi(t,x)\,\partial_x,
\label{eq:vector-field}
\end{equation}
(shorthanded as \((\tau, \xi)\)) which acts to first order as $t\mapsto t+\varepsilon\,\tau(t,x)$ and $x\mapsto x+\varepsilon\,\xi(t,x)$. Conversely, given a vector field $\tau, \xi$, one can recover $\Phi_\epsilon = \exp(\epsilon X)$ by integrating:
\[\frac{d}{d\varepsilon}(t(\varepsilon), x(\varepsilon))=  (\tau(t(\varepsilon),x(\varepsilon)), \xi(t(\varepsilon),x(\varepsilon)))\]
with $(t(0),x(0)) = (t,x)$. 

For an It\^o SDE, It\^o's formula specifies how such a coordinate change transforms the drift and diffusion coefficients $(\mu,\sigma)$. We call $\Phi_\varepsilon$ a (Lie-point) symmetry if this transformation maps the SDE to the same SDE. 
In contrast to deterministic systems—where symmetries literally map solution trajectories to solution trajectories - SDE paths are random, so the natural invariance notion is \emph{distributional}: a symmetry preserves the \emph{law} of the process, meaning it maps sample paths to sample paths whose induced stochastic dynamics are equivalent in the transformed coordinates.

Finally, symmetry generators inherit an algebraic structure. 
They are closed under the \emph{Lie bracket}
\[
[X,Y] := XY - YX,
\]
i.e. with $X = (\tau_X, \xi_X)$ and $Y = (\tau_Y, \xi_Y)$, 
\[
  [X,Y]
  =
  \bigl(X(\tau_Y)-Y(\tau_X)\bigr)\partial_t
  +
  \bigl(X(\xi_Y)-Y(\xi_X)\bigr)\partial_x,
\]
which captures how infinitesimal symmetries compose. This closure endows the symmetry generators with the structure of a Lie algebra. Learning multiple generators $\{X_i\}$ therefore amounts to learning a symmetry \emph{subspace}: any linear combination is another valid infinitesimal symmetry direction, while the bracket encodes their composition structure. This Lie algebra is the chief object we aim to discover and evaluate in this work.
 
For instance, the two-dimensional space
\[
\mathfrak{g}=\{a\,\partial_x + b\,x\partial_x\!: a,b\in\mathbb{R}\}
\]
is closed under the Lie bracket, and a direct calculation using Lie bracket's formula gives:
\[
[x\partial_x,\partial_x]\cdot f  = x\partial_x\partial_xf - \partial_x(x\partial_xf) = -\partial_x f
\]
So, $[x\partial_x,\partial_x] = -\partial_x$ which again lies inside \(\mathfrak{g}\), showing its closure. In what follows, our symmetry generators are represented as such vector fields; the Lie bracket provides
the natural notion of ``closure'' under composition of infinitesimal symmetries, and the resulting Lie
algebra structure is the object we aim to learn.

\section{Details of Span Alignment via Principal Angles}
\label{app:span_alignment}

This appendix states the mathematical procedure used to compare the \emph{span} of learned
symmetry generators to the \emph{span} of analytic (ground-truth) generators. Because a symmetry
algebra is only identifiable up to an arbitrary change of basis (any invertible mixing of a valid
generator set spans the same subspace), we evaluate agreement at the subspace level via principal
angles.

\paragraph{Vectorizing generators on a common domain.}
Fix an evaluation domain $\Omega\subset\mathbb R\times\mathbb R^d$ in the variables $(t,\mathbf x)$,
and choose a finite set of points $\Omega_{\mathrm{eval}}=\{(t_p,\mathbf x_p)\}_{p=1}^B\subset\Omega$ where \(B\) is the number of evaluation points. 
For a learned projectable generator
\[
X_i \;=\; \tau_i(t)\,\partial_t \;+\; \xi_i(t,\mathbf x)\cdot\nabla_{\mathbf x},
\qquad i=1,\dots,m,
\]
define its \emph{vectorized representation} on $\Omega_{\mathrm{eval}}$ by stacking its components:
\begin{equation}
v_i
\;:=\;
\bigl(\tau_i(t_p),\ \xi_i(t_p,\mathbf x_p)\bigr)_{p=1}^B
\in
\mathbb R^{B(1+d)}.
\end{equation}
Collect these columns into
\[
V \;=\; [v_1,\dots,v_m]\in\mathbb R^{B(1+d)\times m}.
\]
Similarly, given analytic generators $\{W_j\}_{j=1}^r$ with components
$(\tilde\tau_j,\tilde\xi_j)$, define
\[
w_j
\;:=\;
\bigl(\tilde\tau_j(t_p),\ \tilde\xi_j(t_p,\mathbf x_p)\bigr)_{p=1}^B
\in\mathbb R^{B(1+d)},
\]
\[W=[w_1,\dots,w_r]\in\mathbb R^{B(1+d)\times r}.\]
By construction, $\mathrm{span}(V)$ and $\mathrm{span}(W)$ represent the learned and ground-truth
symmetry subspaces over $\Omega_{\mathrm{eval}}$.

\paragraph{Principal angles between subspaces.}
Let $Q_V$ and $Q_W$ be orthonormal bases for $\mathrm{span}(V)$ and $\mathrm{span}(W)$, e.g.\ from
reduced QR factorizations. The principal angles
$\theta_1\le\cdots\le\theta_{\min(m,r)}\in[0,\pi/2]$
between the two subspaces are defined by
\[
\cos(\theta_\ell)
=
\sigma_\ell\!\left(Q_V^\top Q_W\right),
\quad
\ell=1,\dots,\min(m,r),
\]
where $\sigma_\ell(\cdot)$ denotes the $\ell$th singular value in descending order. Small principal
angles indicate that the learned generators span (up to basis change) the same symmetry directions
as the analytic generators on $\Omega_{\mathrm{eval}}$.

\paragraph{Dimension-mismatch protocol (subset comparisons).}
When the learned generator count $m$ differs from the analytic dimension $r$, we compute principal
angles by comparing subspaces of equal dimension:
\begin{itemize}[leftmargin=*]
\item If $m>r$, we evaluate principal angles between $\mathrm{span}(V_S)$ and $\mathrm{span}(W)$ for
all subsets $S\subset\{1,\dots,m\}$ with $|S|=r$, where $V_S$ denotes the submatrix of $V$ with columns
indexed by $S$.
\item If $m<r$, we evaluate principal angles between $\mathrm{span}(V)$ and $\mathrm{span}(W_T)$ for
all subsets $T\subset\{1,\dots,r\}$ with $|T|=m$, where $W_T$ denotes the corresponding submatrix of $W$.
\end{itemize}
For each subset comparison, we report the full set of principal angles and often summarize alignment
by the maximum principal angle within that subset. This yields a basis-invariant diagnostic of how
closely the learned span matches the analytic symmetry span across candidate dimensions.

\section{Loss Functions}
\label{app:losses}

\subsection{Lie-Algebra Structure Losses}
\label{sec:loss-algebra}

\paragraph{(L1) Lie-bracket closure + constancy (fixed).}
For each pair $(i,j)$, we fit coefficients $c^{k}_{ij}(t,x)$ such that
$[X_i,X_j](t,x) \approx \sum_{k=1}^m c^{k}_{ij}(t,x)\,X_k(t,x)$, and penalize the projection residual:
\[L_1 :=\; \mathbb{E}_{(t,x)\in\Omega}\!\left[
\sum_{1\le i<j\le m}
\left\|[X_i,X_j]-\sum_{k=1}^m c^{k}_{ij}X_k\right\|^2
\right] \]
\[+ \mathbb{E}_{(t,x)\in\Omega}\!\left[
\sum_{1\le i<j\le m}\sum_{k=1}^m
\|\nabla_{t,x} c^{k}_{ij}(t,x)\|^2
\right]\]

\paragraph{(L2) Jacobi identity loss.}
Using the fitted coefficients $c^{k}_{ij}$, we penalize violations of the Jacobi identity:
\[ L_2
  := \sum_{\text{triples }(i,j,k)}
  \left\|
    \sum_{\mathrm{cyclic}(i,j,k)}\;
    \sum_{\ell=1}^m c^\ell_{ij}\,c^p_{\ell k}
  \right\|^2\]

\paragraph{(L3) Skew-symmetry (antisymmetry) loss.}
We enforce $[X_i,X_j]=-[X_j,X_i]$:
\[ L_3
  := \mathbb{E}_{(t,x)\in\Omega}\!\left[
    \sum_{1\le i<j\le m}
    \bigl\|[X_i,X_j] + [X_j,X_i]\bigr\|^2
  \right]\]

\paragraph{(L4) Bilinearity loss.}
Sampling $a,b\sim\mathcal U[-1,1]$ and indices $(i,j,k)$, we penalize violations of bilinearity:
\[L_4
  := \mathbb{E}\!\left[
    \bigl\|[aX_i+bX_j,X_k]-a[X_i,X_k]-b[X_j,X_k]\bigr\|^2
  \right] \]
  \[+\mathbb{E}\!\left[
    \bigl\|[X_k,aX_i+bX_j]-a[X_k,X_i]-b[X_k,X_j]\bigr\|^2
  \right]\]

\paragraph{(L5) Functional independence loss.}
On $\Omega_{\mathrm{span}}=\{(t_p,x_p)\}_{p=1}^P$, define
$v_i := (\tau_i(t_p),\xi_i(t_p,x_p))_{p=1}^P\in\mathbb{R}^{2P}$ and
$V=[v_1,\dots,v_m]\in\mathbb{R}^{2P\times m}$, with Gram matrix $G:=V^\top V$.
We penalize near-singularity via
\[  L_5 := -\log\det(G+\epsilon I),\]
with $\epsilon>0$.

\subsection{SDE-Validity Losses}
\label{sec:loss-validity}

\paragraph{(L6) SDE determining-equation loss (IIC).}
We enforce the determining equations in Theorem~\ref{thm:sde-determining}. For each generator
$X_i$ and each $(t,x)\in\Omega$, define residuals
\[r^{(i)}_{1}(t,x)
  = \xi_t
    + f\,\xi_x
    - \xi\,f_x
    - \partial_t(f\tau)
    + \tfrac12\sigma^2\,\xi_{xx},\]
\[r^{(i)}_{2}(t,x)
  = \sigma\,\xi_x
    - \xi\,\sigma_x
    - \tau\,\sigma_t
    - \tfrac12\sigma\,\tau_t,\]
where all quantities are evaluated at $(t,x)$, subscripts denote partial derivatives, and
$\partial_t(f\tau)=f_t\tau+f\tau_t$. We use the squared residual loss:
\[ L_6
  := \frac{1}{m}\sum_{i=1}^m
  \mathbb{E}_{(t,x)\in\Omega}\!\left[\,|r^{(i)}_{1}(t,x)|^2 + |r^{(i)}_{2}(t,x)|^2\,\right]\]

\paragraph{(L7) Prolonged pushforward residual.}
The determining-equation loss is local. We add a finite-$\varepsilon$ consistency loss that checks whether the
learned generator approximately pushes the \emph{surrogate coefficients} $(f,\sigma)$ to the surrogate coefficients
evaluated at the pushed point.

Fix a small $\varepsilon>0$. For generator $X_i$, let $(t,x)\mapsto(\tilde t,\tilde x)$ be the
point transformation obtained by integrating
\[
  \frac{\dd t}{\dd\varepsilon}=\tau_i(t),\qquad
  \frac{\dd x}{\dd\varepsilon}=\xi_i(t,x).
\]
Along this flow, we propagate coefficient fields $(\mu,\varsigma)$ using the prolonged system
in our implementation:
\[
  \frac{\dd \varsigma}{\dd\varepsilon}
  = \varsigma\Bigl(\partial_x\xi_i - \tfrac12\,\partial_t\tau_i\Bigr)\]
  \[\frac{\dd \mu}{\dd\varepsilon}
  = \partial_t\xi_i + \mu\,\partial_x\xi_i
    + \tfrac12\varsigma^2\,\partial_{xx}\xi_i
    - \mu\,\partial_t\tau_i,
\]
initialized at $\mu(0)=f(t,x)$ and $\varsigma(0)=\sigma(t,x)$. After one $\varepsilon$-step, let
$\mu_{\mathrm{pred}},\varsigma_{\mathrm{pred}}$ be the propagated coefficients, and define
direct evaluations at the pushed point
\[
  \mu_{\mathrm{eval}} := f(\tilde t,\tilde x),\qquad
  \varsigma_{\mathrm{eval}} := \sigma(\tilde t,\tilde x).
\]
We penalize discrepancies and include a soft barrier discouraging negative pushed time
increments along pushed trajectories:
\[ L_7
  :=\; \frac{1}{m}\sum_{i=1}^m
  \mathbb{E}\!\left[
    (\mu_{\mathrm{pred}}-\mu_{\mathrm{eval}})^2
    + (\varsigma_{\mathrm{pred}}-\varsigma_{\mathrm{eval}})^2
  \right]\]
\[\;+\;
  \lambda_{\Delta t}\,\frac{1}{m}\sum_{i=1}^m
  \mathbb{E}\!\left[\operatorname{softplus}(-\Delta\tilde t)\right].\]
The expectations are over sampled trajectory points (coefficient terms) and adjacent time
    pairs (for $\Delta\tilde t$). We integrate the above prolonged system numerically.
\subsection{FP-Validity Losses}

\paragraph{ (L8) Fokker-Planck Determining Equations}
This loss enforces the determining equations for Lie point symmetries of the 1D Fokker-Planck equation, given by $u_t = A u_{xx} + B u_x + C u$, where $A = -\frac{1}{2}\sigma^2$, $B = f - (A)_x$, and $C = f_x - (A)_{xx}$. For each generator $X_i = \tau_i(t) \partial_t + \xi_i(t,x) \partial_x + \phi_i(t,x,u) \partial_u$, with $\phi_i(t,x,u) = \beta_i(t,x)u$, the following three equations must hold:

\begin{align*}
    R_{i,1} &= (\dot{\tau}_i A + \tau_i A_t) + \xi_i A_x - 2 A \xi_{i,x} \approx 0 \\
    R_{i,2} &= (\dot{\tau}_i B + \tau_i B_t) - (\dot{\xi}_i + B \xi_{i,x} - \xi_i B_x) + 2 A \beta_{i,x}\\ &\;\;- A \xi_{i,xx} \approx 0 \\
    R_{i,3} &= (\dot{\tau}_i C + \tau_i C_t) + \dot{\beta}_i + A \beta_{i,xx} + B \beta_{i,x} + \xi_i C_x \approx 0
\end{align*}

where $\dot{(\cdot)}$ denotes $\partial_t (\cdot)$, and subscripts $t, x, xx$ denote partial derivatives. The loss function $L_8$ is the mean-squared (or mean-absolute) sum of these residuals over a batch of $(t,x)$ points:

\[ L_8 := \frac{1}{m |\mathcal{B}|} \sum_{i=1}^m \sum_{(t,x) \in \mathcal{B}} (R_{i,1}^2 + R_{i,2}^2 + R_{i,3}^2) \]

\paragraph{ (L9) Fokker-Planck After-Flow (Pushforward on $u$)}
This loss ensures that the learned generators approximately preserve the solution of the Fokker-Planck equation. For a given solution $u(t,x)$ of the FP equation, flowing $(t,x,u)$ along the prolonged generator $X_i = \tau_i \partial_t + \xi_i \partial_x + \beta_i u \partial_u$ for a small step $\varepsilon$ should result in another valid solution. Let $(t_0, x_0, u_0)$ be an initial point (where $u_0 = \hat{u}(t_0, x_0)$ is the learned FP surrogate solution). We numerically integrate the system:
\begin{align*}
    \frac{dt}{d\alpha} &= \tau_i(t) \\
    \frac{dx}{d\alpha} &= \xi_i(t,x) \\
    \frac{du}{d\alpha} &= \beta_i(t,x) u
\end{align*}
from $\alpha=0$ to $\alpha=\varepsilon$, yielding $(t_\varepsilon, x_\varepsilon, u_\varepsilon)$. The loss penalizes the difference between the flowed solution $u_\varepsilon$ and the learned FP surrogate evaluated at the flowed point, $\hat{u}(t_\varepsilon, x_\varepsilon)$:
\[ L_9 := \frac{1}{m |\mathcal{B}|} \sum_{i=1}^m \sum_{(t_0,x_0) \in \mathcal{B}} (\hat{u}(t_\varepsilon, x_\varepsilon) - u_\varepsilon)^2 \]
where $\mathcal{B}$ is a batch of collocation points. The integration uses Heun's method, and the residuals can be either squared (L2) or absolute (L1).

\section{Fokker--Planck Symmetries}
\label{app:fp}

This appendix records (i) the standard FP symmetry structure used to motivate optional
FP-route losses, and (ii) proofs that the relevant sets of generators form Lie algebras.

\subsection{FP Equation Associated with an It\^o SDE}
For an It\^o SDE in $\mathbb{R}^n$
\[
  \dd x^i = f^i(t,x)\,\dd t + \sigma^i_{\;k}(t,x)\,\dd w^k,
\]
the forward FP equation for a one-point density $u(x,t)$ can be written as
\begin{equation}
  u_t + A^{ij}(t,x)u_{ij} + B^i(t,x)u_i + C(t,x)u = 0,
\end{equation}
with coefficients (one common convention)
\[A^{ij} := -\tfrac12(\sigma\sigma^{\mathsf T})^{ij}, \ B^i := f^i - \partial_j(\sigma\sigma^{\mathsf T})^{ij}, \]
\[C := \partial_i f^i - \tfrac12\partial_{ij}^2(\sigma\sigma^{\mathsf T})^{ij}.\]

\subsection{Projectable FP Symmetries Act Affinely in $u$}
Because the FP equation is linear in $u$, any Lie-point symmetry must act affinely in $u$:
\[
  X = \tau(t)\partial_t + \xi^i(t,x)\partial_{x^i} + (\alpha(t,x)+\beta(t,x)u)\partial_u.
\]
The ``$\alpha$-part'' corresponds to the linear superposition symmetry: $X_\alpha=\alpha\partial_u$
for any solution $\alpha$ of the FP equation.

\subsection{Lie Algebra Structure Proofs}
\begin{theorem}[FP symmetries form a Lie algebra]
Let $\mathfrak g_{\mathrm{FP}}$ denote the set of Lie-point symmetry generators of the FP
equation. Then $\mathfrak g_{\mathrm{FP}}$ is a Lie algebra under the commutator
$[X_1,X_2]=X_1X_2-X_2X_1$.
\end{theorem}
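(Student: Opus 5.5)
The plan is to show that $\mathfrak g_{\mathrm{FP}}$ is a linear subspace of the Lie algebra $\mathfrak X(M)$ of smooth vector fields on $M=\{(t,x,u)\}$ and that it is closed under the commutator. Bilinearity, antisymmetry and the Jacobi identity then come for free, because they already hold in $\mathfrak X(M)$: the commutator of vector fields is a derivation-valued bracket on $C^\infty(M)$.

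\textbf{Step 1: the ambient algebra.} For vector fields $X_1,X_2$ and a function $F$ on $M$, set $[X_1,X_2]F=X_1(X_2F)-X_2(X_1F)$. The second-order terms cancel, so $[X_1,X_2]$ is again a first-order operator, i.e.\ a vector field. Bilinearity and antisymmetry are immediate. The Jacobi identity follows by expanding the cyclic sum of $X_1X_2X_3$-type compositions, whose terms cancel pairwise.

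\textbf{Step 2: $\mathfrak g_{\mathrm{FP}}$ is a subspace.} I will characterize symmetries by the infinitesimal invariance condition. A generator $X$ is a symmetry iff $\mathrm{pr}^{(2)}X(\Delta)=0$ on the solution manifold $\{\Delta=0\}$, where $\Delta=u_t+Au_{xx}+Bu_x+Cu$. The prolongation map $X\mapsto \mathrm{pr}^{(2)}X$ is linear, since the prolongation coefficients $\phi^J$ depend linearly on $(\tau,\xi,\phi)$. Hence linear combinations of symmetries are symmetries. Equivalently, one can use the determining equations of Theorem~\ref{thm:fp-determining}: they are linear homogeneous in $(\tau,\xi,\beta)$, and the superposition parts $\alpha$ form a linear space of solutions.

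\textbf{Step 3: closure.} The key identity is $\mathrm{pr}^{(2)}[X_1,X_2]=[\mathrm{pr}^{(2)}X_1,\mathrm{pr}^{(2)}X_2]$. It holds because prolongation is the infinitesimal version of the functorial lift of point transformations to the jet space. If $X_1,X_2$ are tangent to the submanifold $\mathcal S=\{\Delta=0\}\subset J^2$ (assuming maximal rank, which holds since $\Delta$ is linear in $u_t$), then their commutator is also tangent to $\mathcal S$. Thus $\mathrm{pr}^{(2)}[X_1,X_2]\Delta=0$ on $\mathcal S$, and $[X_1,X_2]$ is a symmetry.

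\textbf{Main obstacle.} The hard step is justifying the prolongation–commutator identity and the tangency argument in this paper's self-contained setting. As a cross-check I would also verify closure directly in the projectable affine form. With $X_k=\tau_k\partial_t+\xi_k\partial_x+(\alpha_k+\beta_k u)\partial_u$, the bracket has $t$-component $\tau_1\tau_2'-\tau_2\tau_1'$, which depends on $t$ only, so projectability is kept. Its $u$-component is again affine in $u$:
\[
\beta=X_1^{(t,x)}\beta_2-X_2^{(t,x)}\beta_1,\qquad
\alpha=X_1^{(t,x)}\alpha_2-X_2^{(t,x)}\alpha_1+\beta_2\alpha_1-\beta_1\alpha_2 .
\]
Here $X_k^{(t,x)}=\tau_k\partial_t+\xi_k\partial_x$ is the part of $X_k$ acting on $(t,x)$. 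The new $\alpha$ must be a solution of the FP equation. This follows because a symmetry with $\alpha=0$ maps solutions to solutions via the linear operator $\alpha\mapsto X^{(t,x)}\alpha-\beta\alpha$. That operator is exactly the characteristic action, and it preserves $\ker\Delta$ by linearity of FP.
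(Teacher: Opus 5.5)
Your proof is correct and follows the paper's argument. Linearity of the infinitesimal invariance condition makes $\mathfrak g_{\mathrm{FP}}$ a vector space, and the identity $\mathrm{pr}^{(2)}[X_1,X_2]=[\mathrm{pr}^{(2)}X_1,\mathrm{pr}^{(2)}X_2]$ gives closure; your tangency-to-$\{\Delta=0\}$ formulation (under maximal rank) is equivalent to the paper's identity $\mathrm{pr}^{(2)}X(\Delta)=\lambda_X\Delta$, and your Step 1 makes explicit what the paper leaves implicit, namely that bilinearity, antisymmetry and the Jacobi identity are inherited from the ambient algebra of vector fields.
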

\begin{proof}
The infinitesimal invariance condition is linear in $X$, so $\mathfrak g_{\mathrm{FP}}$ is a
vector space.
Closure under brackets follows from the standard fact that prolongation commutes with Lie
brackets:
$\mathrm{pr}^{(2)}[X_1,X_2]=[\mathrm{pr}^{(2)}X_1,\mathrm{pr}^{(2)}X_2]$,
together with the invariance identity
$\mathrm{pr}^{(2)}X(\Delta_{\mathrm{FP}})=\lambda_X\Delta_{\mathrm{FP}}$ for some scalar
function $\lambda_X$ on jet space. Applying the commutator to $\Delta_{\mathrm{FP}}$ yields
$\mathrm{pr}^{(2)}[X_1,X_2](\Delta_{\mathrm{FP}})=\lambda_{[X_1,X_2]}\Delta_{\mathrm{FP}}$,
hence $[X_1,X_2]\in\mathfrak g_{\mathrm{FP}}$.
\end{proof}

\section{SDE symmetries as a Lie-subalgebra of FP Symmetries}
\label{app:subalgebra}
Classical results relate SDE symmetries to a subalgebra of FP symmetries; in particular, the
projectable SDE symmetry algebra $\mathfrak g_{\mathrm{Ito}}$ embeds into FP symmetries by
adding an appropriate vertical component in $u$, and one obtains a nested structure $\mathfrak g_{\mathrm{Ito}} \subset \mathfrak g_{\mathrm{FP}}.$ We now prove this containment more formally.

\subsection{Quotienting Out the Superposition Ideal and the SDE--FP Isomorphism}

Let $\mathfrak g_{\mathrm{Ito}}$ be the Lie algebra of \emph{projectable} Itô symmetry generators
\[
X = \tau(t)\,\partial_t + \xi^i(t,x)\,\partial_{x_i},
\]
whose coefficients satisfy the Itô determining equations \emph{(3.4)} in \cite{gaeta1999lie}.

Let $\mathfrak g_{\mathrm{FP}}$ be the Lie algebra of projectable Lie point symmetry generators
of the associated FP equation, and let
\[
\mathcal I
:= \bigl\{
X_\alpha = \alpha(t,x)\,\partial_u
\;:\;
\alpha \text{ solves the FP equation}
\bigr\}.
\]
By [Thm.~3]\cite{gaeta1999lie}, these $X_\alpha$ are precisely the ``trivial'' symmetries coming from
linear superposition, and the general projectable FP symmetry has
\[
\phi = \alpha + \beta u .
\]
In particular, $\mathcal I$ is an (infinite-dimensional) Lie ideal in
$\mathfrak g_{\mathrm{FP}}$, so the quotient
\[
\mathfrak g_{\mathrm{FP}}^{\mathrm{ess}}
:= \mathfrak g_{\mathrm{FP}} / \mathcal I
\]
is a Lie algebra.

Define the \emph{SDE-induced essential symmetry subalgebra} as the image in the quotient
of the ansatz
\[
\tau\,\partial_t
+ \xi^i \partial_{x_i}
- (\operatorname{div}\xi)\,u\,\partial_u,
\]
which is exactly the ``probabilistically compatible'' form singled out in \cite{gaeta1999lie}
(see Remark~11 and equation~(5.10)). Concretely,
\begin{align*}
\mathfrak g_{\mathrm{SDE}}^{\mathrm{ess}}
:=
\Bigl\{
\bigl[
\tau\,\partial_t
+ \xi^i \partial_{x_i}
&- (\operatorname{div}\xi)\,u\,\partial_u
\bigr]
\in \mathfrak g_{\mathrm{FP}}^{\mathrm{ess}}
\;:\\\;
&(\tau,\xi)\text{ satisfy (3.4)}
\Bigr\}.
\end{align*}

\begin{theorem}[Lie-algebra isomorphism modulo $\mathcal I$]
    Define $\Phi :
\mathfrak g_{\mathrm{Ito}}
\to
\mathfrak g_{\mathrm{FP}}^{\mathrm{ess}}$ as:
\[\Phi\!\left(
\tau\,\partial_t + \xi^i \partial_{x_i}
\right)
:=
\bigl[
\tau\,\partial_t
+ \xi^i \partial_{x_i}
- (\operatorname{div}\xi)\,u\,\partial_u
\bigr]\]
Then $\Phi$ is a Lie-algebra isomorphism
\[
\mathfrak g_{\mathrm{Ito}}
\;\xrightarrow{\ \cong\ }\;
\mathfrak g_{\mathrm{SDE}}^{\mathrm{ess}}
\subseteq
\mathfrak g_{\mathrm{FP}}^{\mathrm{ess}} .
\]

\end{theorem}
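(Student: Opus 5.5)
To prove the theorem I would establish three facts in order: $\Phi$ is well defined, $\Phi$ preserves brackets, and $\Phi$ is injective. Surjectivity onto $\mathfrak g_{\mathrm{SDE}}^{\mathrm{ess}}$ is then automatic, since that set is by definition the image of $\Phi$. Linearity is clear, because $X\mapsto \tau\partial_t+\xi^i\partial_{x_i}-(\operatorname{div}\xi)u\partial_u$ is linear in $(\tau,\xi)$ and the quotient map is linear.

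\emph{Well-definedness.} I have to show that whenever $(\tau,\xi)$ satisfies the It\^o determining equations (Theorem~\ref{thm:sde-determining}, or its $n$-dimensional version (3.4) of \citep{gaeta1999lie}), the lifted field $\hat X=\tau\partial_t+\xi^i\partial_i+\beta u\partial_u$ with $\beta=-\operatorname{div}\xi$ satisfies the FP determining equations of Theorem~\ref{thm:fp-determining}. I would do this in one dimension with $\beta=-\xi_x$; the general case is identical with indices.

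The first FP equation is $\partial_t(\tau A)+\xi A_x-2A\xi_x=0$ with $A=-\tfrac12\sigma^2$. Dividing by $-\sigma$, it becomes exactly the diffusion determining equation $\sigma\xi_x-\xi\sigma_x-\tau\sigma_t-\tfrac12\sigma\tau_t=0$.

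For the second FP equation, I would substitute $B=f-(\sigma^2)_x$ and $\beta_x=-\xi_{xx}$. The $\xi_{xx}$ terms then combine to $-3A\xi_{xx}=\tfrac32\sigma^2\xi_{xx}$. I would use the $x$-derivative of the diffusion equation to cancel the $(\sigma^2)_x$-dependent pieces against the leftover $\xi_{xx}$ contributions, and the drift equation to remove the rest.

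The third FP equation involves $C=f_x-\tfrac12(\sigma^2)_{xx}$ and $\beta=-\xi_x$. I expect it to reduce to the $x$-derivative of the drift determining equation plus a combination of second derivatives of the diffusion equation. This is the main computational obstacle. My first attempt would be to write $L_{\mathrm{FP}}u=-\partial_x(fu)+\tfrac12\partial_{xx}(\sigma^2u)$ as the adjoint of the generator $\mathcal L=f\partial_x+\tfrac12\sigma^2\partial_{xx}$, and use that the It\^o equations say precisely that $X$ commutes with $\partial_t-\mathcal L$ up to the factor $\tau_t$. Dualizing, the transformed density picks up the Jacobian factor $-\operatorname{div}\xi$, which gives the FP invariance with this $\beta$ structurally rather than by brute expansion. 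If the sign conventions do not match, I would fall back on direct symbolic expansion, using \citep[Remark 11, eq.~(5.10)]{gaeta1999lie} as a check.

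\emph{Bracket preservation.} For $X_k=\tau_k\partial_t+\xi_k\partial_x$ and $\beta_k=-\xi_{k,x}$, the lift has $u$-component $\big(X_1(\beta_2)-X_2(\beta_1)\big)u$, because the $\beta_1\beta_2$ terms cancel. I need this to equal $-\partial_x\big(X_1\xi_2-X_2\xi_1\big)\,u$. Expanding $\partial_x(\tau_1\xi_{2,t}+\xi_1\xi_{2,x})$ and antisymmetrizing, the terms $\xi_{1,x}\xi_{2,x}$ cancel, and $\partial_x\tau_k=0$ by projectability. What remains is exactly $-X_1(\xi_{2,x})+X_2(\xi_{1,x})$. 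So $[\hat X_1,\hat X_2]=\widehat{[X_1,X_2]}$ already at the level of vector fields, before passing to the quotient. The $n$-dimensional version uses $\operatorname{div}[\xi_1,\xi_2]=\xi_1(\operatorname{div}\xi_2)-\xi_2(\operatorname{div}\xi_1)$ for the spatial part, together with $\partial_i\tau=0$.

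\emph{Injectivity.} If $\Phi(X)=0$, then $\hat X\in\mathcal I$, so $\hat X=\alpha\partial_u$. Comparing the $\partial_t$ and $\partial_x$ components gives $\tau=0$ and $\xi=0$, hence $X=0$. Combining the three steps, $\Phi$ is a Lie-algebra isomorphism onto its image $\mathfrak g_{\mathrm{SDE}}^{\mathrm{ess}}$, which is a subalgebra of $\mathfrak g_{\mathrm{FP}}^{\mathrm{ess}}$ by bracket preservation.
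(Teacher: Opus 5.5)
Your decomposition matches the paper's steps (2)--(5): linearity, the bracket computation, injectivity, and the image being $\mathfrak g_{\mathrm{SDE}}^{\mathrm{ess}}$ by definition. In particular, your identity $\partial_x(X_1\xi_2-X_2\xi_1)=X_1(\xi_{2,x})-X_2(\xi_{1,x})$ is the paper's $\operatorname{div}\xi_c=X_a(\operatorname{div}\xi_b)-X_b(\operatorname{div}\xi_a)$.

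The genuine difference is well-definedness. The paper verifies nothing there. It cites \citet[Thm.~4]{gaeta1999lie} for an FP lift with $\beta=-\operatorname{div}\xi+c_0$, then uses preservation of normalization to set $c_0=0$. You instead check directly that $\beta=-\operatorname{div}\xi$ satisfies the FP determining equations. This makes the proof self-contained and removes the normalization step; since $u\,\partial_u$ is itself a symmetry, $c_0$ is irrelevant anyway.

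Your expectations for (ii) and (iii) are correct. Let $E_1,E_2$ be the drift and diffusion expressions of Theorem~\ref{thm:sde-determining}, and $R_1,R_2,R_3$ the three FP residuals at $\beta=-\xi_x$. For arbitrary $\tau(t),\xi(t,x)$ one has
\[
R_1=\sigma E_2,\qquad R_2=2\,\partial_x(\sigma E_2)-E_1,\qquad R_3=\partial_{xx}(\sigma E_2)-\partial_x E_1 .
\]
In your first check you should divide by $\sigma$, not $-\sigma$; this is harmless.

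One concrete caution if you run the check against the paper's text. The identity for $R_2$ needs the second FP equation with $-B\,\partial_x\xi+\xi\,\partial_x B$. That is the correct form, and it is what appears in the residual $R_{i,2}$ of the $L_8$ appendix. Theorem~\ref{thm:fp-determining} as printed has $+B\,\partial_x\xi-\xi\,\partial_x B$ instead. Against the printed version, your check of (ii) would leave a residual $2(B\xi_x-\xi B_x)$. For example, in Example~2 ($f=x$, $\sigma=1$, $B=x$) the It\^o symmetry $e^t\partial_x$ lifts to $e^t\partial_x$. This is a genuine FP symmetry, yet it violates the printed equation by $-2e^t$.

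The adjoint argument you sketch avoids this typo. It also replaces the unproved ``identical with indices'' step, which matters because the paper states only the one-dimensional FP equations. Take $\mathcal L=f^i\partial_i+\tfrac12(\sigma\sigma^{\mathsf T})^{ij}\partial_{ij}$; the It\^o equations give $[\partial_t+\mathcal L,X]=\tau_t(\partial_t+\mathcal L)$. Now take formal adjoints, with $X^*=-X-\tau_t-\operatorname{div}\xi$. The extra $\tau_t$ absorbs the $\tau_{tt}$ that comes from commuting $\partial_t$ past $\tau_t$, and you get $[\partial_t-\mathcal L^*,\,X+\operatorname{div}\xi]=\tau_t(\partial_t-\mathcal L^*)$. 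The characteristic of $X-(\operatorname{div}\xi)u\,\partial_u$ is $-(X+\operatorname{div}\xi)u$, so this is exactly FP invariance with $\beta=-\operatorname{div}\xi$, in any dimension.
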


\subsection*{Proof}

\paragraph{(1) Well-defined.}
Let $X=\tau\partial_t+\xi^i\partial_{x_i}\in\mathfrak g_{\mathrm{Ito}}$.
By [Thm.~4]\cite{gaeta1999lie}, $X$ extends to an FP symmetry
\[
X + \phi\,\partial_u,
\qquad
\phi = \alpha + \beta u,
\]
with $\beta = -\operatorname{div}\xi + c_0$.
Imposing preservation of normalization, [Appendix~\ref{app:span_alignment}]\cite{gaeta1999lie} forces
$c_0=0$ and $\beta=-\operatorname{div}\xi$.
Thus
\[
\tau\partial_t + \xi^i\partial_{x_i}
- (\operatorname{div}\xi)\,u\partial_u
\]
is an FP symmetry, and its class modulo $\mathcal I$ defines
$\Phi(X)\in\mathfrak g_{\mathrm{FP}}^{\mathrm{ess}}$.

\paragraph{(2) Linearity.}
Immediate from linearity of the divergence operator and of the quotient map.

\paragraph{(3) Homomorphism property.}
Let
\[
X_a=\tau_a\partial_t+\xi_a^i\partial_{x_i},
\qquad
X_b=\tau_b\partial_t+\xi_b^i\partial_{x_i},
\]
and set
\[
Y_a := X_a - (\operatorname{div}\xi_a)u\partial_u,
\qquad
Y_b := X_b - (\operatorname{div}\xi_b)u\partial_u.
\]
A direct computation of commutators on $(t,x,u)$ yields
\[
[Y_a,Y_b]
=
[X_a,X_b]
-
\bigl(
X_a(\operatorname{div}\xi_b)
-
X_b(\operatorname{div}\xi_a)
\bigr)
u\partial_u .
\]
Writing
\[
[X_a,X_b]
=
\tau_c\partial_t+\xi_c^i\partial_{x_i},
\]
one has
\[
\operatorname{div}\xi_c
=
X_a(\operatorname{div}\xi_b)
-
X_b(\operatorname{div}\xi_a),
\]
and therefore
\[
[Y_a,Y_b]
=
\tau_c\partial_t+\xi_c^i\partial_{x_i}
-
(\operatorname{div}\xi_c)u\partial_u .
\]
Passing to classes in the quotient gives
\[
[\Phi(X_a),\Phi(X_b)]
=
\Phi([X_a,X_b]),
\]
so $\Phi$ preserves Lie brackets.

\paragraph{(4) Injective.}
If $\Phi(X)=0$, then the representative
\[
Y=\tau\partial_t+\xi^i\partial_{x_i}
-(\operatorname{div}\xi)u\partial_u
\]
lies in $\mathcal I$.
But every element of $\mathcal I$ has vanishing $\partial_t$
and $\partial_{x_i}$ components, hence $\tau=\xi\equiv 0$ and $X=0$.

\paragraph{(5) Image and isomorphism.}
By definition, $\mathfrak g_{\mathrm{SDE}}^{\mathrm{ess}}$
consists exactly of the classes $\Phi(X)$ with
$(\tau,\xi)$ satisfying (3.4).
Thus
\[
\mathrm{Im}(\Phi)=\mathfrak g_{\mathrm{SDE}}^{\mathrm{ess}},
\]
and $\Phi$ is a Lie-algebra isomorphism onto this subalgebra.
\hfill $\square$

\subsection*{Remark (essential FP symmetries vs.\ superposition)}

Because the FP equation is linear, its Lie point symmetry algebra contains the
infinite-dimensional family of solution-translation symmetries
\[
X_\alpha=\alpha(t,x)\partial_u,
\]
where $\alpha$ is any solution of the FP equation.
In [Thm.~3]\cite{gaeta1999lie} these appear as the $\alpha$-term in
$\phi=\alpha+\beta u$ and are explicitly identified as the symmetries
implied by linear superposition.
Quotienting by the ideal
\[
\mathcal I=\{X_\alpha\}
\]
removes precisely these trivial directions and yields the
\emph{essential} FP symmetry algebra
\(
\mathfrak g_{\mathrm{FP}}^{\mathrm{ess}}=\mathfrak g_{\mathrm{FP}}/\mathcal I.\)
In this quotient, the map $\Phi$ identifies each genuine Itô symmetry
$(\tau,\xi)$ with the unique normalization-compatible FP symmetry class
whose $u$-component is fixed by $\beta=-\operatorname{div}\xi$
(cf.\ \cite{gaeta1999lie}, Remark~11, equation~(5.10), and Appendix~\ref{app:span_alignment}).

\section{SDE--Fokker--Planck Subalgebra Validation}
\label{app:sde_fp_subalgebra}

The theoretical relation in Appendix~\ref{app:subalgebra} implies that projectable SDE symmetry generators embed into the Fokker--Planck symmetry algebra. Thus, learned SDE generators should form a subalgebra of the learned Fokker--Planck generators when both are projected to the common \((\tau,\xi)\)-space, discarding the \(u\)-component of the Fokker--Planck generators.

To test this relation empirically, we train both SDE and Fokker--Planck generators for Example~1 and compare the span of the learned SDE generators with the projected \((\tau,\xi)\)-span of the learned Fokker--Planck generators using principal angles. The resulting angles are
\(
2.00^\circ,\ 13.63^\circ,\ 18.81^\circ.
\) These small angles confirm that the learned SDE generators are recovered as a subalgebra of the learned Fokker--Planck generators, providing a direct quantitative validation of the SDE--Fokker--Planck connection beyond qualitative comparison.

\section{Plots and Heat Maps of Learned SDE Symmetry Generators}
\label{app:generator_plots}
In this section, we present plots and heatmaps of the learned symmetry generators. As discussed in the main text and in the principal-angle evaluation methodology, the discovered functions are generally close to linear combinations of the ground-truth symmetries associated with each SDE. Consequently, they may not admit direct interpretation in terms of functional form or scale relative to the underlying ground-truth symmetry functions. These visualizations are included primarily for completeness and to provide qualitative insight into the types of functions ultimately learned by LieStoNet.

\begin{figure}[H]
\centering
\begin{subfigure}[t]{0.49\textwidth}
  \centering
  \includegraphics[width=\linewidth]{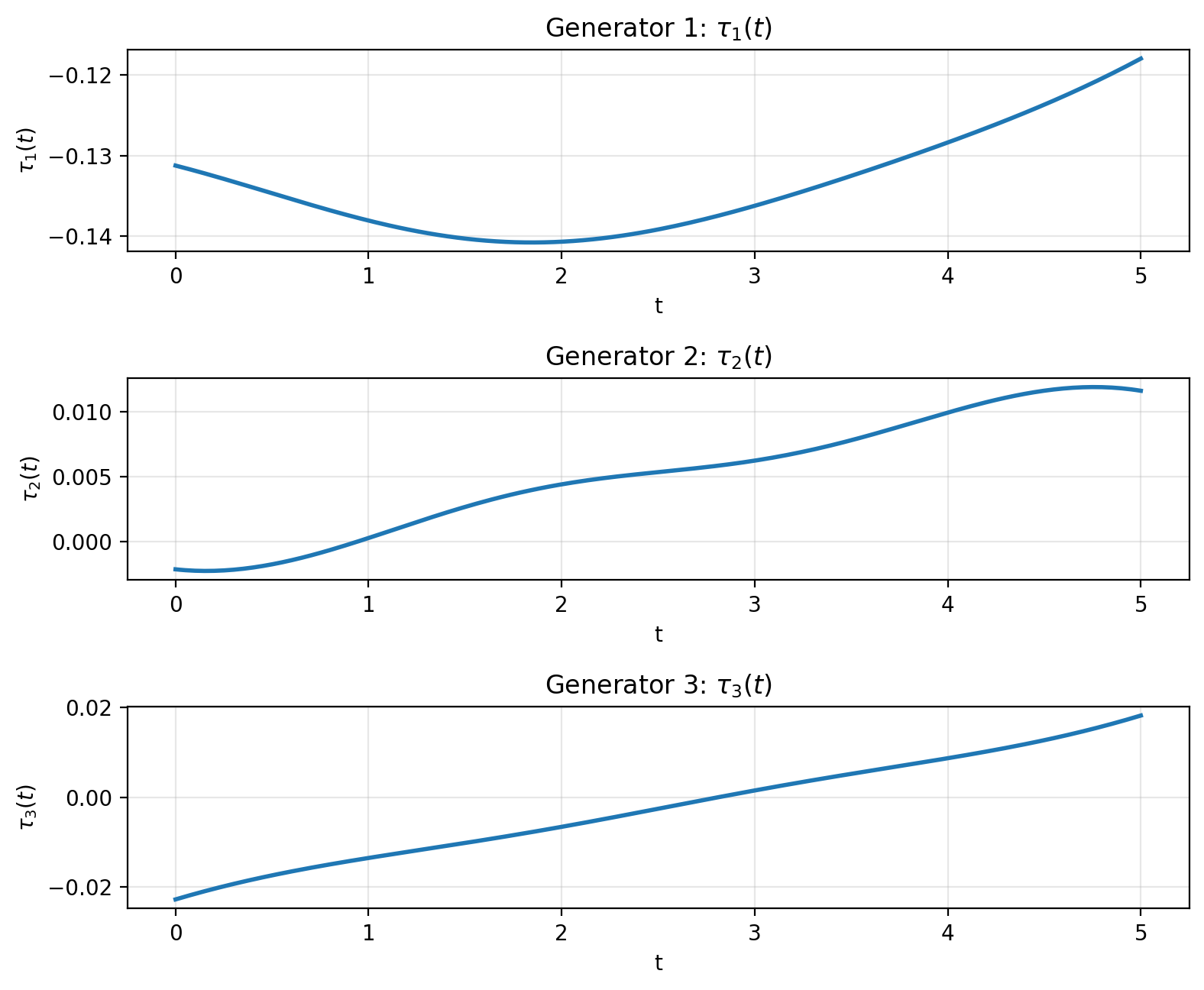}
  \caption{$\tau_i(t)$ (Example 1).}
\end{subfigure}\hfill
\begin{subfigure}[t]{0.49\textwidth}
  \centering
  \includegraphics[width=\linewidth]{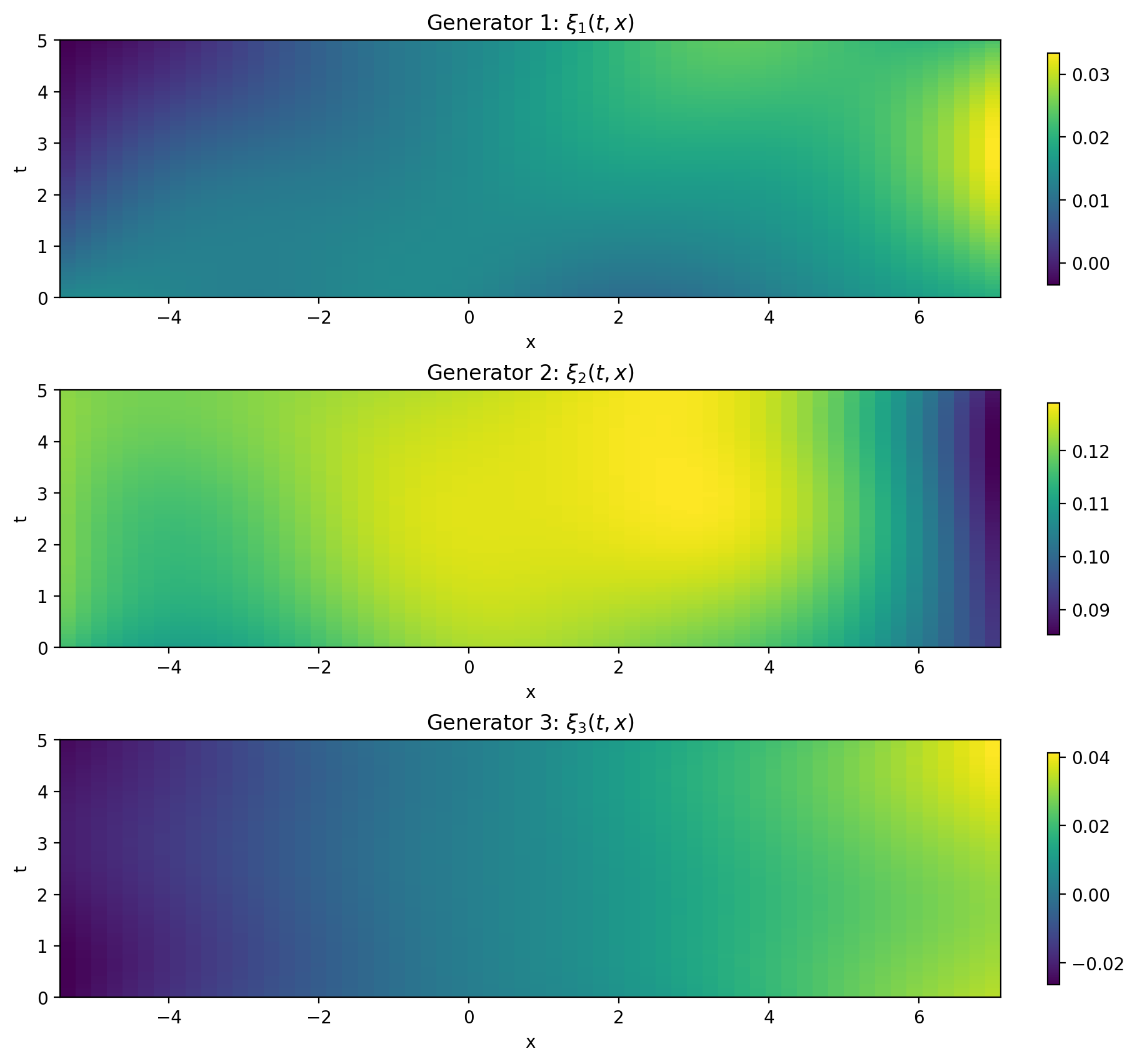}
  \caption{$\xi_i(t,x)$ (Example 1).}
\end{subfigure}
\caption{Learned generator components for Example 1. Part (a): temporal component $\tau_i(t)$. Part (b): spatial component $\xi_i(t,x)$ visualized as heat maps.}
\label{fig:app_heat_ex1}
\end{figure}

\begin{figure}[H]
\centering
\begin{subfigure}[t]{0.49\textwidth}
  \centering
  \includegraphics[width=\linewidth]{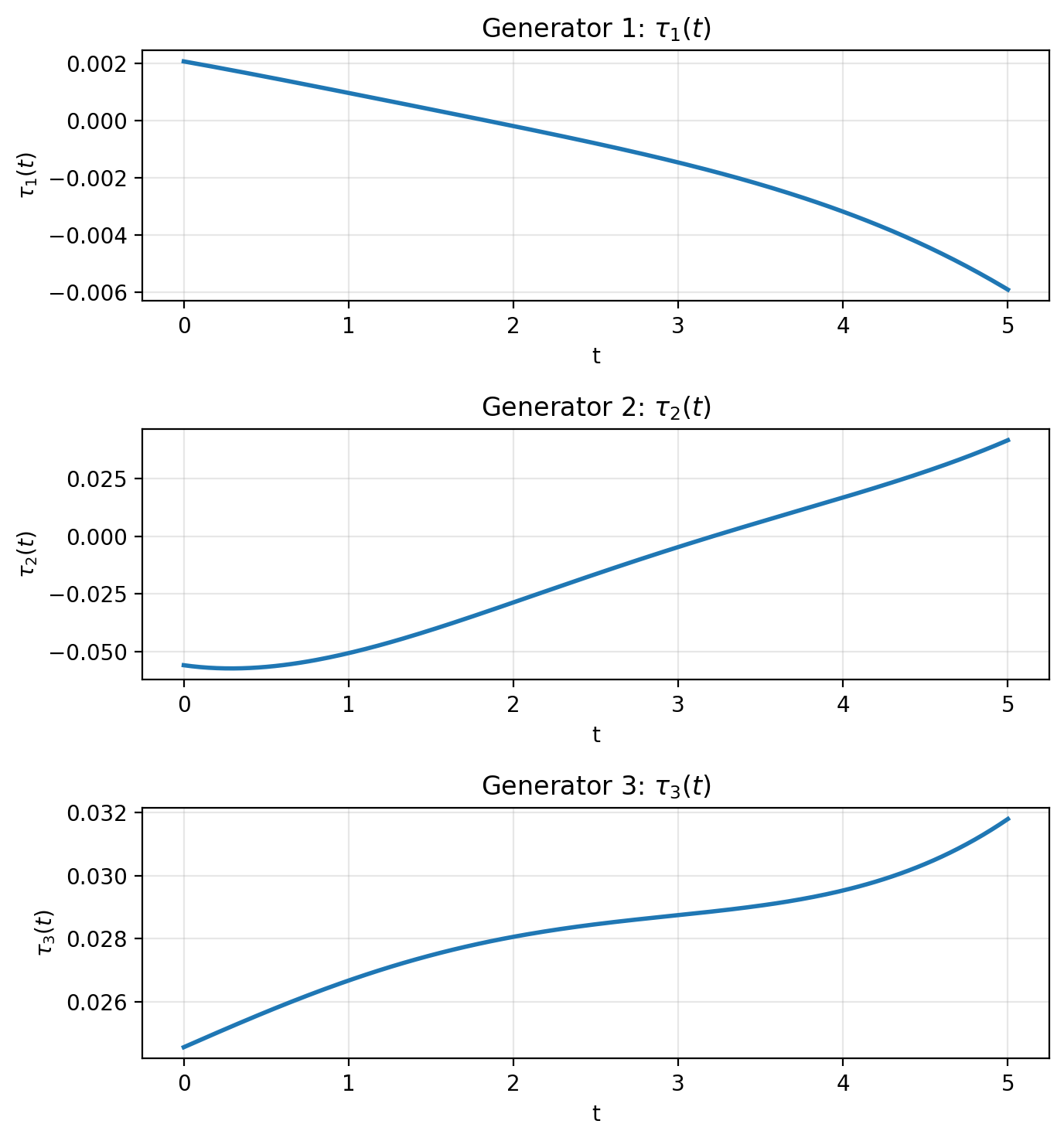}
  \caption{$\tau_i(t)$ (Example 2).}
\end{subfigure}\hfill
\begin{subfigure}[t]{0.49\textwidth}
  \centering
  \includegraphics[width=\linewidth]{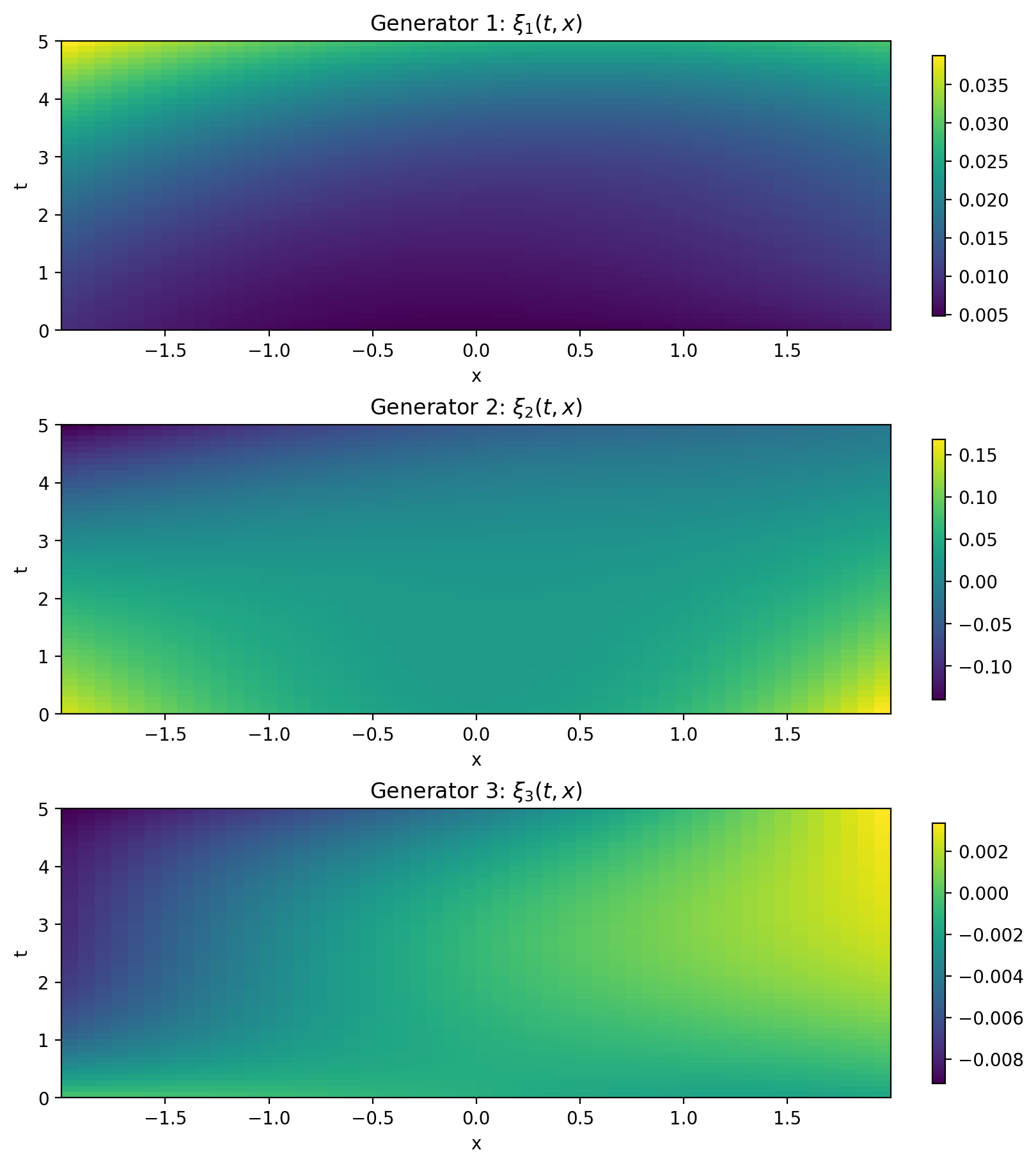}
  \caption{$\xi_i(t,x)$ (Example 2).}
\end{subfigure}
\caption{Learned generator components for Example 2. Part (a): $\tau_i(t)$. Part (b): $\xi_i(t,x)$ heat maps.}
\label{fig:app_heat_ex2}
\end{figure}

\begin{figure}[H]
\centering
\begin{subfigure}[t]{0.49\textwidth}
  \centering
  \includegraphics[width=\linewidth]{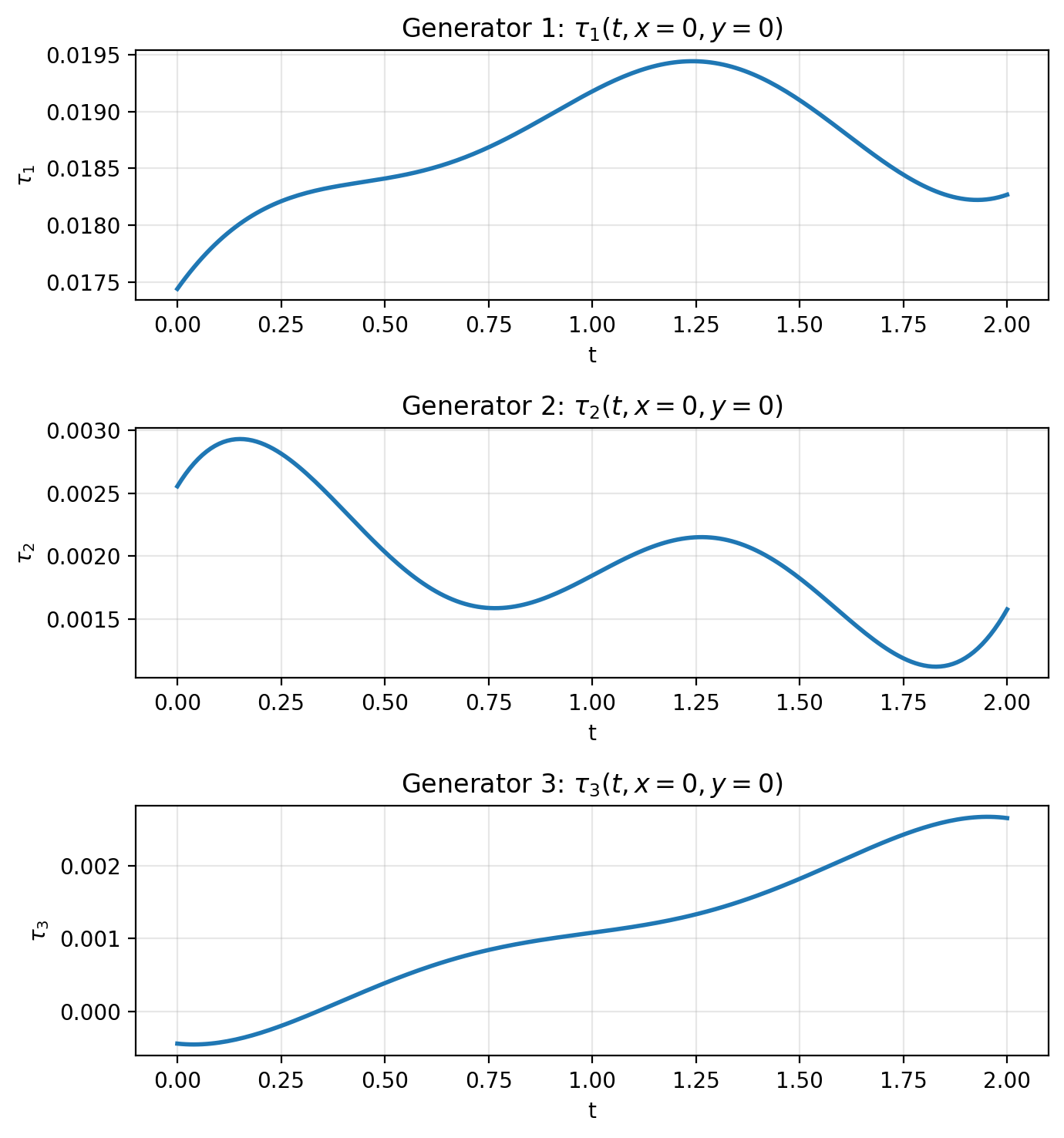}
  \caption{$\tau_i(t)$ (Example 3).}
\end{subfigure}\hfill
\begin{subfigure}[t]{0.49\textwidth}
  \centering
  \includegraphics[width=\linewidth]{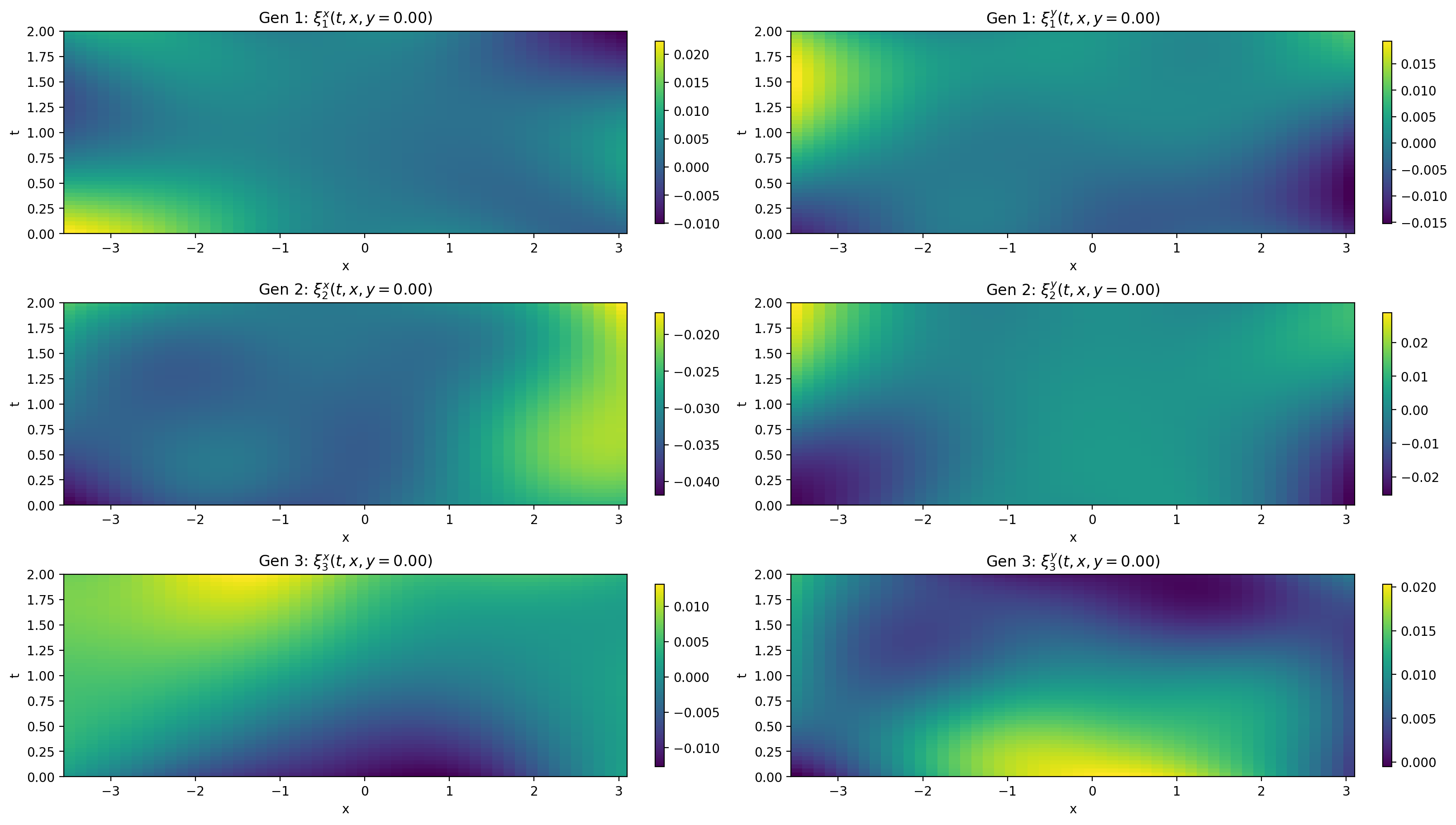}
  \caption{$\xi_i(t,x)$ (Example 3).}
\end{subfigure}
\caption{Learned generator components for Example 3. Part (a): $\tau_i(t)$. Part (b): $\xi_i(t,x)$ heat maps.}
\label{fig:app_heat_ex3}
\end{figure}

\begin{figure}[H]
\centering
\begin{subfigure}[t]{0.49\textwidth}
  \centering
  \includegraphics[width=\linewidth]{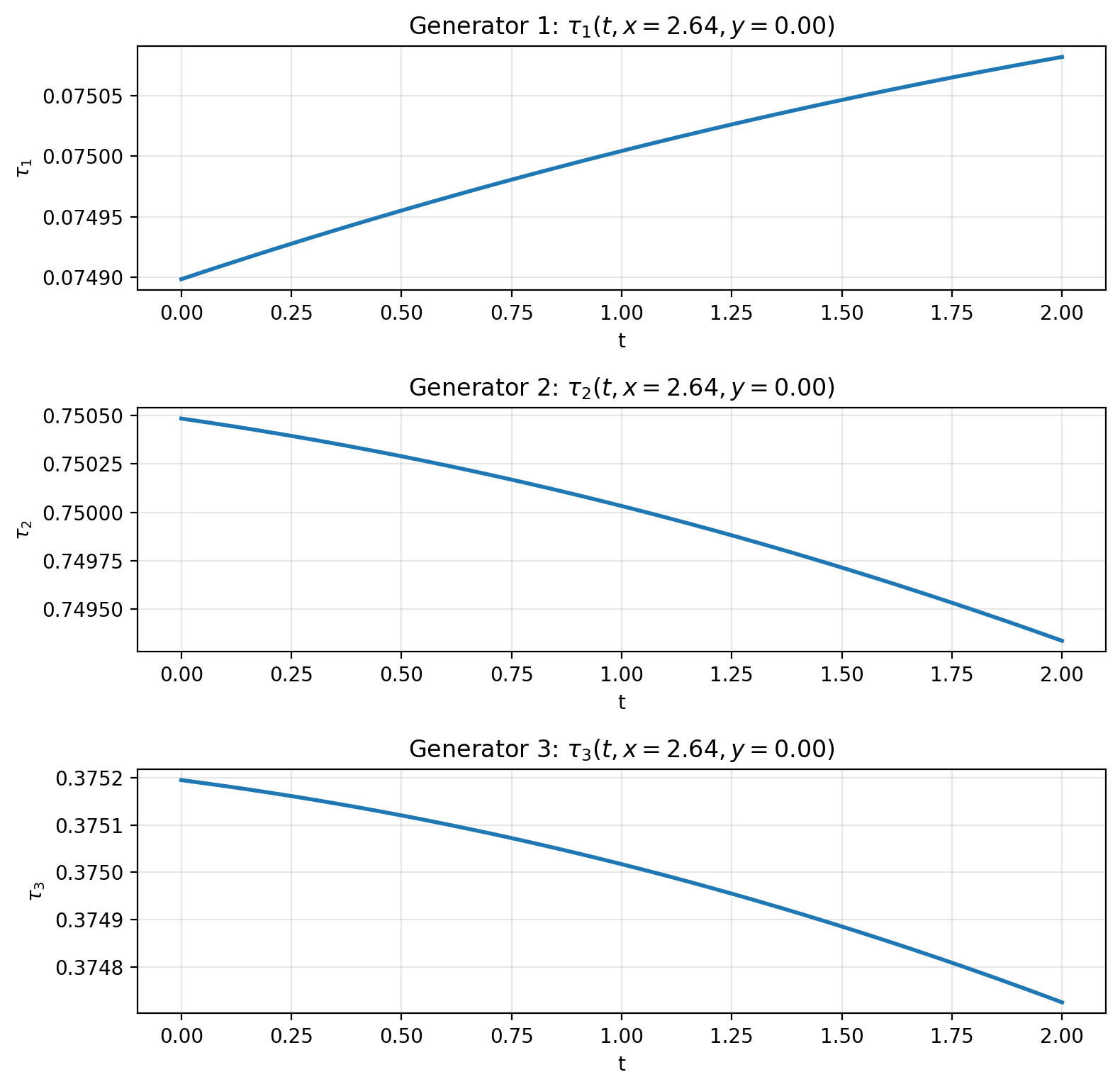}
  \caption{$\tau_i(t)$ (Example 4).}
\end{subfigure}\hfill
\begin{subfigure}[t]{0.49\textwidth}
  \centering
  \includegraphics[width=\linewidth]{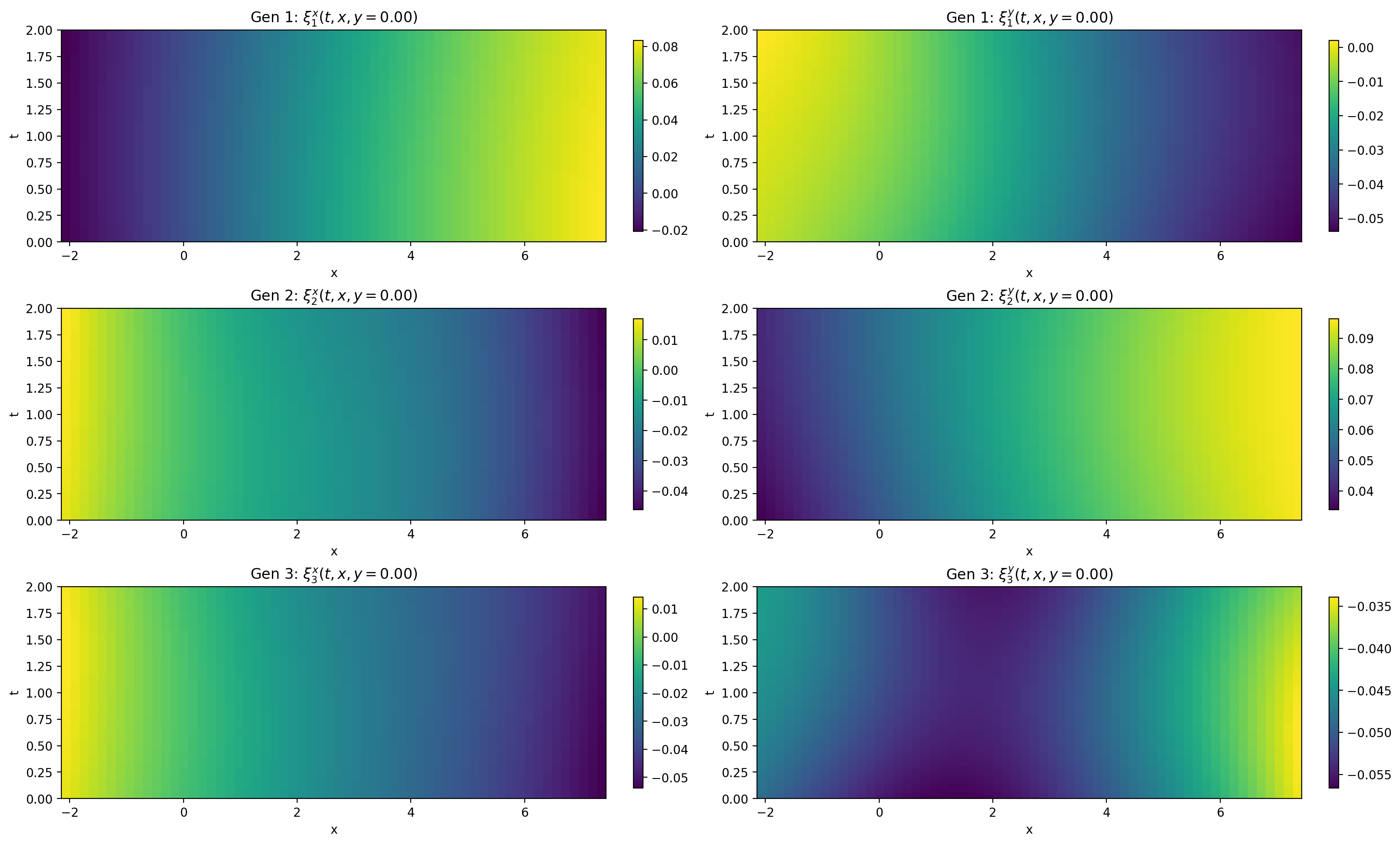}
  \caption{$\xi_i(t,x)$ (Example 4).}
\end{subfigure}
\caption{Learned generator components for Example 4. Part (a): $\tau_i(t)$. Part (b): $\xi_i(t,x)$ heat maps.}
\label{fig:app_heat_ex4}
\end{figure}

\section{Sensitivity to Surrogate Approximation Error}
\label{sec:surrogate_sensitivity}

LieStoNet computes the symmetry losses using the learned neural SDE surrogate \((\hat f,\hat\sigma)\), so surrogate misspecification can affect the recovered generators. Let \((f,\sigma)\) denote the true drift and diffusion, \(G^\star\) a true symmetry generator, \(\hat G\) a learned generator, and \(D_{f,\sigma}(\cdot)\) the SDE determining-equation operator. A standard perturbation argument gives
\[
\|D_{f,\sigma}(\hat G)\|
\;\leq\;
\|D_{\hat f,\hat\sigma}(\hat G)\|
+
C(\hat G)\Bigl(
\|\hat f-f\|_{C^1}
+
\|\hat\sigma-\sigma\|_{C^1}
\Bigr),
\]
where \(C(\hat G)\) depends on the size and smoothness of the learned generator. Thus, the determining-equation residual under the true SDE is controlled by two terms: the residual minimized during training under the surrogate, and the \(C^1\) approximation error of the surrogate itself. Under a local stability condition for the determining operator, this implies that symmetry recovery degrades continuously with surrogate error (for \(C^\star \) being the local stability constant):
\[\inf_{A\in GL(k)}
\|\hat G - A G^\star\|
\;\leq\;
C^\star \Bigl (\|D_{\hat f,\hat\sigma}(\hat G)\|
+\]
\[\|\hat f-f\|_{C^1}
+\|\hat\sigma-\sigma\|_{C^1}
\Bigr).\]

Empirically, we perturb the surrogate in Example~1 using structured sine perturbations and random MLP perturbations. The maximum principal angle and determining-equation loss increase smoothly with perturbation amplitude, rather than exhibiting abrupt instability.

\begin{table}[H]
\centering
\small
\begin{tabular}{c cc cc}
\toprule
\multirow{2}{*}{\(\delta\)}
& \multicolumn{2}{c}{Sine perturbation}
& \multicolumn{2}{c}{Random MLP perturbation} \\
\cmidrule(lr){2-3}\cmidrule(lr){4-5}
& \(\theta\) & \(L_6\) & \(\theta\) & \(L_6\) \\
\midrule
0    & \(1.8^\circ\)  & \(3.7\times 10^{-2}\) & \(2.1^\circ\)  & \(4.2\times 10^{-2}\) \\
0.02 & \(3.1^\circ\)  & \(5.8\times 10^{-2}\) & \(2.2^\circ\)  & \(4.5\times 10^{-2}\) \\
0.1  & \(12.2^\circ\) & \(9.6\times 10^{-2}\) & \(2.8^\circ\)  & \(1.1\times 10^{-1}\) \\
0.5  & \(35.9^\circ\) & \(1.2\times 10^{-1}\) & \(14.8^\circ\) & \(2.1\times 10^{-1}\) \\
\bottomrule
\end{tabular}
\vspace{3pt}
\caption{\textbf{Sensitivity to surrogate perturbations.} Here \(\delta\) is perturbation amplitude, \(\theta\) is the maximum principal angle, and \(L_6\) is the SDE determining-equation loss. Recovery degrades smoothly as the surrogate is perturbed.}
\label{tab:surrogate_sensitivity}
\end{table}

\section{Computational Overhead of Loss Terms}
\label{sec:runtime_overhead}

The main computational cost in Stage~2 comes from repeated automatic differentiation of the generator networks, especially for losses involving Jacobians and Hessians. Let \(D_g^{(1)}\) and \(D_g^{(2)}\) denote the cost of evaluating first- and second-order derivatives of one generator network, \(B\) the number of sampled space--time points in a minibatch, \(m\) the number of learned generators, \(n\) the state dimension, and \(N\) the number of subsampled trajectory time points. The algebraic and SDE-validity losses scale polynomially in these quantities:
\begin{align*}
L_1,L_3 &= O\!\left(B(mD_g^{(1)}+m^2)\right),\qquad\\
L_2 &= O\!\left(B(mD_g^{(2)}+m^3)\right),\\
L_4 &= O\!\left(B(mD_g^{(1)}+m^3)\right),\qquad\\
L_6 &= O(BmD_g^{(2)}),\qquad\\
L_7 &= O(2nNmD_g^{(2)}).
\end{align*}
Thus the cost grows polynomially rather than combinatorially in the number of generators.

We also measure per-loss wall-clock time for one representative setting, \(m=3\), \(B=2048\), on an NVIDIA H100 GPU. The finite-step after-push loss \(L_7\) is the most expensive term, while the algebraic losses together account for roughly half of the measured total. This overhead is empirically useful: the ablation in Appendix~\ref{app:ablations} shows that removing the algebraic losses substantially worsens symmetry recovery.

\begin{table}[H]
\centering
\small
\begin{tabular}{lccccccc}
\toprule
Loss & \(L_1\) & \(L_2\) & \(L_3\) & \(L_4\) & \(L_5\) & \(L_6\) & \(L_7\) \\
\midrule
Time (ms) & 0.43 & 0.76 & 0.38 & 0.52 & 0.47 & 0.47 & 2.15 \\
\bottomrule
\end{tabular}
\vspace{3pt}
\caption{\textbf{Per-loss wall-clock time.} Measured for \(m=3\), \(B=2048\), on an NVIDIA H100 GPU.}
\label{tab:loss_runtime}
\end{table}

\section{Complete Lie Point Symmetry Derivation for Example 1: $dx = \sigma_0 dW(t)$}
\label{app:sde_example_1}
We recall from Theorem \ref{thm:sde-determining} that a symmetry generator for an SDE must satisfy the determining equations
\[
  \xi_t
    + f\xi_x
    - \xi f_x
    - \partial_t(f\tau)
    + \tfrac12 \sigma^2\xi_{xx} = 0,
\]
\[\sigma\xi_x
   - \xi\sigma_x
   - \tau\sigma_t
   - \tfrac12 \sigma\tau_t = 0.
\]

With $f = 0, \sigma = \sigma_0$ the determining equations become $\xi_t + \frac12\sigma_0^2\xi_{xx}=0$ and $\xi_x=\frac12\tau_t$. Since $\tau$ is only a function of $t$, the second one implies that $\xi$ is linear in $x$, substituting which into the first one gives $\xi_t =0$. Thus, $\xi (t,x) = c_1x+c_2$. Substituting this in $\xi_x = \frac12 \tau_t$ and solving gives $\tau(t) = 2c_1t+c_3$. Thus, a general symmetry has the form
\[X_0=\tau(t)\partial_t+\xi(x)\partial_x
=(2c_1 t+c_3)\partial_t+(c_1 x+c_2)\partial_x.\]
We find that there are three independent constants of integration implying that the symmetry Lie algebra is three dimensional. Choosing one constant as non-zero while others being zero gives a simple basis for these generators producing:
\begin{align*}
    c_3 = 1: v_1 &= \partial_t\\
    c_2 = 1: v_2 &=\partial_x\\
    c_1 = 1: v_3 &= 2t\partial_t+x\partial_x
\end{align*}
\section{Complete Lie Point symmetry derivation for example 2: $dx = x dt + dW(t)$}
\label{app:sde_example_2}
In this case, with $f=x, \sigma=1$, the second determining equation gives $\xi_x =\frac12 \tau_t$. Differentiating w.r.t. $x$ and then integrating gives $\xi(t,x) = a(t) x+b(t)$ implying $\tau_t = 2a(t)$. We now plug these into the first equation resulting in 
\[a'(t)x + \big(b'(t)-b(t)\big) - x\tau_t = 0
\]
Since this is a linear expression in $x$ that's identically 0, the coefficients must be 0 separately, producing $a'=\tau_t$, $b'=b$. Combined with $\tau_t = 2a(t)$ this gives $a(t) = c_1e^{2t}, b(t) = c_2e^t$. Substituting these into our previous expressions for $\xi, \tau$ we get that a general symmetry has the form:
\[X = \big(c_1 e^{2t}+c_3\big)\partial_t+\big(c_1 e^{2t}x+c_2 e^{t}\big)\partial_x\]
Similarly as before, choosing values for the constants with one non-zero constant at a time gives the following three generators as a basis:
\begin{align*}
c_3 =1: v_1&= \partial_t\\
c_2 = 1: v_2 &=e^t\partial_x\\
c_1 = 1: v_3 &=e^{2t}(\partial_t+x\partial_x)\\
\end{align*}

\section{Complete Lie Point Symmetry Derivation for Example 3: $dx = y dt$, $dy = -k^2y dt + \sqrt{2k^2} dW(t)$}
\label{app:sde_example_3}
This is a two dimensional example with drift and diffusion given by:
    \[f = \begin{pmatrix}
        y \\
        -k^2 y
    \end{pmatrix},\;\; \sigma = \begin{pmatrix}
        0 & 0\\
        0 & \sqrt{2k^2}
    \end{pmatrix}\]
The general symmetry looks like
\[X_0=\tau(t)\partial_t+\xi^1(t,x,y)\partial_x+\xi^2(t,x,y)\partial_y\]
with the determining equations as
\[\partial_t\xi^i + f^j\partial_j\xi^i - \xi^j\partial_jf^i-\partial_t(f^i\tau)+\tfrac12(\sigma\sigma^T)^{jk}\partial^2_{jk}\xi^i = 0,\]  
\[(\sigma^j{}_k\partial_j)\xi^i - (\xi^j\partial_j)\sigma^i{}_k - \tau\partial_t\sigma^i{}_k
  -\tfrac12\sigma^i{}_k,\tau_t = 0.\]
  We first use the second equation to constrain $\xi^1, \xi^2$. Since $\sigma^2_2$ is the only non-zero component, the non-trivial constraints come from $k=2$.
  \begin{itemize}
      \item For ($i=1, k=2$): $(\sigma^j{}_2\partial_j\xi^1=0 \Rightarrow \partial_y\xi^1=0)$. Thus, $\xi^1 = a(x,t)$ (no $y$-dependence).
      \item For ($i=2, k=2$): $(\sigma^j{}_2\partial_j\xi^2 - \tfrac12\sigma^2{}_2\tau_t=0\Rightarrow \partial_y\xi^2=\tfrac12\tau_t)$.
  Hence $\xi^2 = \tfrac12\tau_ty + g(x,t).$
  \end{itemize}
We now use the first determining equation. Note that $\sigma\sigma^T=\mathrm{diag}(0,2k^2)$, so the diffusion term is $\tfrac12(\sigma\sigma^T)^{jk}\partial^2_{jk}\xi^i = k^2\partial^2_{yy}\xi^i$; but $\xi^1$ is independent of $y$, and $\xi^2$ is at most linear in $y$, so $\partial^2_{yy}\xi^i=0$ and the diffusion term drops out for the first equation for both $i=1,2$. The first determining equation then gives
\begin{itemize}
    \item For $i=1$, with $f^1=y$, $f^2=-k^2y$, one gets $a_t + ya_x - \xi^2 - y\tau_t=0$. Substitute $\xi^2=\tfrac12\tau_t y + g(x,t)$ to get $a_t - g + y\Big(a_x-\tfrac32\tau_t\Big)=0$. Thus
\[a_x=\tfrac32\tau_t,\ \  g=a_t.\]
\item For $i=2$ the first determining equation on substituting the drift and diffusion gives
\[\left(\tfrac12\tau_{tt}y + g_t\right)+ \left(y g_x - \tfrac12 k^2\tau_t y\right)\]
\[+ k^2\left(\tfrac12\tau_t y + g\right) + k^2 y\tau_t =0\]
which on simplifying is 
\[y\Big(\tfrac12\tau_{tt} + g_x + k^2\tau_t\Big) + \big(g_t + k^2 g\big)=0.\]
Since a linear expression in $y$ being identically 0 implies that the coefficients are each 0, we get ${g_t + k^2 g =0}$ and ${\tfrac12\tau_{tt} + g_x + k^2\tau_t=0}$. 
In the second equation, we use the fact that $g_x = a_{tx} = \frac32 \tau_{tt}$, giving us $2\tau_{tt}+k^2\tau_t = 0$. 

\end{itemize}

Integrating $a_x = \frac32 \tau_t$ with respect to $x$ gives $a(x,t)=\frac{3}{2}\tau_tx+h(t)$ for some $h$. Then $g=a_t=\frac32 \tau_{tt}x+h'(t)$. Substituting this in $g_t+k^2g=0$ gives 
\[\frac32 x(\tau_{ttt} + k^2 \tau_{tt}) + h''+k^2h=0.\]
The coefficient of $x$ must be 0 for this expression to be identically 0, thus $\tau_{ttt} + k^2 \tau_{tt} = 0$. Along with the previously found relation $2\tau_{tt}+k^2\tau_t=0$, this yields $\tau_{tt}=0$ thus implying $\tau_t=0$  i.e. $ \tau(t)=c_1$.
Using $\tau_t = c_1$ in $a_x = \frac32 \tau_t$ implies $a(x,t) = a(t)$ is independent of $x$. Thus, $g(x,t)=a_t$ is also independent of $x$. Solving $g_t+k^g = 0 $ yields $g(t) = c_2 r^{-k^2t}$. Substituting this and integrating $g=a_t$ gives $a(t) = c_3- \frac{c_2}{k^2}e^{-k^2t}$. Thus, a general symmetry looks like:
\[X = c_1\partial_t+\left(c_3- \frac{c_2}{k^2}e^{-k^2t}\right)\partial_x + c_2 e^{-k^2t}\partial_y.\]
Choosing one constant non-zero at a time while others as zero gives the following set as generators:
\begin{align*}
    c_1 = 1: v_1 &= \partial_t\\
    c_3 = 1: v_2 &=\partial_x\\
    c_2 = 1: v_3 &= e^{-k^2t}(k^{-2}\partial_x-\partial_y)
\end{align*}

\section{Complete Lie Point Symmetry Derivation for Example 4: $dx = (a_1/x) dt + dW_1;$ $ dy = a_2 dt + dW_2$}
\label{app:sde_example_4}
    \[f = \begin{pmatrix}
        a_1/x \\
        a_2
    \end{pmatrix},\;\; \sigma = \begin{pmatrix}
        1 & 0\\
        0 & 1
    \end{pmatrix}\]
Once again, the general symmetry looks like
\[X_0=\tau(t)\partial_t+\xi^1(t,x,y)\partial_x+\xi^2(t,x,y)\partial_y\]
with the determining equations as
\[\partial_t\xi^i + f^j\partial_j\xi^i - \xi^j\partial_jf^i-\partial_t(f^i\tau)+\tfrac12(\sigma\sigma^T)^{jk}\partial^2_{jk}\xi^i = 0,\]  
  \[(\sigma^j{}_k\partial_j)\xi^i - (\xi^j\partial_j)\sigma^i{}_k - \tau\partial_t\sigma^i{}_k
  -\tfrac12\sigma^i{}_k,\tau_t = 0.\]
$\sigma$ is constant so that $(\xi\cdot\nabla)\sigma=0=\partial_t\sigma$. Again, we use the second equation to get the form of $\xi$, which gives us $\partial_k\xi^i-\frac12 \delta^i_k\tau_t = 0$. For different $i,k$ we have:
\[i = 1, k = 1: \xi^1_x = \frac12 \tau_t\]
\[i = 1, k = 2: \xi^1_y = 0\]
\[i = 2, k = 1: \xi^2_x = 0\]
\[i = 2, k = 2: \xi^2_y = \frac12 \tau_t\]
Integrating these gives 
\[{\ \xi^1(t,x,y)=\frac12\tau_t(t)x+g_1(t)\ }\]
\[ \xi^2(t,x,y)=\frac12\tau_t(t)y+g_2(t)\]
Thus, $\xi$ are at most linear in $x,y$ and second spatial derivatives all vanish. The first determining equation for $i=1$ on plugging the expressions for $\xi^1, \xi^2$, along with the drift and diffusion functions gives the following. 
\begin{itemize}
    \item For $i=1$ this gives
    \[ \tfrac12\tau_{tt}x + g_1'(t) + \frac{a_1 g_1(t)}{x^2}=0.\ \]
    For this equality to hold, the coefficients of various powers of $x$ should separately be $0$. Thus we get $\tau_{tt} = 0 = g_1(t)$. Hence,
    \[\tau(t) = c_1t+c_2,\ \ \  \xi^1(x,y,t) = \frac{c_1}{2}x.\]
    \item For $i=2$ the determining equation gives
    \[\frac12 \tau_{tt}y+g_2'-\frac{a_2}{2}\tau_t = 0.\]
    Equating coefficients in $y$ gives $\tau_{tt}=0$ (consistent with earlier) and $g_2' = \frac{a_2}{2}\tau_t = \frac{a_2}{2}c_1$. On integrating the last equation we get $g_2(t)= \frac{a_2}{2}c_1t+c_3$.
\end{itemize}
The general symmetry thus looks like
\[X = (c_1t+c_2)\partial_t+\frac{c_1}{2}x\partial_x+\left(\frac{c_1}{2}(y+a_2t)+c_3\right)\partial_y.\]
Choosing one constant as non-zero at a time gives the following basis of generators:
\begin{align*}
    c_2 = 1: v_1 & = \partial_t \\
    c_3 = 1: v_2 & = \partial_y \\
    c_1 = 1: v_3 & =  2t\partial_t + x\partial_x + (y+a_2t)\partial_y
\end{align*}
\section{Complete Derivation for Fokker-Planck Symmetries for Example 1: $dx = \sigma_0dw(t)$}
\label{app:fp_example_1}
For the 1-d brownian motion $dx = \sigma_0 dw(t)$, the associated FP equation is $u_t = \frac{\sigma_0^2}{2}u_{xx}$. Following the notation from \eqref{eq:fp-ABC}
, this means $A = -\frac{\sigma_0^2}{2}, B = 0, C = 0$. The three FP determining equations from theorem \ref{thm:fp-determining} boil down to:
\[\xi_x=\frac12\tau_t,\ \  -\xi_t + 2A\beta_x - A\xi_{xx}=0,\ \ \beta_t + A\beta_{xx}=0.\]
The first equation easily solves to $\xi(t,x)=\frac12\tau_t(t)x + k(t)$ where $k(t)$ is any function of $t$. Substituting this result into the second equation gives 
\[\beta_x=\frac{1}{2A}\left(\frac12\tau_{tt}x+k'\right)
=\frac{\tau_{tt}}{4A}x+\frac{k'}{2A}\]
which upon integrating with respect to $x$ gives:
\[\beta(t,x)=\frac{\tau_{tt}}{8A}x^2+\frac{k'}{2A}x+m(t)\]
for some $m(t)$. Substituting this into the third equation gives
\[\beta_t + A\beta_{xx} =
\left(\frac{\tau_{ttt}}{8A}x^2+\frac{k''}{2A}x+m'(t)\right)
+
A\cdot\frac{\tau_{tt}}{4A}
=0\]
This is a quadratic in $x$ which is identically 0. Thus, each coefficient in the quadratic must vanish separately, giving us
\[\tau_{ttt}=0,\ \  k''=0,\ \  m'(t)+\frac14\tau_{tt}=0.\]
Integrating these gives:
\[{\tau(t)=a t^2 + b t + c},\quad{k(t)=d t + e},\quad{m(t)= -\frac{a}{2}t + f}.\]
Substituting these into expressions for $\beta$ and $\xi$ from earlier, we get:
\[\tau=a t^2 + b t + c\]
\[\xi=\tfrac12\tau_tx + dt+e\]
\[\beta=\frac{\tau_{tt}}{8A}x^2+\frac{d}{2A}x-\frac{a}{2}t+f.
\]
Thus, there are six independent parameters/constants $a,b,c,d,e,f$ resulting in a six dimensional vector space for the set of symmetries. We may choose any 6 linearly independent vectors for $(a,b,c,d,e,f)$ to get a vector-space basis for the Lie algebra. We choose the 6 vectors where each of the parameters except for one parameter is zero. Using these choices in the general expression for a symmetry, $V = \tau\partial_t+\xi\partial_x+\beta u\partial_u$, gives us the following six familiar set of symmetry generators:
\begin{align*}
    c = 1: v_1 &= \partial_t\\
    e = 1: v_2 &= \partial_x\\
    f = 1: v_3 &= u\partial_u\\
    d = 1: v_4 &= t\partial_x-\frac{x}{\sigma_0^2}u\partial_u\\
    b = 1: v_5 &= t\partial_t+\frac{x}{2}\partial_x\\
    a = 1:  v_6 &= t^2\partial_t+tx\partial_x-\frac12\Bigl(t+\frac{x^2}{\sigma_0^2}\Bigr)u\partial_u
\end{align*}

\begin{table*}[t]
\centering
\footnotesize
\setlength{\tabcolsep}{6pt}
\begin{tabular}{l c ccc c p{5cm}}
\toprule
\textbf{Example} & \textbf{State} & $(T,\Delta t)$ & $N$ & $n_{\mathrm{traj}}$ (Stage 1) & $B=n_{\mathrm{traj}}N$ & \textbf{Stage-1 surrogate} \\
\midrule
1 (Brownian) &
1D &
$(5.0,\,0.01)$ &
500 &
2048 &
1{,}024{,}000 &
MLP $[2,64,64,2]$, \texttt{tanh}, $\hat\sigma=\mathrm{softplus}+10^{-3}$ \\
2 ($dx=x\,dt+\sigma dW$) &
1D &
$(5.0,\,0.01)$ &
500 &
512 &
256{,}000 &
MLP $[2,64,64,2]$, \texttt{tanh}, i.i.d.\ increment dataset over $|x|\le 2$ \\
3 (2D linear) &
2D &
$(2.0,\,0.01)$ &
200 &
2048 &
409{,}600 &
MLP $[3,64,64,4]$, \texttt{tanh}, diagonal $\hat\sigma\in\mathbb R^2$ \\
4 (2D, PSD diffusion) &
2D &
$(2.0,\,0.01)$ &
200 &
2048 &
409{,}600 &
Drift MLP + Cholesky MLP, \texttt{swish}, $\Sigma=LL^\top$ \\
\bottomrule
\end{tabular}
\caption{Stage-1 (surrogate) dataset sizes and architectures. Here $N=\lfloor T/\Delta t\rfloor$ and $B$ counts increment samples (one per time step per trajectory).}
\label{tab:stage1_summary}
\end{table*}

\section{Ablation Study for Generator-Training Losses}
\label{app:ablations}

We ablate the generator-training losses \(L_1,\ldots,L_7\) by removing one loss at a time while keeping all other experimental settings fixed. The results are shown in table \ref{tab:loss_ablation}. In both examples, removing any single loss term worsens recovery, as measured by the maximum principal angle between the learned and analytic symmetry-algebra spans. This supports the practical role of each loss in addition to its theoretical motivation. 

The degradation is consistent with the intended function of the losses. Removing \(L_1\), which enforces Lie-bracket closure, or \(L_6\), which enforces the SDE determining equations, causes the largest deterioration. Removing \(L_2\) and \(L_3\), corresponding to algebraic identities, has a milder effect in Example~2 but a stronger effect in Example~1. Removing \(L_7\), the finite-step after-push loss, also degrades performance, indicating that the local infinitesimal constraints in \(L_6\) alone are not sufficient for robust recovery in practice.

\begin{table*}[t]
\centering
\small
\begin{tabular}{lcccccccc}
\toprule
& Full & w/o \(L_1\) & w/o \(L_2\) & w/o \(L_3\) & w/o \(L_4\) & w/o \(L_5\) & w/o \(L_6\) & w/o \(L_7\) \\
\midrule
Example 1 & \(6.26^\circ\) & \(55.56^\circ\) & \(25.89^\circ\) & \(35.05^\circ\) & \(36.60^\circ\) & \(39.77^\circ\) & \(86.78^\circ\) & \(44.54^\circ\) \\
Example 2 & \(19.37^\circ\) & \(72.37^\circ\) & \(25.81^\circ\) & \(26.56^\circ\) & \(37.86^\circ\) & \(42.62^\circ\) & \(86.39^\circ\) & \(57.00^\circ\) \\
\bottomrule
\end{tabular}
\vspace{3pt}
\caption{\textbf{Ablation of generator-training losses.} Entries are maximum principal angles between learned and analytic symmetry-algebra spans; smaller is better. ``Full'' uses all losses \(L_1,\ldots,L_7\), while each ablated column removes exactly one loss.}
\label{tab:loss_ablation}
\end{table*}

\section{Implementation Details for Learning SDE Symmetries}
\label{sec:impl_sde_sym}
This section documents the exact experimental/implementation choices used in the accompanying notebooks for \emph{SDE symmetry learning} (all Fokker--Planck symmetry components are intentionally omitted). Across all examples, the pipeline has two stages:

\begin{enumerate}
\item \textbf{Neural SDE surrogate (Stage 1).} Fit a neural surrogate for the drift and diffusion from increment data using an increment-likelihood objective.
\item \textbf{Generator learning (Stage 2).} Fit neural symmetry generators $(\tau,\xi)$ (and an auxiliary $\beta$ head for code-compatibility that is \emph{unused} for the SDE-only write-up) by minimizing a weighted sum of algebraic consistency terms (closure/Jacobi/etc.) and the Ito determining equations / pushforward constraints computed using the learned surrogate coefficients.
\end{enumerate}

All notebooks run in \texttt{jax} with \texttt{jax\_enable\_x64=True} and use float64 throughout for stability.

\subsubsection{Stage 1: Neural SDE surrogate training}
\label{sec:impl_stage1}

\paragraph{Increment dataset construction.}
Let $t_n = n\Delta t$ with $\Delta t = \texttt{dt}$ and $n=0,\dots,N-1$, where $N=\lfloor T/\Delta t\rfloor$. The Stage-1 dataset consists of pairs $\bigl(z_n, \Delta X_n\bigr)$ where $\Delta X_n = X_{n+1} - X_n$ and 
\[z_n =
\begin{cases}
(t_n, x_n) & \text{(1D examples)}\\
(t_n, x_n, y_n) & \text{(2D time-dependent example)}\\
(x_n, y_n) & \text{(2D time-homogeneous example)}
\end{cases}
\]
For simulated path data (Examples 1, 3, 4), the notebook forms all increments from the full trajectory bank, yielding $B = n_{\mathrm{traj}}\cdot N$ training samples. Example 2 is special: Stage 1 is built as an \emph{i.i.d. increment dataset} by sampling $(t,x)$ values and drawing $\Delta x$ directly from the Euler increment distribution for the target SDE; a pseudo-``trajectory'' tensor is only created for downstream API compatibility.

\paragraph{Input normalization.}
In all Stage-1 surrogates, inputs are z-scored:
\[
\tilde z = (z - \mu_z)/(\sigma_z + 10^{-8}),
\]
where $(\mu_z,\sigma_z)$ are empirical mean/std over the full Stage-1 input cloud.

\paragraph{Surrogate parameterizations and likelihood.}
\begin{itemize}
\item \textbf{Examples 1 \& 2 (1D).} A single MLP maps $\tilde z=(\tilde t,\tilde x)$ to two outputs $(\hat f, g)$, with diffusion enforced positive via
\[
\hat\sigma(\tilde z) = \mathrm{softplus}(g(\tilde z)) + \sigma_{\min}.
\]
The increment model is $\Delta x \sim \mathcal N(\hat f\,\Delta t,\; \hat\sigma^2\,\Delta t)$ and the loss is the per-sample Gaussian negative log-likelihood (dropping constants) averaged over minibatches.
\item \textbf{Example 3 (2D, diagonal diffusion).} A single MLP maps $\tilde z=(\tilde t,\tilde x,\tilde y)$ to $(\hat f_x,\hat f_y,g_x,g_y)$ and sets
$\hat\sigma_i = \mathrm{softplus}(g_i)+\sigma_{\min}$. The likelihood is a factorized diagonal Gaussian for $\Delta(x,y)$.
\item \textbf{Example 4 (2D, PSD diffusion via Cholesky).} Two MLPs are trained: a drift network $\hat f(x,y)$ and a diffusion-Cholesky network returning $(\ell_{11}^{\rm raw},\ell_{21},\ell_{22}^{\rm raw})$, with
\begin{align*}
\ell_{11}&=\mathrm{softplus}(\ell_{11}^{\rm raw})+\sigma_{\rm floor}\\
\ell_{22}&=\mathrm{softplus}(\ell_{22}^{\rm raw})+\sigma_{\rm floor},\\
L&=\begin{pmatrix}\ell_{11}&0\\ \ell_{21}&\ell_{22}\end{pmatrix}; \Sigma = LL^\top.
\end{align*}
The increment model is multivariate Gaussian $\Delta X \sim \mathcal N(\hat f\,\Delta t,\; \Sigma\,\Delta t)$.
\end{itemize}

\paragraph{Regularization and optimizers.}
\begin{itemize}
\item \textbf{Examples 1/2/3:} The Stage-1 objective adds explicit L2 weight decay as \texttt{weight\_decay * l2\_tree(params)} (even though the optimizer is plain Adam). Optimization uses \texttt{optax.adam(lr)} with minibatches sampled \emph{without replacement} from the full increment dataset each step.
\item \textbf{Example 4:} Drift and diffusion-Cholesky networks are optimized with separate \texttt{adamw} optimizers (weight\_decay$=0$) and global-norm gradient clipping. Training alternates multiple drift steps and fewer diffusion steps per outer iteration. Additional diffusion regularizers are used: (i) a JVP-based smoothness penalty on $L(x,y)$, (ii) a minibatch variance stabilizer on $\mathrm{diag}(L)$, and (iii) a whitening/calibration penalty encouraging $z=L_{\Delta t}^{-1}(\Delta X-\hat f\Delta t)\sim\mathcal N(0,I)$.
\end{itemize}

\subsubsection{Stage 2: Symmetry generator parameterization and training}
\label{sec:impl_stage2}

\paragraph{Training point cloud for Stage 2.}
All generator losses are evaluated on minibatches of a point cloud in the extended space $(t,x)$ or $(t,x,y)$:
\[
\mathcal T = \{(t_n,x_n)\}\ \text{or}\ \{(t_n,x_n,y_n)\}.
\]
The source of this cloud differs by example:
\begin{itemize}
\item \textbf{Examples 1 \& 2:} simulate the learned surrogate by Euler--Maruyama using \texttt{CFGGen} with $n_{\mathrm{traj}}=256$, and flatten all $(N{+}1)$ states, giving $B_{\rm gen}=n_{\mathrm{traj}}(N{+}1)$ points.
\item \textbf{Example 3:} simulate the learned surrogate with $n_{\mathrm{traj}}=256$ but additionally (i) enforce domain control in $x\in[x_{\min},x_{\max}]$ via clipping at evaluation time and reflection after stepping, (ii) optionally \emph{oversample} candidate trajectories (\texttt{traj\_balance\_oversample=12}) and greedily select a subset whose histogram occupancy over $x$-bins is closer to uniform, and (iii) build $\mathcal T$ as a fixed-size \emph{mixture} of trajectory points and uniform box samples with fraction \texttt{tx\_mix\_uniform\_frac=0.7} (total point count kept constant).
\item \textbf{Example 4:} constructs \texttt{TX\_gen} directly from the available $(t,XY)$ tensor by flattening all observed path states, yielding $B_{\rm gen}=n_{\mathrm{traj}}(N{+}1)$ points with the full data trajectory bank. Its minibatch sampler mixes half empirical points and half uniform box samples in bounds inferred from \texttt{TX\_gen}, with an $x$-floor to avoid instability near $x=0$.
\end{itemize}

\begin{table*}[t]
\centering
\footnotesize
\setlength{\tabcolsep}{6pt}
\begin{tabular}{l c cp{2.5cm} c c p{3.5cm}}
\toprule
\textbf{Example} & \textbf{Steps} & \textbf{Batch} & \textbf{Optimizer} & \textbf{LR(s)} & \textbf{Weight decay} & \textbf{Extra regularizers} \\
\midrule
1 & 10{,}000 & 4096 & Adam & $3\!\times\!10^{-3}$ & $10^{-6}$ (explicit L2) & --- \\
2 & 2{,}000 & 4096 & Adam & $3\!\times\!10^{-3}$ & $10^{-6}$ (explicit L2) & --- \\
3 & 10{,}000 & 4096 & Adam & $3\!\times\!10^{-3}$ & $10^{-6}$ (explicit L2) & --- \\
4 & 20{,}000 outer &
8192 &
AdamW + clip (separate for drift/diff) &
$2\!\times\!10^{-3}$ (drift), $10^{-3}$ (diff) &
0 &
JVP-smooth ($10^{-3}$), var ($5\!\times\!10^{-4}$), whiten ($5\!\times\!10^{-3}$) \\
\bottomrule
\end{tabular}
\caption{Stage-1 (surrogate) optimization settings. Example 4 alternates \texttt{drift\_steps\_per\_iter=3} and \texttt{diff\_steps\_per\_iter=1} inside each outer step.}
\label{tab:stage1_optim}
\end{table*}
\paragraph{Generator neural architectures.}
Each generator $i=1,\dots,m$ is represented by neural fields:
\[
\tau_i(t)\in\mathbb R,\ \ 
\xi_i(t,x)\in\mathbb R \ \text{(1D)} \ \ \text{or}\ \ 
\xi_i(t,x,y)\in\mathbb R^2\ \text{(2D)}.
\]
In code, each generator uses separate MLPs for $\tau$ and $\xi$. A third head $\beta$ is also instantiated for compatibility with shared utilities but is \emph{not used} in the SDE-only description here.

\begin{itemize}
\item \textbf{Examples 1 \& 2:} \texttt{tanh} MLPs with fixed widths: $\tau$: $[1,32,32,1]$; $\xi$: $[2,64,64,1]$ in 1D (Ex.\ 1) and $[3,64,64,2]$ in 2D (Ex.\ 4). $m=3$ (Ex.\ 1) and $m=3$ (Ex.\ 4).
\item \textbf{Example 3:} deeper \texttt{swish} networks with \texttt{depth\_tau=3}, \texttt{depth\_xi=3}, width 64, and $m=3$ generators.
\item \textbf{Example 4:} \texttt{tanh} MLPs with $\tau$ as $[1,32,32,1]$ and $\xi$ as $[3,64,64,2]$, with $m=3$.
\end{itemize}

\paragraph{Stage-2 loss terms and weighting.}
The generator objective is a weighted sum of seven losses $L_1,\dots,L_7$ computed on minibatches from $\mathcal T$. The code uses:
\begin{itemize}
\item \textbf{Structural/algebraic terms:} closure ($L_1$), Jacobi ($L_2$), skew-symmetry ($L_3$), bilinearity ($L_4$), and an independence/scaling term ($L_5$) with \texttt{s5\_mode="sigma"} and \texttt{s5\_tau=0.8} (Examples 1--4).
\item \textbf{SDE-specific constraints:} the Ito determining equations ($L_6$) computed using surrogate coefficient functions $\mu(\cdot)$ and $\sigma(\cdot)$ derived from the learned Stage-1 network(s), and a pushforward consistency term ($L_7$) comparing transformed coefficients under the learned generator-induced flow, using a small finite $\epsilon$ step (e.g.\ \texttt{s7\_eps=1e-2}) and typically a single step (e.g.\ \texttt{s7\_steps=1} in the fixed-weight notebooks).
\end{itemize}

Weighting: \begin{itemize} \item \textbf{Examples 1--3 (fixed weights):} \texttt{LossWeights} sets \[ (w_{1:7})=(1.0,\,0.1,\,0.1,\,0.1,\,\star,\,1.0,\,0.1), \] with \(\star=0.1\) in Examples~1 and~3, and \(\star=0.5\) in Example~2. Generator L2 decay is applied with \texttt{weight\_decay=1e-6}. \item \textbf{Example 4 (scheduled weights):} \(w_6=10\), \(w_7=2\), and \(w_5=2\) are kept large throughout, while \(w_1\)--\(w_4\) are ramped on only in the final 40\% of training after 60\% progress via a piecewise-linear schedule. \end{itemize}

\paragraph{Stage-2 optimization.}
\begin{itemize}
\item \textbf{Examples 1--3:} \texttt{optax.clip\_by\_global\_norm (1.0)} + \texttt{optax.adam}. The nominal \texttt{GenTrainConfig.lr} is overridden to $10^{-4}$ in Examples 1 and 3; Example 4 uses $10^{-4}$ directly. Minibatches are sampled uniformly without replacement from the prepared \texttt{TX\_gen} array after filtering non-finite rows.
\item \textbf{Example 4:} \texttt{optax.clip\_by\_global\_norm} + \texttt{optax.adamw} with a learning-rate schedule: 5\% linear warmup from 0 to \texttt{lr}, followed by cosine decay down to \texttt{alpha=0.1} of \texttt{lr}. The sampler mixes half empirical \texttt{TX\_gen} points and half uniform box points each step.
\end{itemize}

\begin{table*}[t]
\centering
\footnotesize
\setlength{\tabcolsep}{6pt}
\begin{tabular}{l c p{2.5cm}c p{2cm} c p{3.5cm}c}
\toprule
\textbf{Example} & \textbf{$m$} & \textbf{Stage-2 cloud source} & $n_{\mathrm{traj}}$ (gen) & $B_{\rm gen}$ & \textbf{$\tau$ net} & \textbf{$\xi$ net} & \textbf{Activation} \\
\midrule
1 & 3 & surrogate sim, flatten $(t,x)$ & 256 & $256(500{+}1)=128{,}256$ & $[1,32,32,1]$ & $[2,64,64,1]$ & \texttt{tanh} \\
2 & 3 & surrogate sim + reflection + mixture cloud & 256 & kept at $128{,}256$ & depth 3, width 64 & depth 3, width 64 & \texttt{swish} \\
3 & 3 & surrogate sim, flatten $(t,x,y)$ & 256 & $256(200{+}1)=51{,}456$ & $[1,32,32,1]$ & $[3,64,64,2]$ & \texttt{tanh} \\
4 & 3 & data flatten $(t,x,y)$ & 2048 & $2048(200{+}1)=411{,}648$ & $[1,32,32,1]$ & $[3,64,64,2]$ & \texttt{tanh} \\
\bottomrule
\end{tabular}
\caption{Stage-2 (generator) architectures and the point clouds used for training the SDE symmetry generators. $B_{\rm gen}$ counts $(t,\text{state})$ points used for generator losses.}
\label{tab:stage2_arch_cloud}
\end{table*}

\begin{table*}[t]
\centering
\footnotesize
\setlength{\tabcolsep}{2pt}        
\renewcommand{\arraystretch}{1.15} 
\footnotesize                       

\begin{tabular}{l c ccc c p{4cm}}
\toprule
\textbf{Example} & \textbf{Steps} & \textbf{Batch} & \textbf{LR} & \textbf{Optimizer} & \textbf{Weights $(w_1,\dots,w_7)$} & \textbf{Sampling notes} \\
\midrule
1 & 3000 & 2048 & $10^{-4}$ & clip(1.0)+Adam &
(1,0.1,0.1,0.1,0.1,1,0.1) &
uniform minibatches from \texttt{TX\_gen} \\
2 & 3000 & 1024 & $10^{-4}$ & clip(1.0)+Adam &
(1,0.1,0.1,0.1,0.5,1,0.1) &
$x$-reflection in sim; 70\% uniform box points; optional traj balancing \\
3 & 3000 & 1024 & $10^{-4}$ & clip(1.0)+Adam &
(1,0.1,0.1,0.1,0.1,1,0.1) &
uniform minibatches from \texttt{TX\_gen}; \texttt{print\_every=10} \\
4 & 6000 & 256 & $2\!\times\!10^{-4}$ (scheduled) & clip(1.0)+AdamW &
scheduled: $w_6{=}10,w_7{=}2,w_5{=}2$, ramp $w_{1:4}$ &
half empirical \texttt{TX\_gen} + half uniform box; warmup+cosine LR \\
\bottomrule
\end{tabular}
\caption{Stage-2 (generator) optimization and weighting differences across examples.}
\label{tab:stage2_optim}
\end{table*}


\section{Sensitivity of the Maximum Principal Angle to Weight Choices}
\label{app:weight_sensitivity}
Table~\ref{tab:max_principle_angle_w6_w7} reports the maximum principal angle as the weights $(w_6,w_7)$ are varied. Overall, the response is markedly non-monotone in both weights, indicating a strong interaction between the two penalty terms. Across the sweep, the maximum principal angle ranges from $14.71^\circ$ (at $(w_6,w_7)=(1.0,0.7)$) to $55.31^\circ$ (at $(w_6,w_7)=(0.2,0.9)$). For fixed $w_6$, changing $w_7$ can induce substantial variation: for example, at $w_6=0.2$ the angle increases from $21.35^\circ$ at $w_7=0.1$ to $42.28^\circ$ at $w_7=0.5$ and further to $55.31^\circ$ at $w_7=0.9$, whereas at $w_6=0.6$ the angle peaks at $37.00^\circ$ for $w_7=0.3$ and then decreases to $21.13^\circ$ for $w_7=0.9$. Similarly, for fixed $w_7$, decreasing $w_6$ does not uniformly increase or decrease the angle (e.g., at $w_7=0.1$ the values span $22.64^\circ$ at $w_6=1.0$ up to $39.21^\circ$ at $w_6=0.4$ and down to $21.35^\circ$ at $w_6=0.2$). These trends suggest that the maximum principal angle is sensitive to the relative balance between the two weights rather than to either weight in isolation, motivating the use of a small grid search over $(w_6,w_7)$ when selecting a stable operating point.

\begin{table*}[t]
\centering
\caption{Maximum principal angle as a function of weights $w_6$ and $w_7$.}
\label{tab:max_principle_angle_w6_w7}
\scriptsize
\begin{tabular}{ccc}
\toprule
$w_6$ & $w_7$ & Max. principal angle (deg) \\
\midrule
1.0 & 0.1 & 22.64 \\
1.0 & 0.3 & 26.78 \\
1.0 & 0.5 & 23.38 \\
1.0 & 0.7 & 14.71 \\
1.0 & 0.9 & 19.90 \\
\midrule
0.8 & 0.1 & 35.92 \\
0.8 & 0.3 & 24.36 \\
0.8 & 0.5 & 20.86 \\
0.8 & 0.7 & 18.81 \\
0.8 & 0.9 & 31.71 \\
\midrule
0.6 & 0.1 & 22.40 \\
0.6 & 0.3 & 37.00 \\
0.6 & 0.5 & 34.19 \\
0.6 & 0.7 & 29.91 \\
0.6 & 0.9 & 21.13 \\
\midrule
0.4 & 0.1 & 39.21 \\
0.4 & 0.3 & 27.91 \\
0.4 & 0.5 & 25.48 \\
0.4 & 0.7 & 37.23 \\
0.4 & 0.9 & 23.49 \\
\midrule
0.2 & 0.1 & 21.35 \\
0.2 & 0.3 & 25.10 \\
0.2 & 0.5 & 42.28 \\
0.2 & 0.7 & 20.48 \\
0.2 & 0.9 & 55.31 \\
\bottomrule
\end{tabular}
\end{table*}

\end{document}